%
%
%
%
%
%

\documentclass{LMCS}

\usepackage{graphicx,amssymb,stmaryrd,proof,pst-tree,cite,wasysym,lineno}
\usepackage[all,2cell,dvips,ps]{xy}
\CompileMatrices 
\xyoption{v2}
\xyoption{curve}
\xyoption{2cell}
\SelectTips{cm}{}  
\UseAllTwocells
\SilentMatrices

\newcommand{\Sets}{\mathbf{Sets}}

\newcommand{\co}{\mathrel{\circ}}
\newcommand{\congrightarrow}{\mathrel{\stackrel{
           \raisebox{.5ex}{$\scriptstyle\cong\,$}}{
           \raisebox{0ex}[0ex][0ex]{$\rightarrow$}}}}
\newcommand{\iso}{\congrightarrow}
\newcommand{\place}{\underline{\phantom{n}}\,} 
\newcommand{\pow}{\mathcal{P}}
\newcommand{\dist}{\mathcal{D}}
\newcommand{\lift}{\mathcal{L}}

\newcommand{\Cppo}{\mathbf{Cppo}}
\newdir{ >}{{}*!/-8pt/@{>}}  
\newcommand{\mono}{\rightarrowtail}
\newcommand{\epi}{\twoheadrightarrow}
\newcommand{\op}{\mathop{\mathrm{op}}\nolimits}
\newcommand{\weg}[1]{}

\newcommand{\dst}{\mathsf{dst}}
\newcommand{\shifted}[3]{\save[]!<#1,#2>*{#3}\restore}
\newcommand{\Kleisli}[1]{\mathcal{K}{\kern-.2ex}\ell(#1)}
\newcommand{\trace}[1]{\mathsf{tr}_{#1}}
\newcommand{\inftytrace}[1]{\mathsf{tr}^{\infty}_{#1}}
\newcommand{\toTerm}{\,\mathop{\text{\rm !}}\,}
\newcommand{\fromInit}{\,\mathop{\text{\rm \textexclamdown}}\,}
\newcommand{\C}{\mathbb{C}}
\newcommand{\D}{\mathbb{D}}
\newcommand{\A}{\mathbb{A}}
\newcommand{\V}{\mathbb{V}}
\newcommand{\Coalg}[1]{\mathbf{Coalg}(#1)}
\newcommand{\Alg}[1]{\mathbf{Alg}(#1)}
\newcommand{\sem}[1]{\llbracket #1 \rrbracket} 
\newcommand{\theory}{\mathsf{th}}
\newcommand{\id}{\mathrm{id}}
\newcommand{\TestEq}{\mathsf{TestEq}}
\newcommand{\FCSEq}{\mathsf{FCSEq}}
\newcommand{\beh}{\mathsf{beh}}
\newcommand{\tuple}[1]{\langle #1 \rangle}
\newcommand{\oF}{\overline{F}}
\newcommand{\Op}{\mathit{Op}}
\newcommand{\Rel}{\mathbf{Rel}}
\newcommand{\Relf}{{\mathop{\mathrm{Rel}}\nolimits}_{F}}

\newcommand{\Fix}{\mathop\mathrm{Fix}}
\newcommand{\KT}{\Kleisli{T}}


\newdir{pb}{:(1,-1)@^{|-}}  
  \def\pb#1{\save[]+<20 pt,0 pt>:a(#1)\ar@{pb{}}[]\restore}
\newif\ifignore 
\ignorefalse
\newcommand{\auxproof}[1]{
\ifignore\mbox{}\newline
\textbf{PROOF:} \dotfill\newline
{\it #1}\mbox{}\newline
\textbf{ENDPROOF}\dotfill
\fi}


%
%

\theoremstyle{plain}

\usepackage{eclbkbox}	
\usepackage{enumerate,hyperref}

\def\doi{3 (4:11) 2007}
\lmcsheading%
{\doi}
{1--36}
{}
{}
{Jul.~21, 2007}
{Nov.~19, 2007}
{}   

\begin{document}

\title[Generic Trace Semantics via Coinduction]{Generic Trace
Semantics via Coinduction\rsuper*}

\author[I.~Hasuo]{Ichiro Hasuo\rsuper a}	
\address{{\lsuper{a}}Institute for Computing and Information Sciences, Radboud University Nijmegen, the Netherlands 
and Research Institute for Mathematical Sciences, Kyoto University, Japan
}	
\urladdr{http://www.cs.ru.nl/\~{}ichiro}  
\thanks{{\lsuper a}Supported by PRESTO research promotion program, 
  Japan Science and Technology Agency.}	

\author[B.~Jacobs]{Bart Jacobs\rsuper b}	
\address{{\lsuper b}Institute for Computing and Information Sciences, Radboud University Nijmegen, the Netherlands}
\urladdr{http://www.cs.ru.nl/\~{}bart}  
\thanks{{\lsuper b}Also part-time at
   Technical University Eindhoven, the Netherlands.}	

\author[A.~Sokolova]{Ana Sokolova\rsuper c}	
\address{{\lsuper c}Department of Computer Sciences, University of
Salzburg, Austria}	
\email{anas@cs.uni-salzburg.at}  
\thanks{{\lsuper c}Supported by the Austrian Science Fund (FWF) project P18913-N15.}	



\keywords{coalgebra, category theory, trace semantics, monad, Kleisli category, process
semantics, non-determinism, probability}
\subjclass{F.3.1, F.3.2, G.3}
\titlecomment{{\lsuper*}Earlier versions~\cite{HasuoJ05b,HasuoJS06a} of this paper
have been presented at
the 1st International Conference on Algebra and Coalgebra in Computer Science
                  (CALCO 2005), Swansea, UK, September 2005, and at
the 8th International Workshop on Coalgebraic Methods in Computer
 Science  (CMCS 2006),
Vienna, Austria, March 2006.}


\begin{abstract}
  \noindent 
  Trace semantics has been defined for various kinds of state-based
systems, notably with different forms of branching such as
non-determinism vs.\ probability. In this paper we claim to identify one
underlying mathematical structure behind these ``trace semantics,''
namely coinduction in a Kleisli category. This claim is based on our
technical result that, under a suitably order-enriched setting, a final
coalgebra in a Kleisli category is given by an initial algebra in the
category $\Sets$. Formerly the theory of coalgebras has been employed
mostly in $\Sets$ where coinduction yields a finer process semantics of
bisimilarity.  Therefore this paper extends the application field of
coalgebras, providing a new instance of the principle ``process
semantics via coinduction.''

\end{abstract}

\maketitle

\section{Introduction}\label{section:introduction}




Trace semantics is a commonly used semantic relation for reasoning about
 state-based systems. Trace semantics for labeled transition
systems is found on the coarsest edge of the linear time-branching time
spectrum~\cite{vanGlabbeek01}. Moreover, trace semantics is defined for a variety of
systems, among which are probabilistic systems~\cite{Segala95}.  

In this paper we claim that  these various forms of ``trace semantics'' are
instances of a general construction, namely coinduction in a Kleisli
category. 
Our point of view here is categorical, coalgebraic in
particular. Hence this paper demonstrates the abstraction power of
categorical/coalgebraic methods in computer science, uncovering basic
mathematical structures underlying  various concrete examples.

\subsection{``Trace semantics'' in various contexts}
\label{subsection:introVariousTraceSemantics}
First we motivate our contribution through examples of various forms of
``trace semantics.''
Think of the following three
state-based, branching systems.
\begin{equation}\label{diagram:firstExampleOfSystems}
    \vcenter{\xymatrix@R+0em@C+1em{
    *+[o][F-]{x}
             \ar[r]^{a}
   &
    *+[o][F-]{y}
             \ar[d]
             \ar@(ru,rd)[]^{b}
   \\
   &
    {\checkmark}
   }}
 \quad
    \vcenter{\xymatrix@R+0em@C+3em{
    *+[o][F-]{x'}
             \ar[r]^{a[\frac{1}{3}]}
             \ar[d]_{a[\frac{1}{3}]}
             \ar[rd]_{\frac{1}{3}}
   &
    *+[o][F-]{y'}
             \ar[d]^{\frac{1}{2}}
             \ar@(ru,rd)[]^{a[\frac{1}{2}]}
   \\
    *+[o][F-]{z'}
             \ar@(lu,ld)[]_{a[1]}
   &
    {\checkmark}
   }}
 \quad
\begin{minipage}{25ex}
{\small
 A context-free grammar\\ (for Peano Arithmetic)\\
 \begin{tabular}{|ll|} 
 \hline
  Terminal symbols: &  $\mathsf{0, s}$
 \\
  Non-terminal symbol:  & $\mathbf{T}$
 \\
  Generation rules: &
 \\
 \tabcolsep = 1em
  \begin{tabular}{l}
   ${\mathbf{T}\to\mathsf{0}}$
   \\
   ${\mathbf{T}\to\mathsf{s}\mathbf{T}}$
  \end{tabular}
 &
 \\
 \hline
 \end{tabular}
}
\end{minipage}
\end{equation}
\begin{enumerate}[$\bullet$]
 \item 
 The first one is a non-deterministic system with a special state $\checkmark$ denoting
 successful termination. To its state $x$ we can assign its \emph{trace
 set}:
 \begin{equation}\label{equation:introExampleTraceSet}
 \trace{}(x) = \{a, ab, abb, \dotsc\} = ab^{*}\enspace,
 \end{equation}
 that is, the \emph{set} of the possible linear-time behavior (namely
 words) that can arise through an execution of the system.\footnote{The
       infinite trace $ab^{\omega}$ is out of our scope here: we will elaborate this point later in Section~\ref{subsection:infiniteTraces}.} In this case
 the trace set $\trace{}(x)$ is also called the \emph{accepted
       language};
 formally it is defined (co)recursively by the following equations.
 For an arbitrary state $x$,
  \begin{equation}\label{equation:conventionalDefOfTraceSetForLTS}
\begin{array}{rcll}
    \tuple{} \in \trace{}(x) 
  & \;\Longleftrightarrow\;
  & x\to\checkmark 
  \\
   a\cdot \sigma \in \trace{}(x)
  & \;\Longleftrightarrow\;
  & \exists y.\; (\, x\stackrel{a}{\to} y \;\land\; \sigma\in \trace{}(y)\,)  
\end{array}  
\end{equation}
 Here $\tuple{}$ denotes the empty word; $\sigma=a_{1}a_{2}\dotsc a_{n}$ is a word.
 \item 
 The second system has a different type of branching, namely
       probabilistic branching. Here  $x'\stackrel{a[1/3]}{\longrightarrow} y'$
       denotes: at the state $x'$, a transition  to $y'$ outputting $a$
       occurs with probability $1/3$.
 Now, to the state $x'$, we can assign 
       its \emph{trace distribution}:
        \begin{equation}\label{equation:introExampleTraceDistribution}
 \trace{}(x) = 
      \left[	 \begin{array}{l}
	   \tuple{}\mapsto\frac{1}{3},
	    \quad
	    a
	    \mapsto
	    \frac{1}{3}\cdot \frac{1}{2},
	    \quad
	    a^{2}
	    \mapsto
	    \frac{1}{3}\cdot \frac{1}{2}\cdot \frac{1}{2},
	    \quad
	    \cdots
	  \\
            a^{n} 
	     \mapsto
	     \frac{1}{3}\cdot \left(\frac{1}{2}\right)^{n},\quad\cdots
	 \end{array}
\right]\enspace,
 \end{equation}
       that is, the probability distribution over the set of linear-time
       behavior.\footnote{Here again, we do not consider the infinite
       trace $a^{\omega}\mapsto 1/3$.} Its formal (corecursive) definition is as
       follows.
\begin{equation}\label{equation:conventionalDefOfTraceDistributionForProbLTS}
 \begin{array}{rcl}
\trace{}(x)(\langle\rangle)
& = &
{\mathsf{Pr}(x\to\checkmark)}\enspace,
\\
\trace{}(x)(a\cdot\sigma)
& = &
{\sum_{y\in X}\mathsf{Pr}(x\stackrel{a}{\to}y)\cdot \trace{}(y)(\sigma)\enspace,}
\end{array}
\end{equation}
 where $\mathsf{Pr}(\dotsc)$ denotes the probability of a transition.
 \item The third example can be thought of as a
       state-based system, with non-terminal symbols as states. 
       It is non-deterministic because a state $\mathbf{T}$ has two
       possible transitions. It is natural to call the following set of
       parse-trees its ``trace semantics.''
       \begin{displaymath}
        \trace{}(\mathbf{T}) = \left\{\quad
        \begin{minipage}[c]{.35\textwidth}
	 \psset{levelsep=1em,treenodesize=.7em,linewidth=.3pt}
 	  \pstree{\TR{$\bullet$}}{\pstree{\TR{$\mathsf{0}$}}{}}
 \quad
 \pstree{\TR{$\bullet$}}{
 \TR{$\mathsf{s}$}
 \pstree{\TR{$\bullet$}}{\TR{$\mathsf{0}$}}
 }
 \quad
 \pstree{\TR{$\bullet$}}{
 \TR{$\mathsf{s}$}
 \pstree{\TR{$\bullet$}}{
 \TR{$\mathsf{s}$}
 \pstree{\TR{$\bullet$}}{\TR{$\mathsf{0}$}}
 }
 }
\end{minipage}
 \cdots\right\}
       \end{displaymath}
  It is again a set of ``linear-time behavior'' as in the first
       example, although the notion of linear-time behavior is different
       here. Linear-time behavior---that is, what we observe after we have resolved
       all the non-deterministic branchings in the system---is now 
       a parse-tree instead of a word.
\end{enumerate}

\subsection{Coalgebras and coinduction}
In recent years the theory of \emph{coalgebras} has emerged as
 the ``mathematics of state-based
 systems''~\cite{JacobsR97,Rutten00a,Jacobs05:coalgebraBook}. 
In the categorical
theory of coalgebras, an important definition/reasoning principle is \emph{coinduction}: a system
(identified with a coalgebra $c: X\to FX$) is assigned a unique morphism
$\beh_{c}$ into
the final coalgebra.
   \begin{align*}
 \vcenter{
 \xymatrix@C+5em@R=1em{
  {F X}
                \ar@{-->}[r]^{F (\beh_{c})} 
 &
  {F Z}
 \\
  {X}
                \ar[u]^{c}
                \ar@{-->}[r]_{\beh_{c}}
 &
  {Z}
                \ar[u]_{\text{final}}^{\cong}
}
}
 \end{align*}
The success of coalgebras is largely due to the fact that, when
$\Sets$ is taken as the base category, the final coalgebra semantics is
fully abstract with respect to the conventional notion of
\emph{bisimilarity}: for states $x$ and $y$ of coalgebras
$X\stackrel{c}{\to} FX$ and $Y\stackrel{d}{\to} FY$,
\begin{displaymath}
 \beh_{c}(x) = \beh_{d}(y) \qquad
 \Longleftrightarrow
 \qquad
 \text{$x$ and $y$ are bisimilar.}
\end{displaymath}
This is the case for a wide variety of systems (i.e.\ for a variety of
functors $F$), hence \emph{coinduction in $\Sets$ captures
bisimilarity}.

However, there is not so much work so far that captures other behavioral
equivalences (coarser than bisimilarity) by the categorical
principle of coinduction. 
 The current work---capturing
 trace semantics by coinduction in a Kleisli category---therefore extends the
 application field of the theory of coalgebras.



\subsection{Our contributions}
Our technical contributions are summarized as follows.
Assume that $T$ is a monad on $\Sets$ which has a suitable order
structure; we shall denote its Kleisli category by $\Kleisli{T}$.
\begin{enumerate}[$\bullet$]
 \item \emph{Trace semantics via coinduction in a Kleisli category.} 
Commutativity of the coinduction diagram
   \begin{equation}\label{diagram:coinductionInKleisliCategoryInIntro}
 \vcenter{
 \xymatrix@C+5em@R=1em{
  {\oF X}
                \ar@{-->}[r]^{\oF (\trace{c})} 
 &
  {\oF Z}
 \\
  {X}
                \ar[u]^{c}
                \ar@{-->}[r]_{\trace{c}}
 &
  {Z}
                \ar[u]_{\text{final}}^{\cong}
}
} \qquad
 \begin{array}{ll}
  \text{in}& \text{$\KT$,}
 \\
  &
  \text{the Kleisli category for $T$}
 \end{array}
 \end{equation}
 is shown to be equivalent to the conventional recursive definition of
       trace semantics
       such as (\ref{equation:conventionalDefOfTraceSetForLTS}) and 
       (\ref{equation:conventionalDefOfTraceDistributionForProbLTS}).
       This is true for both trace set semantics (for
       non-deterministic systems) and trace distribution semantics (for
       probabilistic
       systems).
 The induced arrow $\trace{c}$ thus gives (conventional) trace
 semantics for a system $c$.
 \item \emph{Identification of the final coalgebra in a Kleisli
       category.} We show that
       \begin{center}
	\fbox{  	\begin{tabular}{c}
	 an initial algebra in $\Sets$ 
        \\
         coincides with
        \\
         a final coalgebra in $\Kleisli{T}$.
			\end{tabular}}
       \end{center}
       In particular,
 the final coalgebra in $\Rel$
  is
 the initial algebra in $\Sets$,
  because the category $\Rel$ of sets and relations is a Kleisli
       category for a suitable monad.
  This coincidence happens in the following two steps:
  \begin{enumerate}[-]
   \item the initial algebra in $\Sets$ lifts to a Kleisli category, due
         to a suitable adjunction-lifting result;
   \item in a Kleisli category we have
	 \emph{initial algebra-final coalgebra coincidence}. 
         Here we use the classical result by Smyth and
	 Plotkin~\cite{SmythP82}, namely \emph{limit-colimit coincidence}
         which is applicable in a suitably order-enriched category. 
  \end{enumerate}
\end{enumerate}
Note the presence of two parameters in
(\ref{diagram:coinductionInKleisliCategoryInIntro}):
 a monad $T$ and an endofunctor
$F$, both on $\Sets$. 
 The monad $T$ specifies the \emph{branching type} of systems. We have three
 leading
       examples:\footnote{Other examples include the monad
       $X\mapsto (\mathbb{N}\cup \{\infty\})^{X}$ for multisets, the monad $X\mapsto [0,\infty]^{X}$
       for real valuations, and the monad  $X\mapsto
       \pow(M\times\place)$ with a monoid $M$ for timed systems (cf.~\cite{KickPS06}).
       These monads can be treated in a similar
       way as our leading examples. We leave out the details.
}
       \begin{enumerate}[$\bullet$]
	\item the powerset monad $\pow$ modeling
	      \emph{non-deterministic}
	      or \emph{possibilistic} branching;
	\item the subdistribution monad $\dist$ 
	      \begin{displaymath}
	       \dist X = \{d: X\to[0,1]\mid \sum_{x\in X}d(x) \le 1\}
	      \end{displaymath}
	      modeling
	      \emph{probabilistic} branching; and
	\item the lift monad $\lift = 1+(\place)$ modeling system with 
	      \emph{exception} (or \emph{deadlock},
	      \emph{non-termination}).
       \end{enumerate}
 The functor $F$ specifies the \emph{transition type} of systems: our
 understanding of  ``transition type''  shall be clarified by the following examples.
       \begin{enumerate}[$\bullet$]
	\item In labeled transition systems (LTSs) with explicit termination---no matter if they are non-deterministic
	      or even probabilistic---a state either
	      \begin{enumerate}[-]
	       \item terminates ($x\to\checkmark$), or
	       \item outputs one symbol and moves to another state
		     ($x\stackrel{a}{\to} x'$), 
	      \end{enumerate} 
	      in
	      one transition.
              This ``transition type'' is expressed by the functor
	      $FX=1+\Sigma\times X$,
	      where $\Sigma$ is the output alphabet and $1 =\{\checkmark\}$.
	\item In context-free grammars (CFGs) as state-based systems, a
	state evolves into a sequence of terminal and non-terminal
	symbols in a transition. The functor
	      \begin{displaymath}
	       FX = (\Sigma + X)^{*}
	      \end{displaymath}
	      with $\Sigma$ being the set of terminal symbols,
              expresses this transition type.
       \end{enumerate}
Clear separation of branching and transition types is important in
our generic treatment of trace semantics. The transition type $F$
determines the set of linear-time behavior (which is in fact given by the initial
$F$-algebra in $\Sets$).  We model a system by a coalgebra
$X\stackrel{c}{\to}\oF X$ in the Kleisli category $\KT$---see
(\ref{diagram:coinductionInKleisliCategoryInIntro})---where $\oF$ is a suitable
lifting of $F$ in $\KT$.
By the definition of a Kleisli category we will easily see the following bijective
correspondence.
\begin{displaymath}
        \vcenter{\infer={
        \xymatrix@C+1em@1{X\ar[r]^-{c}&TFX} \text{ in $\Sets$}
       }{
        \xymatrix@C+1em@1{X\ar[r]^-{c}&\oF X} \text{ in $\KT$} 
       }}
\end{displaymath}
Hence our system---a \emph{function} of the type $X\to TFX$---first resolves
a branching of type $T$ and then makes a transition of type $F$. 
Many branching systems allow such representation so that  our
generic coalgebraic trace semantics applies to them.

\subsection{Generic theory of traces and simulations}
In the study of coalgebras as `categorical presentation of state-based
systems', there are three ingredients playing crucial roles:
\emph{coalgebras} as systems; \emph{coinduction} yielding process
semantics; and \emph{morphisms of coalgebras} as behavior-preserving
maps. In this paper we study the first two in a Kleisli
category. What about morphisms of coalgebras?

In~\cite{Hasuo06a}  this question is answered by identifying
\emph{lax/oplax morphisms of coalgebras} in a Kleisli category as
\emph{forward/backward simulations}. Use of traces and simulations is a common
technique in formal verification of systems (see e.g.~\cite{LynchV95}): a desirable property is
expressed in terms of traces; and then a system is shown to satisfy the
property
by finding a suitable simulation. Therefore this paper, together
with~\cite{Hasuo06a}, forms an essential part of developing 
a ``generic theory of traces and simulations'' using coalgebras in
a Kleisli category. The categorical genericity---especially the fact
that we can treat non-deterministic and probabilistic branching in a
uniform manner---is exploited in~\cite{HasuoK06} to obtain a
simulation-based proof method for a probabilistic notion of anonymity
for network protocols. Currently we are investigating how much more 
applicational impact can be brought about by our generic theory of traces and simulations.

\subsection{Testing and trace semantics}
Since the emergence of the theory of coalgebras, the significance of
\emph{modal logics} as specification languages has been noticed by many
authors. This is exemplified by the slogan in~\cite{kurz:Phdthesis}: `modal
logic is to coalgebras what equational logic is to algebras'.  Inspired
by coalgebras on Stone spaces and the corresponding modal logic, recent
developments~\cite{KupkeKV04,BonsangueK05,BonsangueK06,Kurz06sigact,PavlovicMW06,Klin07,Klin07b}
have identified the following situation as the essential mathematical
structure underlying  modal logics for coalgebras.
 \begin{displaymath}
 \xymatrix@1@C+3em{
  {\C^{\op}}
              \ar@(lu,ld)[]_{F^{\op}}
              \ar@/^.7em/[r]^{P}
              \ar@{}[r]|{\top}
 &
  {\A}
              \ar@(rd,ru)[]_{M}
              \ar@/^.7em/[l]^{S^{\op}}
 }
 \quad\text{together with}\quad
  MP\stackrel{\delta}{\Longrightarrow} PF^{\op}
 \end{displaymath}
In fact, it is noticed in~\cite{PavlovicMW06} that such a situation not
only hosts a modal logic but also a more general notion of
\emph{testing} (in the sense of~\cite{vanGlabbeek01,StoelingaV03}, also
called \emph{testing scenarios}).
Therefore we shall call the above situation a \emph{testing situation}.

In the last technical section of the paper we investigate
coalgebraic trace semantics for the special case  $T=\pow$ (modeling non-determinism)
 from this testing point of view.
First, we present some basic facts on testing situations, 
especially on the relationship between the induced \emph{testing
equivalence} and the final coalgebra semantics.  
These two process equivalences are categorically presented as
kernel pairs, which enables a fairly simple presentation of the theory
of coalgebraic testing.
In addition, we observe that the coinduction scheme in the Kleisli category $\Kleisli{\pow}$
gives rise to a canonical testing situation, in which the set of tests
is given by an
initial $F$-algebra.

The material on testing in the last section has not been presented
in the earlier versions~\cite{HasuoJ05b,HasuoJS06a} of this paper.

\subsection{Organization of the paper}
In Section~\ref{section:coalgebrasInKleisliCat} we observe that
a coalgebra in a Kleisli category is an appropriate ``denotation'' of a
branching system, when we focus on trace semantics. 
In Section~\ref{section:finalCoalgebraInKleisliCat} we present our main
technical result that 
 an initial algebra in $\Sets$ yields
a final coalgebra in $\Kleisli{T}$.
The relationship to axiomatic domain
theory---which employs similar mathematical arguments---is also
discussed here. Section~\ref{section:traceSemanticsViaCoinduction}
presents some examples of the use of coinduction in $\Kleisli{T}$
and argues that the coinduction principle is a general form of trace semantics.  In
Section~\ref{section:traceSemanticsAsTestingEquivalence}
we review the preceding material from the testing point of view.

\section{Coalgebras in a Kleisli category}\label{section:coalgebrasInKleisliCat}
In the study of coalgebras as 
``categorical presentations of
state-based systems,'' the category $\Sets$ of sets and functions has
been traditionally taken as a base category (see e.g.~\cite{JacobsR97,Rutten00a}).
An important fact in such a setting is that 
bisimilarity is often captured by coinduction.\footnote{Non-examples
include LTSs with unbounded branching degree. They are modeled as
coalgebras for $FX=\pow(\Sigma\times X)$. Lambek's Lemma readily
shows that  this choice of $F$ 
does not have a final coalgebra in $\Sets$, because it would imply
an isomorphism $Z\iso \pow(\Sigma\times Z)$ which is impossible for 
cardinality reasons.
}

However, bisimilarity is not the only process equivalence. In some
applications one would like coarser equivalences, for example in order to abstract
away internal branching structures. One of such coarser semantics, which
has been extensively studied, is \emph{trace
equivalence}.
For example, the process algebra CSP~\cite{Hoare85} has trace semantics
as its operational
model.
Trace equivalence is coarser than bisimilarity, as the following classic example of ``trace-equivalent but
not bisimilar'' systems illustrates. 
 \begin{displaymath}
  \xymatrix@=.7em{
   &
    {x}
          \ar[dl]_{a}
          \ar[dr]^{a}
   &
   \\
    {\bullet}
          \ar[d]_{b}
   &&
    {\bullet}
          \ar[d]^{c}
   \\
    {\bullet}
   &&
    {\bullet}
   }
  \qquad\qquad
  \xymatrix@=.7em{
   &
    {y}
          \ar[d]_{a}
   &
   \\
   &
    {\bullet}
          \ar[ld]_{b}
          \ar[rd]^{c}
   &
   \\
    {\bullet}
   &&
    {\bullet}
   }
\end{displaymath}

It is first noticed in~\cite{PowerT99} that the Kleisli category 
for the powerset monad is an appropriate base category for trace
semantics for non-deterministic systems.
This observation is pursued further in~\cite{Jacobs04c,HasuoJ05b,HasuoJS06a}.
 In~\cite{HasuoJ05c} it is recognized that the same is true for 
the subdistribution monad for probabilistic systems.
The current paper provides a unified framework which yields those
preceding results, in terms of $\Cppo$-enrichment of a Kleisli
category; see Section~\ref{subsection:kleisliIsCppoEnriched}.
 In this section we first aim to justify the use of coalgebras
in a Kleisli category.

\subsection{Monads and Kleisli categories}
Here we recall the relevant facts about monads and Kleisli categories.
For simplicity we exclusively consider monads on $\Sets$.

A \emph{monad} on $\Sets$ is a categorical construct.
It consists of 
\begin{enumerate}[$\bullet$]
 \item an endofunctor $T$ on $\Sets$;
 \item a \emph{unit} natural transformation $\eta:\id\Rightarrow T$, that
       is, a function
       \begin{math}
	X \stackrel{\eta_{X}}{\rightarrow} TX
       \end{math}
       for each set $X$ satisfying a suitable naturality condition; and
 \item a \emph{multiplication} natural transformation $\mu: T^{2}\Rightarrow T$,
       consisting of functions $T^{2}X\stackrel{\mu_{X}}{\rightarrow} TX$ with $X$
       ranging over sets.
\end{enumerate}
The unit and multiplication are required to satisfy the following
compatibility conditions.
\begin{displaymath}
 \xymatrix@C=2em@R=2em{
  {TX}
           \ar[r]^{\eta_{TX}}
           \ar[rd]_{\id}
 &
  {T^{2}X}
           \ar[d]_{\mu_{X}}
 &
  {TX}
           \ar[l]_{T\eta_{X}}
           \ar[ld]^{\id}
 &
 &
  {T^{3}X}
           \ar[r]^{T\mu_{X}}
           \ar[d]_{\mu_{TX}}
 &
  {T^{2}X}
           \ar[d]^{\mu_{X}}
 \\
 &
  {TX}
 &
 &
 &
  {T^{2}X}
           \ar[r]_{\mu_{X}}
 &
  {TX}
}
\end{displaymath}
See~\cite{MacLane71,BarrW85} for the details.

The monad structures play a crucial role in modeling ``branching.''
Intuitively, the unit $\eta$ embeds a non-branching behavior as a trivial
branching (with only one possibility to choose). 
The multiplication $\mu$ ``flattens'' two successive branchings into 
one branching, abstracting away internal branchings:
\begin{equation}\label{diagram:flatteningNonDetBranching}
 \vcenter{\xymatrix@R=-.3em@C=2em{
  &&
   {x}
  \\
  &
   {\bullet}
              \ar@{~>}[ru]\ar@{~>}[rd]
  \\
   {\bullet}
              \ar@{~>}[ru]\ar@{~>}[rdd]
  &&
   {y}
  \\
  \\
  &
   {\bullet}
              \ar@{~>}[r]
  &
   {z}
}}
  \qquad
  \stackrel{\mu}{\longmapsto}
  \qquad
 \vcenter{\xymatrix@R=0em@C=4em{
  &
   {x}
  \\
   {\bullet}
              \ar@/^/@{~>}[ru]
	      \ar@/_/@{~>}[rd]
	      \ar@{~>}[r]
  &
   {y}
  \\
  &
   {z}
}}
\end{equation}
The following examples will illustrate how this flattening phenomenon  is a crucial feature of trace semantics.

In this paper we concentrate on the three monads mentioned in the introduction:
$\lift$, $\pow$ and $\dist$. 
\begin{enumerate}[$\bullet$]
 \item The lift monad $\lift = 1+(\place)$---where we denote $1 = \{\bot\}$
       with $\bot$ meaning \emph{deadlock}---has a standard monad
       structure induced by a coproduct. For example, 
       the multiplication $\mu^{\lift}_{X}: 1+1+X\to 1+X$ carries $x\in X$
       to itself and both $\bot$'s to $\bot$.
 \item The powerset monad $\pow$ has a unit given by singletons and a
       multiplication given by unions. The monad $\pow$ models
       non-deterministic branching: the ``flattening''
       in~(\ref{diagram:flatteningNonDetBranching})
       corresponds to the following application of the multiplication of
       $\pow$.
       \begin{displaymath}
	\vcenter{\xymatrix@R=0em@C+5em{
         {\pow\pow X}
                   \ar[r]^-{\mu^{\pow}_{X}}
        &
         {\pow X}
        \\
         {\bigl\{\,\{x,y\}, \{z\}\,\bigr\}}
                   \ar@{|->}[r]
        &
         {\{x,y,z\}}
	}}
       \end{displaymath}
        The monad $\pow$'s action on arrows (as a functor)  is given by
       direct images: for $f:X\to Y$, the function $\pow f:\pow X\to\pow Y$ carries a subset
       $u\subseteq X$ to the subset $\{f(x)\mid x\in u\}\subseteq Y$.
\item The subdistribution monad $\dist$ has a unit given by the \emph{Dirac
       distributions}.
       \begin{displaymath}
	\vcenter{\xymatrix@R=0em@C+5em{
         {X}
                   \ar[r]^-{\eta^{\dist}_{X}}
        &
         {\dist X}
        \\
         {x}
                   \ar@{|->}[r]
        &
         {\left[
	   \begin{array}{ll}
	    x\mapsto 1 & \\
	    x'\mapsto 0 & (\text{for }x'\neq x) \\
	   \end{array}
	  \right]}
	}}
       \end{displaymath}
       Its multiplication is given by multiplying the probabilities
       along the way. That is,
       \begin{displaymath}
		\mu^{\dist}_{X}(\xi) = \lambda x. \sum_{d\in \dist X}\xi(d)\cdot d(x)\enspace,
       \end{displaymath}
       which models ``flattening'' of the following kind.
       \begin{align*}
	 &\vcenter{\xymatrix@R=-.3em@C=2em{
	   &&
	    {x}
	   \\
	   &
	    {\bullet}
		       \ar@{~>}[ru]^{1/2}\ar@{~>}[rd]_{1/2}
	   \\
	    {\bullet}
		       \ar@{~>}[ru]^{1/3}\ar@{~>}[rdd]_{2/3}
	   &&
	    {y}
	   \\
	   \\
	   &
	    {\bullet}
		       \ar@{~>}[r]_{1}
	   &
	    {z}
	 }}
	   \qquad
	   \stackrel{\mu}{\longmapsto}
	   \qquad
	  \vcenter{\xymatrix@R=.6em@C=4em{
	   &
	    {x}
	   \\
	    {\bullet}
		       \ar@{~>}[ru]^{1/6}\ar@{~>}[rd]_{2/3}\ar@{~>}[r]|{1/6}
	   &
	    {y}
	   \\
	   &
	    {z}
	 }}\enspace,
      \end{align*}
      that is,
      \begin{align*}
	\left[
         \begin{array}{ccc}
	  \left[
           \begin{array}{l}
	    x\mapsto 1/2
           \\
            y\mapsto 1/2
	   \end{array}
          \right]
         &
          \mapsto 
         &
          1/3
         \\[+.3em]
	  [
	    z\mapsto 1
          ]
         &
          \mapsto 
         &
          2/3
	 \end{array}
	\right]
	   \qquad
	   \stackrel{\mu}{\longmapsto}
	   \qquad
        \left[
         \begin{array}{c}
	  x\mapsto 1/6
         \\
	  y\mapsto 1/6
         \\
	  z\mapsto 2/3
	 \end{array}
	\right]\enspace.
       \end{align*}
    The monad $\dist$'s action on arrows (as a functor) is given as a
       suitable adaptation of ``direct images.'' Namely, for $f:X\to Y$,
 the function
   $\dist f:\dist X\to\dist Y$ carries $d\in \dist X$ to $[y\mapsto \sum_{x\in f^{-1}(y)} d(x)]\in\dist Y$.
\end{enumerate}

Given any monad $T$, its \emph{Kleisli category} $\KT$ is
defined as follows. Its objects are the objects of the base
category, hence sets in our consideration. An arrow $X\to Y$ in $\KT$ is the same
thing as an arrow $X\to TY$ in the base category, here $\Sets$. 
\begin{displaymath}
        \vcenter{\infer={
        \xymatrix@C+1em@1{X\ar[r]&TY} \text{ in $\Sets$}
       }{
        \xymatrix@C+1em@1{X\ar[r]& Y} \text{ in $\KT$} 
       }}
\end{displaymath}
Identities and composition of arrows are defined using the unit and
the multiplication of $T$. Moreover, there is a canonical adjunction
\begin{equation}\label{diagram:canonicalKleisliAdjunction}
 \vcenter{\xymatrix@R=.5em@C+3em{
  {\Sets}
              \ar@/^/[r]^{J}
              \ar@/^/[r];[]^{K}
              \ar@{}[r]|{\bot}
 &
  {\KT}
}}
\end{equation}
such that $J$ carries 
  ${X\stackrel{f}{\longrightarrow} Y}$ in $\Sets$
to
  ${X\stackrel{\eta_{Y}\co f}{\longrightarrow} Y}$ in $\KT$.
See~\cite{MacLane71,BarrW85} for details.

The relevance in this paper is that a Kleisli category can be thought of
as a category where the branching is implicit. For example, an arrow
$X\to Y$ in the Kleisli category $\Kleisli{\pow}$ is a function 
$X\to \pow Y$ hence a ``non-deterministic function.'' When $T=\dist$,
then by writing $X\to Y$ in the Kleisli category we mean a function
with probabilistic branching. Moreover, composition of arrows in
$\KT$ is given by
 \begin{displaymath}
 X \stackrel{f}{\longrightarrow} 
 Y \stackrel{g}{\longrightarrow} 
 Z
\quad\text{in $\KT$}\quad
  =\quad
 X
   \stackrel{f}{\longrightarrow}   
 TY
   \stackrel{Tg}{\longrightarrow} 
 {T^{2} Z} 
   \stackrel{\mu_{Z}}{\longrightarrow}
  TZ
 \quad\text{in $\Sets$;}
 \end{displaymath}
that is, making one transition (by $g$) after another (by $f$), and then
flattening (by $\mu_{Z}$).
 For example, this general definition instantiates as follows when $T=\dist$.
 For $X\stackrel{f}{\to}Y\stackrel{g}{\to} Z$,
 \begin{equation*}
  (g \co f)(x)(z) = \textstyle\sum_{y \in Y} f(x)(y)\cdot g(y)(z)\enspace.
 \end{equation*}

 \begin{rem}
 Our use of the \emph{sub}-distribution monad instead of the
 distribution monad
	      \begin{displaymath}
	       \dist_{=1} (X) = \{d: X\to[0,1]\mid \sum_{x\in X}d(x) = 1\}
	      \end{displaymath}
 needs some justification. 
 Looking at the trace
  distribution~(\ref{equation:introExampleTraceDistribution}), one sees that
 the probabilities add up only to $2/3$ and not to $1$; this is because
  the infinite trace (namely
 $a^{\omega}\mapsto 1/3$) are not present. Therefore in this example,
 although the state-based system can be modeled as a coalgebra in the category
 $\Kleisli{\dist_{=1}}$, its trace semantics can only be expressed as an
 arrow in $\Kleisli{\dist}$.

 When a system is modeled as a coalgebra in $\Kleisli{\dist}$, a state
 may have a (sub)distribution over possible transitions which adds up to less
 than $1$. In that case the missing probability can be understood as the
 probability for \emph{deadlock}.

 Technically, we use the monad $\dist$ instead of $\dist_{=1}$ because we need the minimum
 element (a \emph{bottom}) so that 
  the Kleisli category becomes $\Cppo$-enriched (Theorem~\ref{theorem:main}).
 A bottom is available for $\dist$ as the zero distribution $[x\mapsto 0]$, but
 not for $\dist_{=1}$.

 \end{rem}

\subsection{Lifting functors by distributive laws}
\label{subsection:liftingFunctorsByDistrLaws}
In this paper a state-based system is presented as a coalgebra
$X\to\oF X$ in $\KT$, where $\oF:\KT\to\KT$ 
is a lifting of $F: \Sets\to\Sets$. This lifting $F\mapsto \oF$ is
equivalent to a \emph{distributive law} $FT\Rightarrow TF$. The rest
of this section elaborates on this point.

 Various kinds of state-based, branching systems
are expressed as a function of the form $X\stackrel{c}{\to} TFX$ with $T$ a monad
(for branching type) and $F$ a functor (for transition type). The
following examples are already hinted at in the introduction.
\begin{enumerate}[$\bullet$]
 \item For $T=\pow$ and $F=1+\Sigma\times\place$, a function $X\stackrel{c}{\to} TFX$ is
       an LTS with explicit termination. For example, consider the following system 
       \begin{displaymath}
	\vcenter{\xymatrix@R=0em@C+5em{
         {X}
                   \ar[r]^-{c}
        &
         {\pow (1+\Sigma\times X)}
        \\
         {x}
                   \ar@{|->}[r]
        &
         {\{\checkmark, (a_{1}, x_{1}), (a_{2}, x_{2})\}}
	}}
       \end{displaymath}
       where $\checkmark$ is the element of $1$.\footnote{Note that the
       singleton $1=\{\checkmark\}$ here in $F=1+\Sigma\times\place$
       has a different interpretation from $1=\{\bot\}$ in
       $T=\lift=1+\place$. The intuition is as follows. On the one hand, when an
       execution hits successful
       termination $\checkmark$, it
       yields its history of observations as its trace.
       On the other hand, when an execution hits deadlock $\bot$ then it
       yields no trace no matter what is the history before hitting
       $\bot$.
       This distinction will be made formal in Example~\ref{example:traceSemForLiftMonadAndLTSWithTerm}.} Then the state $x$ can make
       three possible transitions, namely: $x\to\checkmark$ (successful termination),
       $x\stackrel{a_{1}}{\to}x_{1}$, and $x\stackrel{a_{2}}{\to} x_{2}$,
       when written in a conventional way.
 \item By replacing $T=\pow$ by $\dist$, but keeping $F$ the same, we obtain a probabilistic
       system such as the one in the middle of
       (\ref{diagram:firstExampleOfSystems}). For example, 
       \begin{displaymath}
	\vcenter{\xymatrix@R=0em@C+5em{
         {X}
                   \ar[r]^-{c}
        &
         {\dist (1+\Sigma\times X)}
        \\
         {x'}
                   \ar@{|->}[r]
        &
         {\left[\begin{array}{r}
	         (a,y')\mapsto 1/3
                \\
	         (a,z')\mapsto 1/3
                \\
                 \checkmark \mapsto 1/3
		\end{array}\right]}
	  \shifted{5em}{-1em}{.}
	}}
       \end{displaymath}
 \item For $T=\pow$ and $F=(\Sigma+\place)^{*}$, a function $X\stackrel{c}{\to} TFX$ 
       is a CFG with $\Sigma$ the terminal alphabet (but without
       finiteness conditions e.g.\ on the state space). See~\cite{HasuoJ05b} for more details.
\end{enumerate}
All these systems are modeled by a function $X\stackrel{c}{\to} TFX$, hence an arrow $X\stackrel{c}{\to}FX$
in $\KT$. Our question here is: is $c$ a coalgebra in
$\KT$? In other words: is the functor $F$ on $\Sets$ also a
functor on $\KT$?

Hence, to develop a generic theory of traces in $\KT$, we need
to lift $F$ to a functor $\oF$ on $\KT$. A functor
$\oF$ is said to be a \emph{lifting} of
$F$ if the following diagram commutes. Here $J$ is the left adjoint in~(\ref{diagram:canonicalKleisliAdjunction}).
\begin{equation}\label{diagram:definitionOfLiftingOfF}
 \xymatrix@R=1em@C+2em{
   {\KT}
                   \ar[r]^{\oF}
  &
   {\KT}
  \\
   {\Sets}
                   \ar[u]^{J}
                   \ar[r]_{F}
  &
   {\Sets}
                   \ar[u]_{J}
}
\end{equation}
The following fact is presented in~\cite{Murly94}; see also~\cite{LenisaPW00,LenisaPW04}.
Its proof is straightforward.
\begin{lem}\label{lemma:liftingIffDistrLaw}
 A lifting $\oF$ of $F$ is in bijective correspondence with a
 \textbf{distributive law} $\lambda: FT{\Rightarrow} TF$. 
 A distributive law $\lambda$ is a natural transformation
 which is compatible with $T$'s monad structure, in the following way.
 \begin{align*}
  \xymatrix@R=1em@C+1em{
   {FX}
                 \ar[r]^{F\eta_{X}}
                 \ar[rd]_{\eta_{FX}}
  &
   {FTX}
                 \ar[d]^{\lambda_{X}}
  &
  &
   {FT^{2}X}
                 \ar[r]^{\lambda_{TX}}
                 \ar[d]_{F\mu_{X}}
  &
   {TFTX}
                 \ar[r]^{T\lambda_{X}}
  &
   {T^{2}FX}
                 \ar[d]^{\mu_{FX}}
  \\
  &
   {TFX}
  &&
   {FTX}
                 \ar[rr]_{\lambda_{X}}
  &&
   {TFX}
} 
 \end{align*} 

\vspace{-1.8em}
\qed
\end{lem}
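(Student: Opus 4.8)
The plan is to exhibit the bijection by hand and then recognise the two coherence conditions of a distributive law as exactly the two functoriality axioms for $\oF$. Since $J$ is the identity on objects, the square~(\ref{diagram:definitionOfLiftingOfF}) forces $\oF X=FX$ for every set $X$, so the only genuine data in a lifting is its action on Kleisli arrows, and that is what I have to match against $\lambda$.

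First I would treat the passage from a distributive law to a lifting. Given $\lambda\colon FT\Rightarrow TF$ and a Kleisli arrow $f\colon X\to Y$, that is a $\Sets$-function $f\colon X\to TY$, I set $\oF f\defeq\lambda_{Y}\co Ff\colon FX\to TFY$, a Kleisli arrow $FX\to FY$. Now I check the functor axioms. Preservation of identities asks that $\oF(\eta_{X})=\eta_{FX}$, and unwinding the definition this is precisely $\lambda_{X}\co F\eta_{X}=\eta_{FX}$, i.e.\ the left (unit) triangle of the distributive law. Preservation of composition is where the work sits: writing $\oF(g\co f)$ and the Kleisli composite $\oF g\co\oF f$ (composed in $\KT$, so that $\mu$ is inserted) as $\Sets$-functions, cancelling the common right factor $Ff$, and rewriting $TFg\co\lambda_{Y}=\lambda_{TZ}\co FTg$ by naturality of $\lambda$, the required equality reduces exactly to $\lambda_{Z}\co F\mu_{Z}=\mu_{FZ}\co T\lambda_{Z}\co\lambda_{TZ}$, which is the right (multiplication) square. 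Finally $\oF J=JF$ follows once more from the unit law, since $\oF(Jf)=\lambda_{Y}\co F(\eta_{Y}\co f)=\eta_{FY}\co Ff=J(Ff)$.

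For the reverse passage I would recover $\lambda$ from a lifting $\oF$ by evaluating it on one canonical arrow. Let $\theta_{X}\colon TX\to X$ be the Kleisli arrow given by the identity function $\id_{TX}$ (the counit of the Kleisli adjunction~(\ref{diagram:canonicalKleisliAdjunction}) at $X$), and put $\lambda_{X}\defeq\oF\theta_{X}\colon FTX\to TFX$. Naturality of $\lambda$ and its two coherence conditions are then forced by functoriality of $\oF$ together with $\oF J=JF$: each law is obtained by applying $\oF$ to an identity between Kleisli composites built from the arrows $\theta$ and arrows in the image of $J$, and using that $\oF$ respects that composite. (The multiplication square for $\lambda=\oF\theta$ is the mirror image of the composition-preservation computation above.)

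It remains to check that the two assignments are mutually inverse. Starting from $\lambda$, building $\oF$, and re-evaluating on $\theta_{X}$ gives $\oF\theta_{X}=\lambda_{X}\co F\id_{TX}=\lambda_{X}$, so one round trip is the identity on the nose. For the other round trip I would use the factorisation $f=\theta_{Y}\co Jf$ in $\KT$, valid for every Kleisli arrow $f\colon X\to Y$ (here $Jf$ is the image under $J$ of $f$ regarded as a $\Sets$-map $X\to TY$; the identity is a one-line check from the unit law of $T$). Applying $\oF$ and using $\oF J=JF$ turns $\oF f$ into the Kleisli composite $\oF\theta_{Y}\co J(Ff)$, and a short computation with $\eta$-naturality and the unit law collapses this to $\lambda_{Y}\co Ff$, which is exactly the lifting rebuilt from $\lambda_{Y}=\oF\theta_{Y}$. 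I expect the composition-preservation computation, and equally its mirror the multiplication square for $\lambda=\oF\theta$, to be the only genuinely non-formal step; the unit triangle, the naturality of $\lambda$, and the round-trip identities are all routine bookkeeping with $\eta$, $\mu$, and the monad laws.
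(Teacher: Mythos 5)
Your proof is correct and complete. The paper itself gives no argument for this lemma---it cites the literature and declares the verification straightforward---but the one construction it does record immediately after the statement, namely the lifting $\oF X = FX$, $\oF f = \lambda_{Y}\co Ff$ induced by a distributive law, is exactly your forward direction; your inverse construction via the Kleisli counit $\theta_{X}$ (i.e.\ $\lambda_{X}\defeq\oF\theta_{X}$), the reduction of the functor axioms to the unit triangle and multiplication square, and the two round-trip checks are the standard way to finish the bijection, and all of these steps are accurate.
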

\noindent A distributive law $\lambda$ induces a lifting $\oF$ as
follows. On objects: $\oF X=FX$. Given $f: X\to Y$ in
 $\KT$, we need an arrow $\oF f: FX\to FY$ in $\KT$. 
 Recall that $f$ is a function $X\to TY$ in $\Sets$; one takes $\oF f$
 to be the arrow
 which corresponds to the function 
 \begin{displaymath}
  FX \stackrel{Ff}{\longrightarrow} FTY \stackrel{\lambda_{Y}}{\longrightarrow} TFY \quad
  \text{in $\Sets$}\enspace.
 \end{displaymath}

A distributive law specifies how
a transition (of type $F$)
 ``distributes'' over 
 a branching (of type $T$). Let us look at an example. 
For $T=\pow$ and $F=1+\Sigma\times\place$ (the combination for LTSs with
explicit termination), we have the following distributive law.
       \begin{displaymath}
	\vcenter{\xymatrix@R=0em@C+5em{
         {1+\Sigma\times(\pow X)}
                   \ar[r]^{\lambda_{X}}
        &
         {\pow (1+\Sigma\times X)}
        \\
         {\checkmark}
                   \ar@{|->}[r]
        &
         {\{\checkmark\}}
        \\
         {(a, S)}
                   \ar@{|->}[r]
        &
         {\bigl\{(a,x)\mid x\in S\bigr\}}                
	}}
        \end{displaymath}
    For example,
        \begin{displaymath}
         \vcenter{\xymatrix@R=0em@C-.5em{
	  &&
           {x}
          \\
           {\bullet}
                             \ar[r]^{a}
          &
           {}
                             \ar@{~>}[ru]
                             \ar@{~>}[r]
                             \ar@{~>}[rd]
          &
           {y}
          \\
          &&
           {z}
         }}
         \;\stackrel{\lambda}{\longmapsto}\;
         \vcenter{\xymatrix@R=0em@C-.5em{
	  &
           {}   
                             \ar[r]^{a}
	  &
           {x}
          \\
           {\bullet}
                             \ar@{~>}[ru]
                             \ar@{~>}[r]
                             \ar@{~>}[rd]
          &
           {}
                             \ar[r]^{a}
          &
           {y}
          \\
          &
                     {}   
                             \ar[r]^{a}
          &
           {z}
         }}
        \qquad\text{that is}\qquad
        \bigl(a,\{x,y,z\}\bigr) \stackrel{\lambda}{\longmapsto} \bigl\{(a,x), (a,y), (a,z)\bigr\}\enspace,
       \end{displaymath}
       where waving arrows $\leadsto$ denote branchings.

Throughout the paper
we need the global assumption  that a functor $F$ has
a lifting $\oF$ on $\KT$, or equivalently, that there is a
distributive law $\lambda: FT\Rightarrow TF$. Now we present some
sufficient conditions for existence of $\lambda$. In most examples one
of these conditions holds.

First, take $T=\pow$, in which case we have $\Kleisli{\pow}\cong\Rel$, the
category of sets and binary relations.
 We can provide the following
condition that uses relation liftings, whose definition is found~\cite{Jacobs04c}.
\begin{lem}[From~\cite{Jacobs04c}]
\label{lemma:distributiveLawForPowersetMonad}
Let $F\colon\Sets\rightarrow\Sets$ be a functor that preserves weak
pullbacks. Then there exists a distributive law $\lambda\colon F\pow
\Rightarrow \pow F$ given by
\begin{displaymath}
 \lambda_{X}(u)
 = 
 \bigl\{\, v\in FX\mid (v,u)\in\Relf(\in_{X})\,\bigr\}\enspace,
\end{displaymath}
 where $u\in F\pow X$ and $\Relf(\in_{X})\subseteq FX\times F\pow X$ is
 the $F$-relation lifting of the membership relation $\in_{X}$. \qed
\end{lem}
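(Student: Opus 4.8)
The plan is to carry out the entire verification inside $\Rel\cong\Kleisli{\pow}$ using the calculus of relation lifting. First I would recall, from \cite{Jacobs04c}, the properties that $\Relf$ enjoys when $F$ preserves weak pullbacks: it is monotone; it commutes with converse, $\Relf(R^{\op})=\Relf(R)^{\op}$; it sends the graph of a function to the graph of its $F$-image, $\Relf(\mathrm{Gr}(f))=\mathrm{Gr}(Ff)$, so in particular $\Relf(\mathrm{Eq}_{X})=\mathrm{Eq}_{FX}$; and, crucially, it preserves relational composition, $\Relf(S\co R)=\Relf(S)\co\Relf(R)$. This last property is the only place where the weak-pullback hypothesis is genuinely needed, and it is the real content behind the lemma; the remaining properties hold for an arbitrary $F$.

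Next I would translate $\lambda_{X}$ into a relation. Identifying a function $g\colon A\to\pow B$ with the relation $\{(a,b)\mid b\in g(a)\}\colon A\to B$, the defining formula exhibits $\lambda_{X}\colon F\pow X\to\pow FX$ as precisely the relation $\Relf(\ni_{X})\colon F\pow X\to FX$, where $\ni_{X}=(\in_{X})^{\op}$. Under this dictionary, composition in $\Kleisli{\pow}$ becomes relational composition, and pre- or post-composing with a direct image $\pow k$ becomes relational composition with the graph $\mathrm{Gr}(k)$. With this, naturality of $\lambda$ and the two coherence diagrams of Lemma~\ref{lemma:liftingIffDistrLaw} all turn into identities between relations (of types $F\pow X\to FY$, $FX\to FX$, and $F\pow\pow X\to FX$ respectively) that I can discharge uniformly.

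Finally I would discharge each identity by peeling off one application of $\Relf$ and reducing to an elementary fact about the membership relation. For naturality, $\Relf(\mathrm{Gr}(-))=\mathrm{Gr}(F(-))$ together with composition-preservation reduces the square to the relational naturality of membership, $\ni_{Y}\co\pow f=f\co\ni_{X}$. For the unit law, the same moves reduce it to $\ni_{X}\co\mathrm{Gr}(\eta^{\pow}_{X})=\mathrm{Eq}_{X}$ (because $x'\in\{x\}$ iff $x'=x$), after which $\Relf(\mathrm{Eq}_{X})=\mathrm{Eq}_{FX}$ yields exactly $\eta^{\pow}_{FX}$. For the multiplication law, I would show that both composites in the diagram equal $\Relf$ applied to the single relation on $\pow\pow X\times X$ given by ``$x\in\bigcup\mathcal{S}$''; this amounts to the membership identity $\ni_{X}\co\mathrm{Gr}(\mu^{\pow}_{X})=\ni_{X}\co\ni_{\pow X}$, expressing that membership is compatible with unions. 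Thus the whole argument is elementary once the composition-preservation of $\Relf$---the step that consumes the weak-pullback assumption, and hence the only real obstacle---is in hand.
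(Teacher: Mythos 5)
Your proposal is correct, and it is essentially the proof of the cited source: the paper itself gives no argument for this lemma (the statement is imported from~\cite{Jacobs04c} with the qed symbol attached), and the proof there proceeds exactly as you do, by reading $\lambda_{X}$ as the relation $\Relf(\ni_{X})$ in $\Kleisli{\pow}\cong\Rel$ and invoking the standard properties of relation lifting---preservation of graphs, converse, and equality (valid for any $F$) together with preservation of relational composition, the one step that consumes the weak-pullback hypothesis. Your reductions of naturality, the unit law, and the multiplication law to the elementary membership identities $\ni_{Y}\co\mathrm{Gr}(\pow f)=\mathrm{Gr}(f)\co\ni_{X}$, $\ni_{X}\co\mathrm{Gr}(\eta^{\pow}_{X})=\mathrm{Eq}_{X}$, and $\ni_{X}\co\mathrm{Gr}(\mu^{\pow}_{X})=\ni_{X}\co\ni_{\pow X}$ are all sound as stated.
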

\noindent In fact, the functor $\oF:\Rel\to\Rel$ induced by this distributive
law carries an arrow $R: X\to Y$ in $\Kleisli{\pow}$---which is a
binary relation between $X$ and $Y$---to its $F$-relation lifting $\Relf(R)$. That is, 
\begin{equation}\label{equation:liftingOfFIsRelationLifting}
 \oF R = \Relf(R)\quad :\; FX\longrightarrow F Y
\end{equation}
in $\Kleisli{\pow}\cong\Rel$.

Now let us consider a monad $T$ which is not $\pow$. When a monad $T$ is \emph{commutative} and a functor $F$
is \emph{shapely}, we can provide a canonical distributive law. 
The class of such monads and functors is wide and all the examples in
this paper are contained.
\begin{enumerate}[$\bullet$]
 \item A \emph{commutative} monad~\cite{Kock70a} is 
       intuitively a monad whose corresponding algebraic theory 
       has only commutative operators. We exploit the fact that 
       a commutative monad is equipped with an arrow 
       called \emph{double strength}
       \begin{displaymath}
	\dst_{X,Y}: TX\times TY \longrightarrow T(X\times Y)
       \end{displaymath}
       for any sets $X$ and $Y$; the double strength must be compatible with the monad
       structure of $T$ in an obvious way.

       Our three examples of monads are all commutative, with the
       following double strengths.
       \begin{equation}\label{equation:doubleStrengthForExamplesOfMonads}
       \begin{array}{llll}
       \dst_{X,Y}^{\lift}(u,v) & = & \left\{\begin{array}{ll}
       (u,v) & \text{ if } u \in X \text{ and } v \in Y,\\
       \bot & \text { if } u = \bot \text{ or } v = \bot, \\
       \end{array}\right.\qquad
       \\
       \dst_{X,Y}^{\pow}(u,v) & = & u \times v\enspace,
       \\
       \dst_{X,Y}^{\dist}(u,v) & = & 
       \lambda (x,y).\;\, u(x)\cdot v(y)\enspace.
       \end{array}
       \end{equation}

 \item The family of \emph{shapely
       functors}~\cite{Jay95}\footnote{Shapely functors here are called
       \emph{polynomial} functors by some authors, although other authors
       allow infinite powers or the powerset construction.}
       on $\Sets$ is defined inductively by the following BNF notation:
\begin{equation*}
  F ::= \id\mid {\Sigma}\mid F_{1}\times F_{2}\mid 
   {\textstyle\coprod_{i\in I}}F_{i}\enspace,
\end{equation*}
where $\Sigma$ denotes the constant functor into an arbitrary set
       $\Sigma$.
       Notice that taking infinite product is not allowed, nor 
       exponentiation to the power of an infinite set. This is 
       in order to ensure that we find an initial $F$-algebra
       as a suitable $\omega$-colimit---see
       Proposition~\ref{LemInitialAlgebraViaInitialSequence}.
\end{enumerate}


\begin{lem}
\label{lemma:distrLawBetweenCommMonadAndShapelyFunctor}
Let $T\colon\Sets\rightarrow\Sets$ be a commutative monad, and
$F\colon\Sets\rightarrow\Sets$ a shapely functor. Then there
is a distributive law $\lambda\colon FT \Rightarrow TF$.
\end{lem}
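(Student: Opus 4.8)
The plan is to construct $\lambda$ by structural induction on the BNF grammar defining shapely functors, and then to verify naturality together with the two distributive-law axioms of Lemma~\ref{lemma:liftingIffDistrLaw} by a parallel induction.

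In the base cases I take $\lambda_X = \id_{TX}$ when $F = \id$, and $\lambda_X = \eta_\Sigma \colon \Sigma \to T\Sigma$ (constant in $X$) when $F = \Sigma$ is a constant functor. For a product $F = F_1 \times F_2$, given the inductively constructed laws $\lambda^1, \lambda^2$, I set
\[
 \lambda_X \;=\; \dst_{F_1 X,\, F_2 X} \co (\lambda^1_X \times \lambda^2_X)
 \colon\; F_1 TX \times F_2 TX \longrightarrow T(F_1 X \times F_2 X),
\]
using the double strength supplied by commutativity of $T$ to merge the two separate $T$-branchings into one. For a coproduct $F = \coprod_{i \in I} F_i$ I use the cotuple
\[
 \lambda_X \;=\; \bigl[\, T\kappa_i \co \lambda^i_X \,\bigr]_{i\in I}
 \colon\; \coprod_{i \in I} F_i TX \longrightarrow T\Bigl(\coprod_{i \in I} F_i X\Bigr),
\]
where $\kappa_i$ is the $i$-th coproduct injection and each $\lambda^i$ comes from the induction hypothesis.

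Naturality of $\lambda$ in each case is immediate from naturality of its constituents ($\eta$, $\dst$, the injections, and the $\lambda^i$) together with functoriality, so I would dispatch it quickly. The unit axiom $\lambda_X \co F\eta_X = \eta_{FX}$ is also easy: in the product case it collapses, after applying the induction hypotheses $\lambda^i_X \co F_i\eta_X = \eta_{F_i X}$, to the coherence $\dst_{A,B} \co (\eta_A \times \eta_B) = \eta_{A \times B}$ of the double strength; in the coproduct case it follows from naturality of $\eta$ across the injections, and in the base cases it is trivial.

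The main obstacle is the multiplication axiom $\lambda_X \co F\mu_X = \mu_{FX} \co T\lambda_X \co \lambda_{TX}$ in the product case. Here I would reduce both sides, via the induction hypotheses for $\lambda^1$ and $\lambda^2$ and via naturality of $\dst$, to a single instance of the ``Fubini'' coherence
\[
 \dst_{A,B} \co (\mu_A \times \mu_B) \;=\; \mu_{A \times B} \co T(\dst_{A,B}) \co \dst_{TA,\, TB},
\]
which is exactly the compatibility of the double strength with $\mu$ that holds because $T$ is commutative (the two iterated strengths coincide). This is the only place where commutativity is genuinely used, and disentangling the interleaving of $\dst$, $\mu$, and the $T\lambda^i$ so that the two sides match this coherence is the most delicate computation. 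The coproduct case is again routine from the universal property of the coproduct together with the fact that $\mu$ and $T(-)$ commute with the injections.
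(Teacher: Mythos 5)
Your construction is identical to the paper's proof: the same four inductive cases (identity, constant via $\eta_{\Sigma}$, product via $\dst \co (\lambda^{1}\times\lambda^{2})$, coproduct via the cotuple $[T\kappa_{i}\co\lambda^{i}]_{i\in I}$), so this is essentially the same argument. The paper dismisses the verification of naturality and monad-compatibility as ``straightforward to check,'' while you correctly pinpoint the Fubini-type coherence of $\dst$ with $\mu$ as the single place where commutativity of $T$ is genuinely used --- a worthwhile clarification, but not a different proof.
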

\proof
The construction of a distributive law is done inductively on the
construction of shapely $F$.
\begin{enumerate}[$\bullet$]
\item If $F$ is the identity functor, then the $\lambda$ is the
identity natural transformation $T\Rightarrow T$.

\item If $F$ is a constant functor, say $X\mapsto \Sigma$, then 
$\lambda$ is the unit $\eta_{\Sigma}\colon \Sigma\rightarrow T\Sigma$ at $\Sigma\in\Sets$.

\item If $F = F_{1}\times F_{2}$ we use induction in the form of
distributive laws $\lambda^{F_{i}}\colon F_{i}T \Rightarrow TF_{i}$ for 
$i\in\{1,2\}$ to form the composite:
$$
\xymatrix@1@C+2em{
  {F_{1}TX\times F_{2}TX}
               \ar[r]^{\lambda^{F_1}\times\lambda^{F_2}}
 &
  {TF_{1}X \times TF_{2}X}
               \ar[r]^{\mathsf{dst}}
 & 
  {T(F_{1}X\times F_{2}X)\enspace.}
}
$$

\item If $F$ is a coproduct $\coprod_{i\in I}F_{i}$ then we use
laws $\lambda^{F_i}\colon F_{i}T \Rightarrow TF_{i}$ for 
$i\in I$ in:
\begin{displaymath}
 \diagram
 {\coprod_{i\in I}F_{i}(TX)}\xto[rrr]^-{[T(\kappa_{i}) \co \lambda^{F_i}]_{i\in I}} 
   & & & {T(\coprod_{i\in I}F_{i}X)\enspace.}
 \enddiagram
\end{displaymath}
\end{enumerate}
It is straightforward to check that such $\lambda$ is natural and
compatible with the monad structure.
\qed

We have provided some \emph{sufficient} conditions for a distributive
law to exist, that is, for a functor $F$ to be lifted to $\KT$.  This does not mean the results
in the sequel hold exclusively for commutative monads and shapely functors.

\subsection{Order-enriched structures of Kleisli categories}
\label{subsection:kleisliIsCppoEnriched}
The notion of branching naturally involves a partial order:
one branching is bigger than another if the former offers ``more
possibilities'' than the latter. 
Formally, this order
appears as the \emph{$\Cppo$-enriched structure} of a Kleisli category.
 It plays an important role in the initial algebra-final coalgebra
coincidence in Section~\ref{subsection:finalCoalgebraInKleisli}.

A \emph{$\Cppo$-enriched category} $\C$ is a category where:
\begin{enumerate}[$\bullet$]
 \item Each homset $\C (X,Y)$ carries a partial order   
       $\sqsubseteq$ as in
       \begin{displaymath}
	\vcenter{\xymatrix@1@C+3em{
         {X}
               \ar@/^/[r]^{g}
               \ar@/_/[r]_{f}
               \ar@{}[r]|*[@u]{\sqsubseteq}
	&
         {Y}
	}}
       \end{displaymath}
       which makes $\C (X,Y)$ an $\omega$-cpo with a bottom. This means:
       \begin{enumerate}[-]
	\item for an increasing $\omega$-chain of arrows from $X$ to $Y$,
	      \begin{displaymath}
	      f_{0}\sqsubseteq f_{1}\sqsubseteq \dotsc\quad:\quad X\longrightarrow Y\enspace,
	      \end{displaymath}
              there exists its join $\bigsqcup_{n<\omega} f_{n}: X\to Y$;
	\item for any $X$ and $Y$ there exists a bottom arrow
	      $\bot_{X,Y}: X\to Y$ which is the minimum in $\C(X,Y)$.
       \end{enumerate}
 \item Moreover, composition of arrows is continuous as a function
       $\C(X,Y)\times\C(Y,Z)\to\C(X,Z)$.
       This means that the following joins are preserved:\footnote{This
       component-wise preservation of joins is equivalent to 
       the continuity of the composition function. See~\cite[Lemma~3.2.6]{AbramskyJ94}.}
	\begin{displaymath}
	 \begin{array}{rclcrcl}
	 g\co \left(\bigsqcup_{n<\omega}f_{n}\right)
	 & = &
	 \bigsqcup_{n<\omega}(g\co f_{n})
	 & \mbox{ and } &
	 \left(\bigsqcup_{n<\omega}f_{n}\right) \co h
	 & = &
	 \bigsqcup_{n<\omega}(f_{n}\co h)\enspace.
	 \end{array}
	\end{displaymath}
       Note that composition need not preserve bottoms (i.e.\ it is not necessarily \emph{strict}).
\end{enumerate}
This is in fact an instance of a more general notion of
\emph{$\V$-enriched categories} where $\V$ is the category $\Cppo$ of
pointed (i.e.\ with $\bot$) cpo's and continuous (but not necessarily strict) functions.
See~\cite{Lawvere73,Kelly82,Borceux94} for more details on enriched
category theory, and~\cite{AbramskyJ94} on cpo's and domain theory.

\begin{lem}\label{lemma:KleisliIsCppoEnriched}
 For our three examples $\lift$, $\pow$ and $\dist$ of a monad $T$, the
 Kleisli category $\KT$ is $\Cppo$-enriched. Moreover, composition of arrows
 is left-strict: $\bot\co f =\bot$.
\end{lem}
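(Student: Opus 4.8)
The plan is to verify, for each of the three monads $\lift$, $\pow$, $\dist$, that every Kleisli homset carries an $\omega$-cpo-with-bottom structure and that composition is continuous and left-strict. Since an arrow $X\to Y$ in $\KT$ is a function $X\to TY$ in $\Sets$, I would define the order \emph{pointwise} from an order on each $TY$: set $f\sqsubseteq g$ iff $f(x)\sqsubseteq g(x)$ in $TY$ for all $x\in X$. Thus the whole statement reduces to exhibiting, for each monad, a suitable $\omega$-cpo-with-bottom structure on the sets $TY$ and checking that Kleisli composition respects it. First I would record the three orders explicitly: for $\lift Y=1+Y$, the flat order with $\bot$ below everything and all genuine elements of $Y$ incomparable; for $\pow Y$, inclusion $\subseteq$, with bottom $\emptyset$; for $\dist Y$, the pointwise order $d\sqsubseteq d'$ iff $d(y)\le d'(y)$ for all $y$, with bottom the zero subdistribution $[y\mapsto 0]$.

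Next I would check the $\omega$-cpo-with-bottom axioms for each. For $\pow$, joins of increasing chains are unions, and these exist unconditionally, so $\pow Y$ is even a complete lattice. For $\dist$, the join of an increasing chain $d_0\sqsubseteq d_1\sqsubseteq\cdots$ is the pointwise supremum $y\mapsto \sup_n d_n(y)$; here the one genuinely analytic point arises, namely that this pointwise supremum is again a \emph{sub}distribution, i.e.\ that $\sum_y \sup_n d_n(y)\le 1$. This follows because for each finite partial sum one can exchange the finite sum with the monotone limit, so $\sum_y \sup_n d_n(y)=\sup_n\sum_y d_n(y)\le 1$; this is exactly where using $\dist$ rather than $\dist_{=1}$ matters, since the supremum need not stay on the sphere of total mass $1$. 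For $\lift$ the flat order trivially has all chain-joins (an increasing chain is eventually constant) and bottom $\bot$.

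Then I would verify continuity of composition. Recall that $g\co f$ in $\KT$ is $\mu_Z\circ Tg\circ f$. Because the order is pointwise in the source variable, it suffices to check continuity in each argument separately, and by the pointwise definition this reduces to continuity of the monad operations $Tg$ and $\mu$ with respect to the chosen orders on the $TY$. For $\pow$ this amounts to the facts that direct image and union preserve directed unions; for $\dist$ it amounts to interchanging $\sup_n$ with the (at most countable) sums appearing in $\dist g$ and in $\mu^{\dist}_Z(\xi)=\lambda z.\sum_{d}\xi(d)\cdot d(z)$, again justified by monotone convergence; for $\lift$ it is immediate from the flat order. Left-strictness $\bot\co f=\bot$ I would check directly from the definition of composition: post-composing with the bottom arrow $\bot_{Y,Z}$ sends everything to the bottom of $TZ$ ($\emptyset$, the zero subdistribution, or $\bot$ respectively), and $\mu$ preserves these bottoms, so the composite is $\bot_{X,Z}$.

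The main obstacle is the subdistribution case: establishing that suprema of chains remain subdistributions and that composition commutes with these suprema both hinge on the same monotone-convergence interchange of a supremum with a countable sum, so I would isolate that interchange lemma once and reuse it. The $\pow$ and $\lift$ cases are routine by comparison. Note that left-strictness holds on the left but not necessarily on the right, consistent with the remark that composition need not be strict, so I would take care to verify only $\bot\co f=\bot$ and not claim $f\co\bot=\bot$.
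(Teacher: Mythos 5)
Your proposal is correct and follows exactly the paper's route: the same three orders on $TY$ (flat, inclusion, pointwise), the pointwise order on homsets, and then verification of continuity and left-strictness of Kleisli composition, which the paper itself dismisses as ``laborious but straightforward.'' In fact you supply the one detail the paper omits---the monotone-convergence interchange of suprema with countable sums needed for $\dist$, which is also precisely where the sub-distribution (rather than distribution) monad is required---so your write-up is, if anything, more complete than the paper's.
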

\noindent The left-strictness of composition will be necessary later.
\proof
 Notice first that a set $TY$ for $T\in\{\lift,\pow,\dist\}$ carries a cpo
 structure with $\bot$. The set $\lift Y=\{\bot\}+Y$ carries the flat
 order with a bottom:
 \begin{displaymath}
  \xymatrix@R=.8em@C=.2em{
   {y}
  &&
   {y'}
  &&
   {y''}
  &&
   {\cdots}
  \\
  &&&
   {\bot}
           \ar@{-}[lllu]
           \ar@{-}[lu]
           \ar@{-}[ru]
}
 \end{displaymath}
 embodying the idea that $\bot$ denotes non-termination or \emph{deadlock}---in contrast to
 $\checkmark$ for successful termination.
 The set $\pow Y$ carries an inclusion order; in $\dist Y$ we define
 $d\sqsubseteq e$ if $d(y)\leq e(y)$ for each $y\in Y$.
 The bottom element in $\dist Y$ is the zero distribution $[ y\mapsto 0]$: this belongs
 to the set $\dist Y$ because $\dist$ is the \emph{sub}-distribution monad.

 The cpo structure of a homset $\KT(X, Y)$ comes from that of
 $TY$ in a pointwise manner: 
 \begin{displaymath}
	\vcenter{\xymatrix@1@C+3em{
         {X}
               \ar@/^/[r]^{g}
               \ar@/_/[r]_{f}
               \ar@{}[r]|*[@u]{\sqsubseteq}
	&
         {Y}
	}}\qquad \text{if and only if}\qquad
        \forall x\in X.\;\; f(x) \sqsubseteq_{TY} g(x)\enspace.
 \end{displaymath}
 It is  laborious but straightforward to show that composition in $\KT$ is continuous and 
 left-strict.
 \qed

\vskip 1em

We are concerned with coalgebras $X\to \oF X$ in the category $\KT$, which we assume
is $\Cppo$-enriched. Hence it
comes natural to require that functor $\oF$ is somehow compatible with
the $\Cppo$-enriched structure of $\KT$. The obvious choice is
to require that $\oF$ is a \emph{$\Cppo$-enriched functor} (see
e.g.~\cite{Borceux94}), i.e.\  $\oF$ is \emph{locally
continuous}. It means that for an increasing $\omega$-chain $f_{n}: X\to
Y$, we have
  \begin{displaymath}
   \oF(\bigsqcup_{n<\omega} f_{n}) = \bigsqcup_{n<\omega} (\oF f_{n})\enspace.
  \end{displaymath}
This is indeed the assumption chosen in axiomatic domain theory. We will
come back to this point later in
Section~\ref{subsection:relatedWorkAxiomaticDomainTheory}.
However, for our later purpose, we only need the weaker condition of
\emph{local monotonicity}: $f\sqsubseteq g$ implies $\oF f \sqsubseteq
\oF g$.

For a monad $T=\{\lift,\pow,\dist\}$ and a shapely functor $F$ (recall
Lemma~\ref{lemma:distrLawBetweenCommMonadAndShapelyFunctor}), the lifted
$\oF$ is indeed locally continuous. We emphasize again that this does
not mean our  results in Section~\ref{section:finalCoalgebraInKleisliCat} hold exclusively for shapely functors.
\begin{lem}
Let $F$ be a shapely functor and $T \in \{\lift, \pow, \dist\}$.
The lifting $\oF:\KT\to\KT$ induced by
 Lemma~\ref{lemma:distrLawBetweenCommMonadAndShapelyFunctor}
is locally continuous. 
\end{lem}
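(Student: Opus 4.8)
The plan is to argue by structural induction on the shapely functor $F$, following exactly the inductive construction of $\lambda$ in Lemma~\ref{lemma:distrLawBetweenCommMonadAndShapelyFunctor}. Recall from the discussion after Lemma~\ref{lemma:liftingIffDistrLaw} that, for $f\colon X\to Y$ in $\KT$ (a function $f\colon X\to TY$ in $\Sets$), the arrow $\oF f\colon FX\to FY$ is the function $\lambda_Y\co Ff\colon FX\to TFY$. By Lemma~\ref{lemma:KleisliIsCppoEnriched} the join of an increasing chain $f_0\sqsubseteq f_1\sqsubseteq\dotsc$ in $\KT(X,Y)$ is computed pointwise in the cpo $TY$, so it suffices to verify, pointwise in $TFY$, that $\oF\bigl(\bigsqcup_n f_n\bigr)=\bigsqcup_n\oF f_n$; local monotonicity then follows a fortiori.

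For the two base cases there is nothing to do. If $F=\id$ then $\lambda$ is the identity and $\oF f=f$, so continuity is immediate. If $F$ is the constant functor at $\Sigma$ then $\oF f=\eta_\Sigma=\id_\Sigma$ in $\KT$ independently of $f$; hence the chain $(\oF f_n)_n$ is constant and $\oF\bigl(\bigsqcup_n f_n\bigr)=\id_\Sigma=\bigsqcup_n\oF f_n$.

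The substance is in the two inductive steps. For a product $F=F_1\times F_2$, unfolding the definition of $\lambda$ shows that, as a function $F_1X\times F_2X\to T(F_1Y\times F_2Y)$,
\begin{displaymath}
 (\oF f)(v_1,v_2)=\dst_{F_1Y,F_2Y}\bigl((\overline{F_1}f)(v_1),(\overline{F_2}f)(v_2)\bigr)\enspace.
\end{displaymath}
Applying the induction hypothesis to $\overline{F_1}$ and $\overline{F_2}$ rewrites each component as a join over $n$, and the whole expression collapses to $\bigsqcup_n(\oF f_n)(v_1,v_2)$ provided the double strength $\dst$ is continuous; since $\dst$ is monotone it suffices to check continuity in each argument separately. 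This last point I would verify by hand for each monad using the formulas in~(\ref{equation:doubleStrengthForExamplesOfMonads}): for $\lift$, increasing chains in the flat order are eventually constant; for $\pow$, the Cartesian product distributes over directed unions; and for $\dist$, the map $\dst^\dist(u,v)=\lambda(x,y).\,u(x)\cdot v(y)$ preserves suprema in each argument because multiplication by a fixed nonnegative real is continuous. For a coproduct $F=\coprod_{i\in I}F_i$, the arrow $\oF f$ restricts on the summand $F_iX$ to $J\kappa_i\co\overline{F_i}f$; since $\KT$ inherits coproducts from $\Sets$ (as $J$ is a left adjoint), joins of chains out of $\coprod_i F_iX$ are taken summandwise, so the induction hypothesis on each $\overline{F_i}$ together with continuity of composition (Lemma~\ref{lemma:KleisliIsCppoEnriched}) yields continuity of the copairing.

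The main obstacle is the product case, and within it the continuity of $\dst$ for $T=\dist$. Here one must additionally confirm that the pointwise supremum of an increasing chain of subdistributions is again a subdistribution, which is exactly where monotone convergence enters: $\sum_y\sup_n d_n(y)=\sup_n\sum_y d_n(y)\le 1$. Once $\dst$ is known to be continuous, the remaining bookkeeping in both inductive cases is routine, amounting to commuting the pointwise joins through the canonical Kleisli structure.
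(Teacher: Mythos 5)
Your proof is correct and takes essentially the same approach as the paper's: structural induction on the shapely functor, with trivial base cases, the product case reduced to continuity of the double strength $\dst$ (which the paper dismisses as ``easy to check'' against~(\ref{equation:doubleStrengthForExamplesOfMonads}) and you verify explicitly for each of $\lift$, $\pow$, $\dist$), and the coproduct case reduced to summandwise joins. The only cosmetic difference is in the coproduct step, where you invoke continuity of Kleisli composition with $J\kappa_{i}$ while the paper cites continuity of $T\kappa_{j}$ in $\Sets$; these are the same fact, since pre-composing with $J\kappa_{i}$ in $\KT$ amounts to post-composing with $T\kappa_{i}$ in $\Sets$.
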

\proof
 By induction on the construction of shapely functors. 
\begin{enumerate}[$\bullet$]
 \item $F=\id$, the identity functor. Then $\oF=\id$ which
       satisfies the condition.
 \item $F=\Sigma$, a constant functor. Then $\oF$ maps every
       arrow to the identity map on $\Sigma$ in $\KT$.
       This is obviously locally continuous.
 \item $F=F_{1}\times F_{2}$. 
       First notice that, for $f: X\to Y$ in $\KT$, 
       we obtain $\oF f$ as the following composite in $\Sets$.
       \begin{displaymath}
	\xymatrix@C+5em@R=1em{
	 {F_{1}X\times F_{2}X}
	             \ar[r]^{\overline{F_{1}} f\times \overline{F_{2}}f}
	             \ar[rd]_{\oF f}
        &
         {TF_{1}Y\times TF_{2}Y}
                     \ar[d]^{\dst_{F_{1}Y, F_{2}Y}}
        \\
        &
         {T(F_{1}Y\times F_{2}Y)}
	}
       \end{displaymath}
       Because the order in $\KT(FX, FY)$ is pointwise, 
       it suffices to show the following:
       $\dst: TX\times TY\to T(X\times Y)$ is
       a continuous map between cpo's. 
       It is easy to check that this is indeed the case. See (\ref{equation:doubleStrengthForExamplesOfMonads}).
 \item $F=\coprod_{j\in J}F_{j}$. 
       For $f: X\to Y$ in $\KT$, we obtain the map $\oF f$ as
       the composite $[T\kappa_{j}]_{j\in J}\co \coprod_{j\in J}\Kleisli{F_{j}}(f)$ in $\Sets$.
       Since the order on the homset is pointwise, it suffices to show that each $T\kappa_{j}: TF_{j}Y\to T(\coprod_{j\in J}F_{j}Y)$ is
       continuous. This is easy. \qed
\end{enumerate}

\section{Final coalgebra in a Kleisli category}\label{section:finalCoalgebraInKleisliCat} 
In this section we shall prove our main technical result: the initial
$F$-algebra in $\Sets$ yields the final $\oF$-coalgebra in $\KT$.  It
happens in the following two steps: first, the initial algebra in
$\Sets$ is lifted to the initial algebra in $\KT$; second we
have the initial algebra-final coalgebra coincidence in $\KT$. For the
latter we use the classical result~\cite{SmythP82} of limit-colimit
coincidence. This is where the $\Cppo$-enriched structure of $\KT$ plays
a role.

In the proof we use two standard constructions: initial/final
sequences~\cite{AdamekK79} and limit-colimit
coincidence~\cite{SmythP82}.  The reader who is not familiar with these
constructions is invited to look at
Appendices~\ref{appendix:preliminariesInitialFinalSequence}
and~\ref{appendix:preliminariesLimitColimitCoincidence} where we briefly
recall them.

 \begin{rem}
 The proof of our main theorem (Theorem~\ref{theorem:main}) can be
 simplified if we suitably strengthen the assumptions. First, if we
 assume local \emph{continuity} of the lifted functor $\oF$ (instead of local
 \emph{monotonicity} that is assumed in our main theorem), then the
 initial algebra-final coalgebra coincidence  follows from a
 standard result in axiomatic domain theory; see
 Section~\ref{subsection:relatedWorkAxiomaticDomainTheory}.  Furthermore,
 for the special case $T=\pow$ in which case $\Kleisli{\pow}\cong\Rel$,
 the initial algebra-final coalgebra coincidence is almost obvious due
 to the duality $\Rel\cong\Rel^{\op}$; see
 Section~\ref{subsection:simplerProofInRel}.
 \end{rem}

\subsection{The initial algebra in \texorpdfstring{$\Sets$}{Sets} is
 the final coalgebra in \texorpdfstring{$\Kleisli{T}$}{Kl(T)}}%
\label{subsection:finalCoalgebraInKleisli}
First, it is standard that an initial algebra in $\Sets$ is lifted
to an initial algebra in $\Kleisli{T}$. Such a phenomenon is studied for
instance
in~\cite{Fokkinga94,Pardo01} in the context of combining datatypes
(modeled by an initial algebra) and effectful computations (modeled by
a Kleisli category). For this result we do not need an order structure.
\begin{prop}\label{proposition:initialAlgebraLiftsToKleisli}
 Let $T$ be a monad and $F$ be a endofunctor, both on a category $\C$.
 Assume that we have a distributive law $FT\Rightarrow TF$---or
 equivalently, we have a lifting $\oF$ on $\KT$. If $F$ has an initial
 algebra $\alpha: FA\iso A$ in $\C$, then
  \begin{displaymath}
   J\alpha = \eta_{A}\co \alpha \; :\quad \oF A\longrightarrow A\qquad \text{in $\KT$}
  \end{displaymath}
 is an initial $\oF$-algebra. Here $J$ is the canonical Kleisli left
 adjoint as in~(\ref{diagram:canonicalKleisliAdjunction}).
\end{prop}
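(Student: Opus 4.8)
The plan is to show that the lifted algebra $J\alpha : \oF A \to A$ in $\KT$ satisfies the universal property of an initial algebra, namely that for any $\oF$-algebra $b : \oF B \to B$ in $\KT$ there exists a unique $\oF$-algebra morphism $f : A \to B$. Recall that an arrow $A \to B$ in $\KT$ is a function $A \to TB$ in $\C$, and $J\alpha = \eta_A \co \alpha$ is the image under the left adjoint $J$ of the initial $F$-algebra $\alpha : FA \iso A$. I would first unwind what it means for $f : A \to B$ to be an $\oF$-algebra morphism: this is the commutativity in $\KT$ of the square relating $J\alpha$, $b$, $\oF f$ and $f$, i.e.\ $f \co J\alpha = b \co \oF f$.

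The key observation is that $J$, being a left adjoint, preserves colimits, and an initial object is the colimit of the empty diagram; but more to the point we want $J$ to send the \emph{initial algebra} to an \emph{initial algebra}. The cleanest route is a general categorical lemma: given a lifting $\oF$ of $F$ along $J$ (so $\oF J = J F$ by the defining diagram~(\ref{diagram:definitionOfLiftingOfF})), the functor $J$ induces a functor $\Alg{F} \to \Alg{\oF}$ between the categories of algebras, sending an $F$-algebra $g : FC \to C$ to the $\oF$-algebra $Jg : \oF JC = JFC \to JC$. First I would verify this is well-defined on morphisms, using functoriality of $J$ and the identity $\oF J = JF$. Then I would show that this induced functor is itself a left adjoint, or at least that it preserves initial objects, by transposing the adjunction $J \dashv K$ to the level of algebra categories.

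Concretely, to prove $J\alpha$ initial I would take an arbitrary $\oF$-algebra $b : \oF B \to B$ in $\KT$ and transpose it across the adjunction: since $\oF B = JF B$ (on objects $\oF$ agrees with $F$), the arrow $b : JFB \to B$ in $\KT$ corresponds under $J \dashv K$ to an arrow $\widehat{b} : FB \to KB$ in $\C$. The point is that $KB$ (which is $TB$) carries an $F$-algebra structure: one assembles, from $\widehat{b}$ together with the distributive law $\lambda$ and the monad multiplication $\mu$, a map $F(TB) \to TB$, turning $TB$ into an $F$-algebra in $\C$. By initiality of $\alpha$ in $\Sets$ there is then a unique $F$-algebra morphism $g : A \to TB$, and I would check that the transpose of $g$ back into $\KT$ is precisely the required unique $\oF$-algebra morphism $A \to B$.

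The main obstacle I anticipate is the bookkeeping in the transposition step: verifying that the $F$-algebra structure built on $TB$ from $b$, $\lambda$ and $\mu$ is genuinely an $F$-algebra, and—more delicately—that the bijection between $\oF$-algebra morphisms $A \to B$ in $\KT$ and $F$-algebra morphisms $A \to TB$ in $\C$ really holds, so that uniqueness transfers both ways. This amounts to chasing the distributive-law compatibility conditions of Lemma~\ref{lemma:liftingIffDistrLaw} together with the monad laws, ensuring the naturality squares commute; conceptually it is the statement that the Kleisli adjunction lifts to an adjunction between algebra categories, and the diagram chase, while routine, is where all the coherence must be checked. Everything else—existence and uniqueness of $g$—is then immediate from initiality of $\alpha$ in $\C = \Sets$.
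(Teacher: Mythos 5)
Your proof is correct, and conceptually it follows the same route as the paper: the paper's entire proof is a citation of \cite[Theorem~2.14]{HermidaJ98}, which says that a distributive law lifts the Kleisli adjunction $J\dashv K$ to an adjunction between $\Alg{F}$ and $\Alg{\oF}$, followed by the remark that left adjoints preserve initial objects. What you do differently is unfold that citation into a self-contained verification of exactly the fragment needed. Your construction of the $F$-algebra on $TB$ is the right adjoint of the lifted adjunction applied to $b$, namely $\mu_{B}\co Tb\co\lambda_{B}\colon F(TB)\to TB$, and your ``bookkeeping'' step is the hom-set bijection of that adjunction restricted to the initial algebra. Two remarks that make your anticipated obstacle smaller than you fear. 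First, there is nothing to check in ``verifying that the structure on $TB$ is genuinely an $F$-algebra'': algebras for a mere endofunctor satisfy no equations, so any map $F(TB)\to TB$ qualifies. Second, the bijection between $\oF$-algebra morphisms $J\alpha\to b$ in $\KT$ and $F$-algebra morphisms $\alpha\to\bigl(TB,\;\mu_{B}\co Tb\co\lambda_{B}\bigr)$ in $\C$ is the identity on underlying arrows: for $f\colon A\to TB$, the composite $f\co J\alpha$ in $\KT$ is $\mu_{B}\co Tf\co\eta_{A}\co\alpha = f\co\alpha$ in $\C$ by naturality of $\eta$ and the monad unit law, while $b\co\oF f$ in $\KT$ is $\mu_{B}\co Tb\co\lambda_{B}\co Ff$ in $\C$; so the algebra-morphism square in $\KT$ is literally the algebra-morphism square in $\C$, and existence and uniqueness transfer on the nose. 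The trade-off is clear: the paper's proof is two lines but leans on an external result, whereas yours is elementary and makes visible why this proposition---unlike the coincidence theorem that follows it---needs no order-enriched structure at all.
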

\noindent We will use an instance of this result for
$\C=\Sets$. 
\proof
It follows from~\cite[Theorem~2.14]{HermidaJ98} that a distributive law
lifts the canonical Kleisli adjunction to an adjunction between
the categories $\Alg{F}$ and $\Alg{\oF}$ of algebras.
\begin{displaymath}
 \vcenter{\xymatrix@C+4em@R=1.5em{
  {\Alg{F}}
              \ar@{->}@/^/[r]^{J'}
              \ar@{->}@/^/[r];[]
              \ar@{}[r]|{\bot}
              \ar[d]
 &
  {\Alg{\oF}}
              \ar[d]
 \\
  {\C}
              \ar@/^/[r]^{J}
              \ar@/^/[r];[]^{K}
              \ar@{}[r]|{\bot}
 &
  {\KT} 
}}
\end{displaymath}
The left adjoint $J'$ preserves the initial object (see
e.g.~\cite{MacLane71}). \qed

Second, we use the initial algebra-final coalgebra coincidence in
$\Kleisli{T}$---which holds in a suitable order-enriched setting---to
identify the final coalgebra in $\Kleisli{T}$. This is our main theorem.
\begin{thm}[Main theorem]\label{theorem:main}
Assume the following:
 \begin{enumerate}[\em(1)]
  \item\label{assumption:KleisliIsCppoEnrichedAndLeftStrict} A monad $T$ on $\Sets$ is such that its Kleisli category
	$\Kleisli{T}$ is $\Cppo$-enriched and composition in
	$\Kleisli{T}$ is left-strict.
  \item For an endofunctor $F$ on $\Sets$, we have a distributive law $\lambda: FT\Rightarrow TF$.
        Equivalently, $F$ has a lifting $\oF$ on
	$\Kleisli{T}$. Moreover, the lifting $\oF$ is locally monotone.
  \item\label{assumption:FPreservesOmegaColimit} The functor $F$
       preserves $\omega$-colimits in $\Sets$, hence has an
	initial algebra via the initial sequence (see Proposition~\ref{LemInitialAlgebraViaInitialSequence}).
 \end{enumerate}
Then
the initial 
$F$-algebra $\alpha: FA\iso A$ yields a final $\oF$-coalgebra in
 $\Kleisli{T}$ by
  \begin{displaymath}
    (J\alpha)^{-1} = J(\alpha^{-1}) = \eta_{FA}\co \alpha^{-1}\; :\quad A\longrightarrow \oF A\qquad \text{in $\KT$}\enspace.
  \end{displaymath}
\end{thm}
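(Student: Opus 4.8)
The strategy is to combine Proposition~\ref{proposition:initialAlgebraLiftsToKleisli} with the classical limit–colimit coincidence of Smyth and Plotkin~\cite{SmythP82}. By Proposition~\ref{proposition:initialAlgebraLiftsToKleisli}, the initial $F$-algebra $\alpha:FA\iso A$ in $\Sets$ lifts to an initial $\oF$-algebra $J\alpha:\oF A\to A$ in $\KT$. Since $\alpha$ is an isomorphism in $\Sets$, $J\alpha$ is an isomorphism in $\KT$, so by Lambek's Lemma the inverse $(J\alpha)^{-1}=J(\alpha^{-1})=\eta_{FA}\co\alpha^{-1}$ is an $\oF$-coalgebra. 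The entire task is therefore to show that this particular coalgebra is \emph{final}. The plan is to prove that in the $\Cppo$-enriched setting the initial $\oF$-algebra and the final $\oF$-coalgebra literally coincide (as the same object with mutually inverse structure maps), from which finality of $(J\alpha)^{-1}$ is immediate.

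\textbf{The initial and final sequences collapse.}
First I would build the initial $\oF$-sequence in $\KT$, starting from the bottom arrow $\bot:0\to\oF 0$, where $0$ is the initial object of $\KT$ (note $J$ preserves the initial object $\emptyset$ of $\Sets$). Because $\KT$ is $\Cppo$-enriched with left-strict composition, each connecting map in this $\omega$-chain is obtained from the previous one by applying $\oF$ and precomposing, and local monotonicity of $\oF$ guarantees the chain behaves well with respect to $\sqsubseteq$. The key structural input is the limit–colimit coincidence (Appendix~\ref{appendix:preliminariesLimitColimitCoincidence}): in a $\Cppo$-enriched category where composition is continuous and left-strict, an $\omega$-chain built from embedding–projection pairs has a colimit that is simultaneously a limit of the associated projection chain. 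The plan is to verify that the arrows of the initial sequence are exactly the embedding halves of such embedding–projection pairs, so that the initial-sequence colimit coincides with the final-sequence limit, and hence the initial algebra carrier coincides with the final coalgebra carrier.

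\textbf{Identifying the colimit with $A$.}
Next I would argue that the apex of this colimit is $A$ itself. Since $F$ preserves $\omega$-colimits in $\Sets$ (assumption~\ref{assumption:FPreservesOmegaColimit}), the initial $F$-algebra is the colimit of the initial $F$-sequence $0\to F0\to F^{2}0\to\cdots$ in $\Sets$ (Proposition~\ref{LemInitialAlgebraViaInitialSequence}). The functor $J:\Sets\to\KT$ is a left adjoint and hence preserves colimits; composing this with the limit–colimit coincidence shows that the $J$-image of the initial $F$-sequence has colimit $JA=A$ in $\KT$, and that this colimit is also the limit of the final $\oF$-sequence. Therefore $A$ carries the final $\oF$-coalgebra, and the canonical final structure map is the inverse of the initial algebra structure map, i.e.\ $(J\alpha)^{-1}$.

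\textbf{Main obstacle.}
The delicate point, and the one I expect to absorb most of the work, is verifying that the initial-sequence connecting maps genuinely form embedding–projection pairs so that the Smyth–Plotkin machinery applies. This requires exhibiting, for each connecting embedding $e_n$, a projection $p_n$ with $p_n\co e_n=\id$ and $e_n\co p_n\sqsubseteq\id$, and checking these identities using left-strictness of composition and local monotonicity of $\oF$ rather than full local continuity. One must also confirm that passing between the covariant colimit cocone and the contravariant limit cone is compatible with $J$, i.e.\ that $J$ sends the $\Sets$-colimit cocone to the embedding cocone whose coincident limit yields finality. Establishing these embedding–projection relations carefully—rather than the routine naturality and bookkeeping—is where the real content of the argument lies.
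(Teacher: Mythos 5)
Your overall strategy is the paper's own: lift the initial sequence along $J$ (which, being a left adjoint, preserves the colimit $A$), show the connecting maps of the lifted sequence are embeddings using left-strictness and local monotonicity, and invoke the Smyth--Plotkin limit--colimit coincidence to turn the colimit into a limit. The embedding--projection verification you single out as the main obstacle is exactly the paper's Lemma~\ref{lemmaForMain:arrowsInInitialSequenceAreEmbeddings}, and your toolkit for it (left-strictness plus local monotonicity, not local continuity) is precisely what that lemma uses.

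There is, however, a genuine gap at the end: you pass from ``$A$ is the limit of the projection chain'' directly to ``therefore $A$ carries the final $\oF$-coalgebra,'' and this passage needs two facts you never establish. First, for the projection chain to be the \emph{final sequence} of $\oF$ at all, the empty set $0$ must be a \emph{final} object of $\KT$ --- which is false in $\Sets$ and in $\KT$ is a separate consequence of left-strictness (the paper's Lemma~\ref{OInFinLem}: any $f\colon X\to 0$ equals $\id_{0}\co f=\bot_{0,0}\co f=\bot_{X,0}$); one must also identify the projections as $\oF^{n}\toTerm$ for $\toTerm$ the unique map to that final object. Second, Proposition~\ref{LemFinalCoalgebraViaFinalSequence} concludes finality only under the hypothesis that $\oF$ \emph{preserves} the $\omega^{\op}$-limit of the final sequence; the paper secures this via Lemma~\ref{lemmaForMain:projectionOfLowerCoconeIsKleisliFAppliedToUpperCone}, namely $(\oF J\alpha_{n})^{P}=\oF\bigl((J\alpha_{n})^{P}\bigr)$, so that the $\oF$-image of the limit cone is itself the projection cone of a colimit of embeddings and hence a limit by Theorem~\ref{limit_colimit_coincidence}. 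Your closing paragraph instead locates the residual work in compatibility with $J$; but that is the easy part ($J$ preserves colimits and initial objects automatically). The compatibilities that actually need checking are between taking projections and applying $\oF$, and between the projection chain and the final sequence --- without them, the finality claim does not follow from the coincidence of carriers.
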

\noindent 
We first present the main line of the proof. Some details are provided in the form of subsequent lemmas.
Note that the assumptions are satisfied by $T\in\{\lift,\pow,\dist\}$ and shapely
$F$; see Lemmas~\ref{lemma:KleisliIsCppoEnriched} and
\ref{lemma:distrLawBetweenCommMonadAndShapelyFunctor}.
\proof
By the assumption~(\ref{assumption:FPreservesOmegaColimit})
 we obtain the initial algebra via the initial sequence in $\Sets$.
\begin{equation}\label{Diagram:initSeqInSets}
{   \vcenter{\xymatrix@C+1em@R-.8em{
   \shifted{-1em}{0em}{\text{In $\Sets$}}
  &&&&
   A
   \shifted{4em}{0em}{\text{(colimit)}}
            \ar@<1.5ex>@/^1pc/[dd]^-{\alpha^{-1}}_-{\cong}
  \\
   {\cdots}
            \ar[r]^{F^{n-1}{\fromInit}}
  &
   F^{n}0
            \ar@/^1pc/[rrru]|{\alpha_{n}}
            \ar@/_1pc/[rrrd]|{F\alpha_{n-1}}
            \ar[r]
  &
   F^{n+1}0
            \ar@/^/[rru]|{\alpha_{n+1}}
            \ar@/_/[rrd]|{F\alpha_{n}}
            \ar[r]
  &
   {\cdots}
  &
  \\
  &&&&
   FA
   \shifted{4em}{0em}{\text{(colimit)}}
            \ar@<1.5ex>@/_1pc/[uu]^-{\alpha}
}}
}\end{equation}
Here $0=\emptyset\in\Sets$ is initial and $\fromInit: 0\to X$ is the
unique arrow from $0$ to an arbitrary $X$.
We apply the functor $J: \Sets\to\Kleisli{T}$ to the whole diagram.
Since $J$ is a left adjoint it preserves colimits: hence the two cocones
in the following diagram are both colimits again.
\begin{equation}\label{Diagram:InitialSeqMappedByJ}
{   \vcenter{\xymatrix@C+1em@R-0.8em{
   {\text{In $\Kleisli{T}$}}
  &&&&
   A
   \shifted{4em}{0em}{\text{(colimit)}}
            \ar@<1.5ex>@/^1pc/[dd]^-{J\alpha^{-1}}_-{\cong}
  \\
   {\cdots}
            \ar[r]^{JF^{n-1}{\fromInit}}
  &
   F^{n}0
            \ar@/^1pc/[rrru]|{J\alpha_{n}}
            \ar@/_1pc/[rrrd]|{JF\alpha_{n-1}}
            \ar[r]
  &
   F^{n+1}0
            \ar@/^/[rru]|{J\alpha_{n+1}}
            \ar@/_/[rrd]|{JF\alpha_{n}}
            \ar[r]
  &
   {\cdots}
  &
  \\
  &&&&
   FA
   \shifted{4em}{0em}{\text{(colimit)}}
            \ar@<1.5ex>@/_1pc/[uu]^-{J\alpha}
}}
}\end{equation}
The $\omega$-chain in this diagram is in fact the initial sequence for the
functor $\oF $ (Lemma \ref{lemmaForMain:indeedInitialSequence}) because,
for example, a left adjoint $J$ preserves initial 
objects.
Moreover the lower cone is the image of the upper cone under $\oF$; see
the diagram~(\ref{diagram:definitionOfLiftingOfF}).
Hence the diagram (\ref{Diagram:InitialSeqMappedByJ}) is equal to the
 following one. Recall that $\oF X = FX$ on objects.
\begin{equation}\label{Diagram:InitialSeqInKleisliCat}
{   \vcenter{\xymatrix@C+1em@R-0.8em{
   {\text{In $\Kleisli{T}$}}
  &&&&
   A 
   \shifted{4em}{0em}{\text{(colimit)}}
            \ar@<1.5ex>@/^1pc/[dd]^-{J\alpha^{-1}}_-{\cong}
  \\
   {\cdots}
            \ar[r]^{\oF ^{n-1}{\fromInit}}
  &
   {\oF}^{n}0
            \ar@/^1pc/[rrru]|{J\alpha_{n}}
            \ar@/_1pc/[rrrd]_{\oF J\alpha_{n-1}}
            \ar[r]
  &
   {\oF}^{n+1}0
            \ar@/^/[rru]|{J\alpha_{n+1}}
            \ar@/_/[rrd]|{\oF J\alpha_{n}}
            \ar[r]
  &
   {\cdots}
  &
  \\
  &&&&
   {\oF}A
   \shifted{4em}{0em}{\text{(colimit)}}
            \ar@<1.5ex>@/_1pc/[uu]^-{J\alpha}
}}
}\end{equation}
Thus Proposition \ref{LemInitialAlgebraViaInitialSequence} 
yields that $J\alpha: {\oF}A\iso A$ is an initial $\oF$-algebra.
This can be seen as a more concrete proof of
Proposition~\ref{proposition:initialAlgebraLiftsToKleisli}.

Now we show the initial algebra-final coalgebra coincidence in
$\Kleisli{T}$.
This is done by reversing all the arrows in
(\ref{Diagram:InitialSeqInKleisliCat}) and transforming the diagram into
the one of the
final sequence and its limits.

We notice 
(Lemma
\ref{lemmaForMain:arrowsInInitialSequenceAreEmbeddings}) 
that each arrow $\oF^{n}{\fromInit}$ in the initial sequence
is an embedding (Definition~\ref{definition:embeddingProjectionPairs}).
Hence the limit-colimit coincidence Theorem \ref{limit_colimit_coincidence} says that
every arrow in the diagram is an embedding.
Note that $J\alpha$ and $J\alpha^{-1}$, inverse to each other,
form an embedding-projection pair.

By taking the corresponding projections---they are uniquely determined 
(Lemma~\ref{proposition:embeddingDeterminesCorrProjection}) and are
denoted by $(\place)^{P}$---we obtain the next diagram.
The limit-colimit coincidence Theorem \ref{limit_colimit_coincidence} says that
the two resulting cones are both limits.
It is also obvious that the whole diagram commutes.
\begin{equation}\label{Diagram:TakenProjectionsOfInitialSequenceAndItsColimits}
{    \vcenter{\xymatrix@C+1em@R-0.8em{
   {\text{In $\Kleisli{T}$}}
  &&&&
   A 
   \shifted{4em}{0em}{\text{(limit)}}
            \ar@<-1.5ex>@/_1pc/[dd];[]_-{(J\alpha^{-1})^{P}}^-{\cong}
  \\
   {\cdots}
            \ar[r];[]_{(\oF ^{n-1}{\fromInit})^{P}}
  &
   {\oF}^{n}0
            \ar@/_1pc/[rrru];[]|{(J\alpha_{n})^{P}}
            \ar@/^1pc/[rrrd];[]^{(\oF J\alpha_{n-1})^{P}}
            \ar[r];[]
  &
   {\oF}^{n+1}0
            \ar@/_/[rru];[]|{(J\alpha_{n+1})^{P}}
            \ar@/^/[rrd];[]|{(\oF J\alpha_{n})^{P}}
            \ar[r];[]
  &
   {\cdots}
  &
  \\
  &&&&
   {\oF}A
   \shifted{4em}{0em}{\text{(limit)}}
            \ar@<-1.5ex>@/^1pc/[uu];[]_-{(J\alpha)^{P}}
}}}
\end{equation}
The $\omega^{\op}$-chain here is indeed a final sequence: Lemma
\ref{OInFinLem} shows---using the assumption 
(\ref{assumption:KleisliIsCppoEnrichedAndLeftStrict}) on 
left-strictness---that
 $0$ is also final in $\Kleisli{T}$, and according to
Lemma \ref{lemmaForMain:arrowsInInitialSequenceAreEmbeddings}
we have $(\oF^{n}{\fromInit})^{P}=\oF^{n}{\toTerm}$
where $\toTerm: X\to 0$ is the unique arrow to the final object $0$ in $\Kleisli{T}$.
As to the lower cone we have 
\begin{math}
 \bigl(\oF J\alpha_{n}\bigr)^{P} = \oF\bigl((J\alpha_{n})^{P}\bigr)
\end{math} by
Lemma
\ref{lemmaForMain:projectionOfLowerCoconeIsKleisliFAppliedToUpperCone}.

Hence the diagram
 (\ref{Diagram:TakenProjectionsOfInitialSequenceAndItsColimits})
is equal to the following one, showing the final sequence for
 $\oF $, its limit (the upper one) and that limit mapped by 
$\oF $ (the lower one) which is again a limit.
\begin{equation}\label{Diagram:FinalSequenceAndItsLimitsInKleisliCategory}
{    \vcenter{\xymatrix@C+1em@R-0.8em{
   {\text{In $\Kleisli{T}$}}
  &&&&
   A 
   \shifted{4em}{0em}{\text{(limit)}}
            \ar@<-1.5ex>@/_1pc/[dd];[]_-{J\alpha}^-{\cong}
  \\
   {\cdots}
            \ar[r];[]_{\oF ^{n-1}{\toTerm}}
  &
   {\oF}^{n}0
            \ar@/_1pc/[rrru];[]|{(J\alpha_{n})^{P}}
            \ar@/^1pc/[rrrd];[]^{\oF (J\alpha_{n-1})^{P}}
            \ar[r];[]
  &
   {\oF}^{n+1}0
            \ar@/_/[rru];[]|{(J\alpha_{n+1})^{P}}
            \ar@/^/[rrd];[]|{\oF (J\alpha_{n})^{P}}
            \ar[r];[]
  &
   {\cdots}
  &
  \\
  &&&&
   {\oF}A
   \shifted{4em}{0em}{\text{(limit)}}
            \ar@<-1.5ex>@/^1pc/[uu];[]_-{J\alpha^{-1}}
}}}
\end{equation}
By Proposition \ref{LemFinalCoalgebraViaFinalSequence}
we conclude that $J\alpha^{-1}$ is a final $\oF $-coalgebra.
\qed

In the remainder of this section the lemmas used in the above proof are
presented. We rely on the same assumptions as in Theorem~\ref{theorem:main}.
\begin{lem}\label{lemmaForMain:indeedInitialSequence}
 The $\omega$-chain in the diagram (\ref{Diagram:InitialSeqMappedByJ})
 is indeed the initial sequence for $\oF$. 
 That is, we have for each $n < \omega$,
 \begin{displaymath}
 JF^n \bigl(\,{\fromInit}^{\Sets}\,\bigr) = 
 \oF^n\bigl(\,{\fromInit}^{\Kleisli{T}}\,\bigr)
 \; : \;
 JF^{n}0 \longrightarrow JF^{n+1}0 \quad\text{in $\Kleisli{T}$,}
 \end{displaymath} 
 where ${\fromInit}^{\Sets}: 0\to F0$ in $\Sets$ and
 ${\fromInit}^{\Kleisli{T}}: 0\to F0$ in $\Kleisli{T}$ denote the unique maps.
\end{lem}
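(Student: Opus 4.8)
The plan is to argue by induction on $n$, leaning on exactly two ingredients that are already at hand: first, that $J$ preserves the initial object (being a left adjoint, it preserves all colimits, hence in particular $J0=0$ is initial in $\Kleisli{T}$); and second, that $\oF$ is a \emph{lifting} of $F$ in the sense of diagram~(\ref{diagram:definitionOfLiftingOfF}), i.e.\ $\oF\co J = J\co F$ as functors $\Sets\to\Kleisli{T}$.

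For the base case $n=0$ I would observe that, since $J0=0$ is initial in $\Kleisli{T}$, there is a \emph{unique} arrow $0\to\oF 0$ in $\Kleisli{T}$, namely ${\fromInit}^{\Kleisli{T}}$. As $J({\fromInit}^{\Sets})$ is also an arrow $0\to F0=\oF 0$ in $\Kleisli{T}$, uniqueness forces $J({\fromInit}^{\Sets})={\fromInit}^{\Kleisli{T}}$. This is the only place where the distinction between the two ``bang'' maps needs care, and it is settled here.

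For the inductive step I would assume $JF^{n}({\fromInit}^{\Sets})=\oF^{n}({\fromInit}^{\Kleisli{T}})$ and compute
\begin{displaymath}
 \oF^{n+1}({\fromInit}^{\Kleisli{T}})
 = \oF\bigl(\oF^{n}({\fromInit}^{\Kleisli{T}})\bigr)
 = \oF\bigl(JF^{n}({\fromInit}^{\Sets})\bigr)
 = J\bigl(F(F^{n}{\fromInit}^{\Sets})\bigr)
 = JF^{n+1}({\fromInit}^{\Sets}),
\end{displaymath}
where the second equality is the induction hypothesis and the third is precisely the lifting identity $\oF\co J=J\co F$ applied to the $\Sets$-map $F^{n}({\fromInit}^{\Sets})$. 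This closes the induction.

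I do not expect any genuine obstacle: the statement amounts to the remark that applying $J$ to the initial sequence of $F$ reproduces the initial sequence of $\oF$, and both required facts---preservation of the initial object and the defining commutation of the lifting---are supplied by the setup. The only subtlety worth flagging explicitly is that ${\fromInit}^{\Sets}$ and ${\fromInit}^{\Kleisli{T}}$ are \emph{a priori} different arrows, and that they coincide after applying $J$ is exactly what the base case records.
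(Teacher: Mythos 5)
Your proof is correct and takes essentially the same route as the paper: induction on $n$, with the base case settled by initiality of $J0=0$ in $\Kleisli{T}$ and the step case by the lifting identity $\oF\co J=J\co F$ of diagram~(\ref{diagram:definitionOfLiftingOfF}). The paper's own proof is just a two-line compression of exactly these two observations, so your more explicit write-out (including the careful remark that the two bang maps are a priori different arrows) adds only detail, not a different idea.
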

\proof
 By induction on $n$. For $n=0$ the two maps are equal due to the initiality
 of $J0=0$ in $\Kleisli{T}$. For the step case we use
 the commutativity $JF=\oF J$ of~(\ref{diagram:definitionOfLiftingOfF}). 
\qed

\begin{lem}\label{OInFinLem}
The empty set $0$ is both an initial and a final object in $\Kleisli{T}$.
\end{lem}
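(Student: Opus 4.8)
The plan is to treat initiality and finality separately, since only the latter carries any content. For initiality I would simply observe that an arrow $0\to X$ in $\Kleisli{T}$ is by definition a function $0\to TX$ in $\Sets$, and there is exactly one function out of the empty set. Equivalently, the left adjoint $J$ of~(\ref{diagram:canonicalKleisliAdjunction}) preserves the initial object (the empty colimit), and since $J$ is the identity on objects we get that $J0=0$ is initial in $\Kleisli{T}$. This part needs no order structure and holds for an arbitrary monad $T$.

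Finality is where the real work lies, because it fails for a general monad: an arrow $X\to 0$ in $\Kleisli{T}$ is a function $X\to T0$, so $0$ is final precisely when $T0$ is a singleton, which is not automatic. The cleanest route, and the one using assumption~(\ref{assumption:KleisliIsCppoEnrichedAndLeftStrict}), is abstract. First I would note that, since $0$ is already known to be initial, the homset $\Kleisli{T}(0,0)$ is a singleton, whence $\id_{0}=\bot_{0,0}$, the bottom arrow on $0$. Then, for any arrow $f\colon X\to 0$, left-strictness of composition gives
\[
  f \;=\; \id_{0}\co f \;=\; \bot_{0,0}\co f \;=\; \bot_{X,0}\enspace,
\]
so $f$ must coincide with the bottom arrow $\bot_{X,0}$. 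Hence $\Kleisli{T}(X,0)=\{\bot_{X,0}\}$ is a singleton and $0$ is final.

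As a sanity check, and as an alternative purely concrete argument, I would verify directly that $T0$ is a singleton for each leading example: $\lift 0=1=\{\bot\}$, $\pow\emptyset=\{\emptyset\}$, and $\dist\emptyset$ contains only the empty (zero) subdistribution. In each case the unique function $X\to T0$ corresponds to the arrow $\toTerm\colon X\to 0$ in $\Kleisli{T}$, confirming the abstract computation above.

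The main obstacle is conceptual rather than computational: recognising that finality of $0$ is a genuinely extra property, failing for general $T$, and isolating exactly which hypothesis delivers it. The identity $\id_{0}=\bot_{0,0}$ together with the left-strictness law $\bot\co f=\bot$ is precisely the leverage needed, so once initiality is in hand the finality argument collapses to a one-line computation.
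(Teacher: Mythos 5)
Your proof is correct and follows essentially the same route as the paper's: initiality via the left adjoint $J$ (or directly, since $T$ applied to nothing leaves no choice of map out of $\emptyset$), and finality by observing that initiality forces $\id_{0}=\bot_{0,0}$, after which left-strictness collapses any $f\colon X\to 0$ to $\bot_{X,0}$, whose existence is guaranteed by the $\Cppo$-enrichment. The concrete verification of $T0$ for the three leading monads is a nice sanity check but is not needed, exactly as you indicate.
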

\noindent In particular, this implies that the object $T0$ is final in $\Sets$.
\proof
The functor $J: \Sets \to \Kleisli{T}$ preserves initial objects since
 it is a left adjoint. Therefore $0 = J0$ is initial in $\Kleisli{T}$. 
Finality follows essentially from the left-strictness assumption:
for each set $X$ there exists at least one arrow $X\to 0$ in $\Kleisli{T}$, for example
$\bot_{X,0}$.
To show the uniqueness of such an arrow, 
take an arbitrary arrow $f: X \rightarrow 0$ in $\Kleisli{T}$. 
Recalling that the bottom map $\bot_{0, 0}: 0 \to 0$ is also the identity
 arrow in $\Kleisli{T}$ because of initiality,
we obtain
$$f \;=\; \id \co f \;=\; \bot_{0, 0} \co f \;\stackrel{(*)}{=}\;
\bot_{X, 0} 
\enspace,
$$
where the compositions are taken in $\Kleisli{T}$ and the equality
 marked by $(*)$ holds by  left-strictness of composition.
\qed

\begin{lem}\label{lemmaForMain:arrowsInInitialSequenceAreEmbeddings}
 Each arrow $\oF ^{n}{\fromInit}$ in the initial sequence for
 $\oF $, as in the diagram (\ref{Diagram:InitialSeqInKleisliCat}),
 is an embedding. 
 Its corresponding projection is given by
\begin{displaymath}
  \bigl(\oF ^{n}{\fromInit}\bigr)^P =
 \oF ^{n}{\toTerm}
  \qquad\text{in}\quad
  \vcenter{\xymatrix@1@C+1em{
   {F^{n}0}
            \ar@{ >->}@/_/[r]_{\oF^{n}\fromInit}
  &
   {F^{n+1}0}
            \ar@{->>}@/_/[l]_{\oF^{n}\toTerm}
}}\enspace.
\end{displaymath}
\end{lem}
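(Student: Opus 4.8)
The plan is to prove by induction on $n$ that each arrow $\oF^{n}\fromInit$ in the initial sequence is an embedding whose corresponding projection is $\oF^{n}\toTerm$. Recall (Definition~\ref{definition:embeddingProjectionPairs}) that to say $e: X\to Y$ is an embedding with projection $p: Y\to X$ means that $p\co e = \id_{X}$ and $e\co p\sqsubseteq \id_{Y}$. So for each $n$ I need to verify the two relations
\begin{displaymath}
 \oF^{n}\toTerm \co \oF^{n}\fromInit = \id \qquad\text{and}\qquad \oF^{n}\fromInit\co \oF^{n}\toTerm \sqsubseteq \id
\end{displaymath}
in $\Kleisli{T}$, where $\toTerm: X\to 0$ is the unique map into the final object $0$ (final by Lemma~\ref{OInFinLem}) and $\fromInit: 0\to X$ is the unique map out of the initial object $0$.

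First I would establish the base case $n=0$, which concerns the arrows $\fromInit: 0\to F0$ and $\toTerm: F0\to 0$. Since $J0 = 0$ is initial in $\Kleisli{T}$, the composite $\toTerm\co\fromInit: 0\to 0$ must equal $\id_{0}$ by initiality. For the other inequality $\fromInit\co\toTerm \sqsubseteq \id_{F0}$, I would use that $\toTerm: F0\to 0$ factors through the bottom arrow (indeed by Lemma~\ref{OInFinLem} the unique arrow into $0$ is $\bot_{F0,0}$), and then invoke left-strictness of composition from assumption~(\ref{assumption:KleisliIsCppoEnrichedAndLeftStrict}) together with $\bot$ being the minimum: $\fromInit\co\bot_{F0,0} = \bot_{F0,F0}\sqsubseteq\id_{F0}$. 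For the inductive step, assuming $\oF^{n}\fromInit$ is an embedding with projection $\oF^{n}\toTerm$, I would apply the functor $\oF$ to both defining relations. Since $\oF$ is a functor it preserves composition and identities, giving $\oF(\oF^{n}\toTerm)\co\oF(\oF^{n}\fromInit) = \oF(\id) = \id$, i.e.\ $\oF^{n+1}\toTerm\co\oF^{n+1}\fromInit = \id$. For the second relation I would use local monotonicity of $\oF$ (assumption~(\ref{assumption:FPreservesOmegaColimit})'s companion hypothesis): from $\oF^{n}\fromInit\co\oF^{n}\toTerm \sqsubseteq \id$ local monotonicity yields $\oF(\oF^{n}\fromInit\co\oF^{n}\toTerm)\sqsubseteq\oF(\id) = \id$, and functoriality rewrites the left side as $\oF^{n+1}\fromInit\co\oF^{n+1}\toTerm$, completing the step.

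The main subtlety I expect is ensuring that $\oF^{n}\toTerm$ really is the unique map $F^{n}0\to 0$ appearing as the projection, rather than merely some projection; this relies on finality of $0$ in $\Kleisli{T}$ (Lemma~\ref{OInFinLem}) so that $\toTerm$ is well-defined, and on the uniqueness of projections once an embedding is fixed (Lemma~\ref{proposition:embeddingDeterminesCorrProjection}). I would remark that because $\fromInit$ and $\toTerm$ are the unique arrows to and from $0$, naturality-type bookkeeping is trivial: the only genuine ingredients are left-strictness (for the base inequality) and local monotonicity together with functoriality (to propagate both relations up the sequence). This makes the induction clean, with no reliance on the cpo-join structure beyond the existence of $\bot$ and its minimality.
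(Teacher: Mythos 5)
Your overall strategy --- showing that $\bigl(\oF^{n}\fromInit,\,\oF^{n}\toTerm\bigr)$ is an embedding-projection pair using initiality/finality of $0$, left-strictness, functoriality, and local monotonicity --- is the same as the paper's; your induction on $n$ is only a cosmetic repackaging of the paper's direct computation, which applies $\oF^{n}$ once to the base-level relations:
$\oF^{n}\fromInit\co\oF^{n}\toTerm=\oF^{n}(\fromInit\co\toTerm)=\oF^{n}(\bot_{F0,F0})\sqsubseteq\oF^{n}(\id)=\id$.

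There is, however, a genuine flaw in your base case. You justify the equation $\fromInit\co\bot_{F0,0}=\bot_{F0,F0}$ by ``left-strictness.'' But left-strictness, as defined in the paper and assumed in Theorem~\ref{theorem:main}(\ref{assumption:KleisliIsCppoEnrichedAndLeftStrict}), is $\bot\co f=\bot$: the bottom arrow is the \emph{left} factor of $\co$, i.e.\ the one applied second. In your composite $\fromInit\co\bot_{F0,0}$ the bottom arrow is the \emph{right} factor, so what you are actually invoking is right-strictness ($f\co\bot=\bot$), which is not among the assumptions --- the paper explicitly notes, when defining $\Cppo$-enrichment, that composition need not preserve bottoms in general, and only left-strictness is established in Lemma~\ref{lemma:KleisliIsCppoEnriched}. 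In a general category satisfying only the stated hypotheses, the step as you justify it would fail. The repair is immediate and is exactly the paper's move: by initiality of $0$ in $\Kleisli{T}$, the arrow $\fromInit\colon 0\to F0$ itself equals $\bot_{0,F0}$ (both are arrows out of the initial object), so $\fromInit\co\toTerm=\bot_{0,F0}\co\toTerm=\bot_{F0,F0}$ by a correct application of left-strictness; with this route you never need to identify $\toTerm$ with a bottom arrow at all. (A minor bookkeeping slip besides: local monotonicity of $\oF$ is part of assumption~(2) of Theorem~\ref{theorem:main}, not a companion of assumption~(\ref{assumption:FPreservesOmegaColimit}).)
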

\proof
We show that 
$(\oF ^n {\fromInit},\, \oF ^n {\toTerm})$ is an
 embedding-projection pair
 for all $n < \omega$. 
We have $\oF ^n{\toTerm} \co \oF^n{\fromInit}=\id$ because
$\toTerm\co\fromInit = \id$.
For the other half we have
\begin{align*}
  \oF ^n {\fromInit} \co \oF ^n{\toTerm}
 &= 
  \oF ^n({\fromInit}\co {\toTerm})
  \\
 &=
  \oF ^n(\bot_{0,F0} \co {\toTerm})
 &&\text{initiality of $0$ in $\Kleisli{T}$}
 \\
  &=
   \oF ^n(\bot_{F0,F0})
 &&\text{composition is left-strict}
 \\
 &\sqsubseteq
  \oF ^n(\id)
 =\id
 &&\text{$\oF $ is locally monotone.}
 \tag*{$\qEd$}
 \end{align*}

\begin{lem}\label{lemmaForMain:projectionOfLowerCoconeIsKleisliFAppliedToUpperCone}
 We have 
 \begin{math}
 \bigl(\oF J\alpha_{n}\bigr)^{P} = \oF\bigl((J\alpha_{n})^{P}\bigr)
 \end{math}.
 Hence the lower cone in the diagram 
 (\ref{Diagram:TakenProjectionsOfInitialSequenceAndItsColimits})
 is the image of the upper cone under $\oF$.
\end{lem}
\begin{proof}
 It is easy to check that 
 $\bigl(\,\oF J\alpha_{n},\,
  \oF\bigl((J\alpha_{n})^{P}\bigr)\,\bigr)$ indeed form an
 embedding-projection pair.
 Therein we use the monotonicity of $\oF$'s action on arrows.
\end{proof}

\subsection{Simpler proof in
 \texorpdfstring{$\Kleisli{\pow}\cong\Rel$}{Kl(P)cong rel}}
\label{subsection:simplerProofInRel}
When $T=\pow$ we have the self-duality
\begin{displaymath}
 \Op\quad:\quad \Kleisli{\pow}^{\op}\stackrel{\cong}{\longrightarrow}\Kleisli{\pow}\enspace.
\end{displaymath}
This is
because of the following bijective correspondence between functions
\begin{displaymath}
        \vcenter{\infer={
        \xymatrix@C+1em@1{Y\ar[r]^-{f^{\lor}}& {\pow X}} \text{ in $\Sets$}
       }{
        \xymatrix@C+1em@1{X\ar[r]^-{f}&{\pow Y}} \text{ in $\Sets$} 
       }}
\end{displaymath}
given by $f^{\lor}(y)=\{x\in X\mid y\in f(x)\}$. Recalling
$\Kleisli{\pow}\cong\Rel$, this mapping $f\mapsto f^{\lor}$ corresponds to taking the opposite
relation.

Due to this ``global'' duality
$\Kleisli{\pow}\cong\Kleisli{\pow}^{\op}$, the proof of
Theorem~\ref{theorem:main} is drastically simplified for $T=\pow$.
It essentially relies on the lifted self duality
$\Alg{\oF}\cong\Alg{\oF^{\op}}$, where  the latter is isomorphic to
$({\Coalg{\oF}})^{\op}$. We do not need here an order structure of
$\Kleisli{\pow}$ nor local monotonicity of $\oF$.
\begin{thm}\label{theorem:simplifiedMainForPowersetMonad}
 Let $F:\Sets\to\Sets$ be a functor which preserves weak pullbacks,
 and $\oF:\Kleisli{\pow}\to\Kleisli{\pow}$ be its lifting induced by
 relation lifting (Lemma~\ref{lemma:distributiveLawForPowersetMonad}).
 Then the initial $F$-algebra in $\Sets$ yields the final
 $\oF$-coalgebra in $\Kleisli{\pow}$.
\end{thm}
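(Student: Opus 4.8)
The plan is to exploit the self-duality $\Op\colon\Kleisli{\pow}^{\op}\iso\Kleisli{\pow}$ to turn the statement about final coalgebras into the already-established statement about initial algebras (Proposition~\ref{proposition:initialAlgebraLiftsToKleisli}), thereby bypassing the entire order-theoretic machinery of the limit-colimit coincidence. First I would observe that, since $F$ preserves weak pullbacks, Lemma~\ref{lemma:distributiveLawForPowersetMonad} supplies a distributive law and hence a lifting $\oF$, whose action on an arrow $R\colon X\to Y$ (a relation) is the relation lifting $\Relf(R)$, by~(\ref{equation:liftingOfFIsRelationLifting}). The key structural fact I would establish is that this lifting $\oF$ commutes with the duality $\Op$ up to isomorphism: because relation lifting of a functor preserving weak pullbacks commutes with taking the opposite relation, i.e.\ $\Relf(R^{\op}) = \bigl(\Relf(R)\bigr)^{\op}$, we get a natural isomorphism
\begin{displaymath}
 \Op\co\oF^{\op}\;\cong\;\oF\co\Op \quad:\quad \Kleisli{\pow}^{\op}\longrightarrow\Kleisli{\pow}\enspace.
\end{displaymath}

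With this compatibility in hand, the second step is purely formal. The isomorphism $\Op$ together with the above commutation induces an isomorphism of categories
\begin{displaymath}
 \Coalg{\oF}\;\cong\;\bigl(\Alg{\oF}\bigr)^{\op}\enspace,
\end{displaymath}
obtained by sending an $\oF$-coalgebra $c\colon X\to\oF X$ to the $\oF$-algebra $\Op(c)\colon \oF X\to X$ on the same carrier (using the duality to reverse the arrow and the natural isomorphism above to fix up the functor). Under an isomorphism of categories, a \emph{final} object on one side corresponds to an \emph{initial} object on the other. Therefore finding the final $\oF$-coalgebra reduces to finding the initial $\oF$-algebra, and the latter is exactly what Proposition~\ref{proposition:initialAlgebraLiftsToKleisli} delivers: the initial $F$-algebra $\alpha\colon FA\iso A$ in $\Sets$ lifts to the initial $\oF$-algebra $J\alpha\colon \oF A\to A$ in $\Kleisli{\pow}$. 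Transporting this back across the isomorphism yields that $J(\alpha^{-1})\colon A\to\oF A$ is the final $\oF$-coalgebra, with carrier the initial $F$-algebra as claimed.

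The main obstacle, and the step deserving genuine care, is verifying the commutation $\Relf(R^{\op})=\bigl(\Relf(R)\bigr)^{\op}$ and confirming it is exactly the hypothesis of weak-pullback preservation that makes it hold. Relation lifting is functorial and monotone for any $F$, but the interaction with converse (and more basically, the fact that $\Relf$ is a functor on $\Rel$ at all, respecting relational composition) is precisely what requires weak pullbacks; this is the same well-known property underlying the coincidence of $F$-bisimulations with spans of coalgebra morphisms. I would cite the relevant facts about relation lifting from~\cite{Jacobs04c} rather than reprove them, and then check that the isomorphism $\Op$ genuinely intertwines the coalgebra and algebra structures, so that the bijection on objects upgrades to an isomorphism of categories and hence sends the initial algebra to the final coalgebra. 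Since everything here is a bijective correspondence of morphisms together with a natural isomorphism of functors, no $\Cppo$-enrichment, local monotonicity, or $\omega$-colimit preservation is needed—this is exactly the simplification promised in the preceding remark.
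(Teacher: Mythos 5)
Your proposal is correct and follows essentially the same route as the paper's own proof: both exploit the self-duality $\Op\colon\Kleisli{\pow}^{\op}\iso\Kleisli{\pow}$, the compatibility $\oF\co\Op=\Op\co\oF^{\op}$ coming from $\Relf(R^{\op})=(\Relf R)^{\op}$, and Proposition~\ref{proposition:initialAlgebraLiftsToKleisli}, transporting the initial $\oF$-algebra $J\alpha$ across $\Alg{\oF}\cong(\Coalg{\oF})^{\op}$ to obtain $J(\alpha^{-1})$ as the final $\oF$-coalgebra, with no order-enrichment needed. One small correction: commutation of relation lifting with converse holds for \emph{any} functor, and it is rather the functoriality of $\Relf$ on $\Rel$ (preservation of relational composition) and the existence of the distributive law that require weak-pullback preservation; since that hypothesis is assumed anyway, this misattribution does not affect your argument.
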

\proof
 We have the following situation because of the self-duality of $\Kleisli{\pow}$.
 \begin{displaymath}
 \vcenter{\xymatrix@C+4em@R=1.5em{
  {\Sets}
              \ar@/^/[r]^{J}
              \ar@/^/[r];[]^{K}
              \ar@{}[r]|{\bot}
              \ar@(ld,rd)[]_(.25){F}
 &
  {\Kleisli{\pow}} 
              \ar[r]^{\Op^{\op}}_{\cong}
              \ar@(ld,rd)[]_(.25){\oF}
 &
  {\Kleisli{\pow}^{\op}}
              \ar@(ld,rd)[]_(.25){\oF^{\op}}
}}
\end{displaymath}
 The adjunction $J\dashv K$ and the isomorphism $\Op:
 \Kleisli{\pow}^{\op}\iso\Kleisli{\pow}$ lift to those between
 the categories of algebras.
 \begin{displaymath}
 \vcenter{\xymatrix@C+4em@R=1.5em{
  {\Alg{F}}
              \ar@/^/[r]^{J'}
              \ar@/^/[r];[]
              \ar@{}[r]|{\bot}
              \ar[d]
 &
  {\Alg{\oF}}
              \ar[r]^{(\Op')^{\op}}_{\cong}
              \ar[d]
 &
  {\Alg{\oF^{\op}}}
              \ar[r]^{\cong}
              \ar[d]
 &
  {(\Coalg{\oF})^{\op}}
 \\
  {\Sets}
              \ar@/^/[r]^{J}
              \ar@/^/[r];[]^{K}
              \ar@{}[r]|{\bot}
 &
  {\Kleisli{\pow}} 
              \ar[r]^{\Op^{\op}}_{\cong}
 &
  {\Kleisli{\pow}^{\op}}
}}
\end{displaymath}
 Indeed, $J\dashv K$ lifts due to
 Proposition~\ref{proposition:initialAlgebraLiftsToKleisli};
 the lifted isomorphism $\Op':\Alg{\oF}\iso\Alg{\oF^{\op}}$ is because of 
 the following commutativity:
 \begin{equation}\label{diagram:liftedFAndSelfDualityOfRelAreCompatible}
\vcenter{  \xymatrix@R=1em@C+2em{
   {\Kleisli{\pow}^{\op}}
                 \ar[r]^{\Op}
                 \ar[d]_{\oF^{\op}}
  &
   {\Kleisli{\pow}}
                 \ar[d]^{\oF}               
  \\
   {\Kleisli{\pow}^{\op}}
                 \ar[r]_{\Op}
  &
   {\Kleisli{\pow}}             
}}
 \end{equation}
 which is because: $\oF R = \Relf(R)$
 (see~(\ref{equation:liftingOfFIsRelationLifting})); and  taking relation liftings is compatible with
 opposite relations (i.e.\ $\Relf(R^{\op})=(\Relf R)^{\op}$,
 see~\cite{HughesJ04}). Moreover the category $\Alg{\oF^{\op}}$ is
 obviously isomorphic to $(\Coalg{\oF})^{\op}$.

 Therefore the initial object in $\Alg{F}$ is carried to that in
 $(\Coalg{\oF})^{\op}$, hence the final object in $\Coalg{\oF}$.
\qed

For monads such as $T=\dist$ a ``global'' self-duality
$\Kleisli{T}\cong\Kleisli{T}^{\op}$ is not available. Instead, in the proof of
Theorem~\ref{theorem:main}, we exploit the ``partial'' duality
which holds between the colimit/limit of the initial/final sequence.

\subsection{Related work: axiomatic domain theory}\label{subsection:relatedWorkAxiomaticDomainTheory}
 The  initial algebra-final coalgebra coincidence is 
 heavily exploited in the field of \emph{axiomatic domain theory},
 e.g.\  in~\cite{Freyd91,Freyd92,Fiore96b,Simpson92}. 
 There, categories
 which have coinciding initial algebra and final coalgebra for each endofunctor
 are called 
 \emph{algebraically compact categories}. They draw special 
 attention as suitable ``categories of domains'' for denotational
 semantics of datatype construction. The relevance comes as follows.

 Let $\C$ be a ``category of domains.''  
 We think of an object of the category $\C$ as a type.  A
 ``recursive'' datatype constructor---a prototypical example is
 $(X,Y)\mapsto Y^{X}$---is presented as a bifunctor $G: \C^{\op}\times
 \C\to \C$.  Note the presence of both covariance and contravariance.
 We expect that such a category $\C$ has a canonical fixed point
 $\Fix G$ such that
 \begin{displaymath}
  G(\Fix G,\Fix G) \iso \Fix G\enspace,
 \end{displaymath} 
 which models the recursive type determined by the datatype constructor $G$.
 Freyd~\cite{Freyd91} showed that if $\C$ is algebraically compact, then we
 can construct such a fixed point as a suitable initial algebra; 
 moreover this fixed point is shown by Fiore~\cite{Fiore96b} to be 
 a canonical one in a suitable sense. The rough idea here is that the covariant part of $G$ is
 taken care of by an initial algebra; the contravariant part is by
 a final coalgebra; the initial algebra-final coalgebra coincidence yields a fixed point of overall $G$.
 
 Typical examples of algebraically compact categories are
 enriched over $\Cppo$ or one of its variants. This conforms the
 traditional use of the word ``domain'' for certain cpo's
 (e.g.\  in~\cite{AbramskyJ94}). 
 
 Although we utilize the initial algebra-final coalgebra coincidence
 result in $\KT$, we are not so much interested in algebraic compactness
 of $\KT$. 
This is because 
 our motivation is different from that of axiomatic domain
 theory. In studying trace semantics for coalgebras, we need not deal with
 \emph{every} endofunctor on $\KT$, but only such an
 endofunctor $\oF$ which is a lifting of $F: \Sets\to\Sets$.

 In a different context of functional programming, the
 work~\cite{Pardo01} 
 also studies initial algebras and final coalgebras in a Kleisli category.
 The motivation there is to combine \emph{data types} and \emph{effects}.
 More specifically, an initial algebra and a final coalgebra support the \emph{fold} and
 the \emph{unfold} operators, respectively, used in recursive
 programs over datatypes. A computational effect is presented as 
 a monad, and its Kleisli category is the category of effectful
 computations.
 
 The difference between~\cite{Pardo01} and the current work is as follows.
 In~\cite{Pardo01}, the original category of pure functions is already
 algebraically compact;
 the paper studies the conditions for the algebraic compactness
 to be carried over to Kleisli categories. 
 In contrast, in the current work, it is a monad---with a suitable
 order structure, embodying the essence of ``branching''---which 
 yields the initial algebra-final coalgebra coincidence on a Kleisli
 category; the coincidence is not present in the original category $\Sets$.

 \subsubsection{Local continuity vs.\ local monotonicity}
 \label{subsubsection:localContiVSLocalMonotonicity}

 In axiomatic domain theory, $\Cppo$-enriched categories are said to be
 algebraically compact because, ``in a  2-category setting''~\cite{Freyd92},
 every endofunctor has an initial algebra and a final coalgebra.
 Concretely this means: ``every locally continuous functor.''

 In this spirit, we could have made a stronger assumption of $\oF$'s 
 local continuity in Theorem~\ref{theorem:main} instead of local 
 monotonicity. If we do so, in fact, the proof of
 Theorem~\ref{theorem:main} becomes much simpler: the following proposition
 (Lemma in~\cite[p.98]{Freyd92}) immediately yields the
 initial algebra-final coalgebra coincidence for a locally continuous $\oF$.
 
 \begin{prop}[{\cite{Freyd92}}]\label{proposition:initialAlgFinalCoalgCoincidenceForLocallyContinuousFunctor}
  Let $\D$ be a $\Cppo$-enriched category whose composition is
  left-strict,
  and $G:\D\to\D$ be a locally
  continuous endofunctor.  An initial algebra $\beta: GB\iso B$,
  if it exists, yields a final coalgebra $\beta^{-1}: B\iso GB$.
 \end{prop}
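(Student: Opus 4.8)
The plan is to realise coalgebra morphisms into $(B,\beta^{-1})$ as fixed points of a continuous self-map on a homset, and then to exploit the initiality of $\beta$ to pin that fixed point down uniquely. By Lambek's Lemma the initial algebra $\beta\colon GB\to B$ is an isomorphism, so $\beta^{-1}\colon B\to GB$ is a bona fide $G$-coalgebra. For an arbitrary $G$-coalgebra $c\colon X\to GX$, a map $h\colon X\to B$ is a coalgebra morphism $(X,c)\to(B,\beta^{-1})$ exactly when $h=\beta\co Gh\co c$, i.e.\ when $h$ is a fixed point of the operator $\Phi\colon\D(X,B)\to\D(X,B)$, $\Phi(h)=\beta\co Gh\co c$. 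Since $G$ is locally continuous and composition in $\D$ is continuous, $\Phi$ is a continuous endomap of the cpo $\D(X,B)$; hence it has a least fixed point $h_{\infty}=\bigsqcup_{n<\omega}\Phi^{n}(\bot)$, which gives \emph{existence} of a coalgebra morphism.

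The heart of the argument is \emph{uniqueness}, and for it I would first record an auxiliary fact about $B$ itself. Consider $\Psi\colon\D(B,B)\to\D(B,B)$, $\Psi(u)=\beta\co Gu\co\beta^{-1}$. A map $u$ is a fixed point of $\Psi$ iff $u\co\beta=\beta\co Gu$, i.e.\ iff $u$ is an endomorphism of the algebra $(B,\beta)$ in $\Alg{G}$; by initiality of $(B,\beta)$ the only such endomorphism is $\id_{B}$. As $\Psi$ is again continuous, its least fixed point is $\bigsqcup_{n<\omega}\Psi^{n}(\bot)$, and since $\id_{B}$ is the \emph{only} fixed point we conclude $\bigsqcup_{n<\omega}\Psi^{n}(\bot)=\id_{B}$. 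Writing $e_{n}=\Psi^{n}(\bot)$, we thus obtain an increasing chain of endomorphisms of $B$ with $\bigsqcup_{n}e_{n}=\id_{B}$.

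Now take any coalgebra morphism $h\colon(X,c)\to(B,\beta^{-1})$. I would prove by induction on $n$ that $e_{n}\co h=\Phi^{n}(\bot)$. The base case $e_{0}\co h=\bot_{B,B}\co h=\bot_{X,B}$ is exactly where \textbf{left-strictness} of composition is used. For the step, using $\beta^{-1}\co h=Gh\co c$ together with the functoriality of $G$,
\[
 e_{n+1}\co h \;=\; \beta\co Ge_{n}\co(\beta^{-1}\co h)\;=\;\beta\co G(e_{n}\co h)\co c\;=\;\beta\co G(\Phi^{n}(\bot))\co c\;=\;\Phi^{n+1}(\bot).
\]
Taking joins and using continuity of composition gives $h=\id_{B}\co h=\bigsqcup_{n}(e_{n}\co h)=\bigsqcup_{n}\Phi^{n}(\bot)=h_{\infty}$. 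Hence every coalgebra morphism coincides with $h_{\infty}$, which establishes uniqueness; so $(B,\beta^{-1})$ is a final $G$-coalgebra.

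I expect the main obstacle to be precisely the one-sided nature of the strictness hypothesis: only $\bot\co f=\bot$ is available, not $g\co\bot=\bot$, so the usual ``fusion along algebra morphisms'' is unavailable and one cannot simply transport the iteration $\Phi^{n}(\bot)$ across $(B,\beta)$. The decisive move is therefore to compare the approximants $\Phi^{n}(\bot)$ of the least coalgebra morphism with the approximants $e_{n}=\Psi^{n}(\bot)$ of $\id_{B}$ through the identity $e_{n}\co h=\Phi^{n}(\bot)$, which needs left-strictness only in the base case and the coalgebra-morphism equation (rather than any algebra-side strictness) in the inductive step.
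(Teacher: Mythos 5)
Your proof is correct and follows essentially the same route as the paper's: existence via the least fixed point of $\Phi(h)=\beta\co Gh\co c$, and uniqueness via the auxiliary operator $\Psi(u)=\beta\co Gu\co\beta^{-1}$, whose least fixed point is forced to be $\id_{B}$ by initiality of $(B,\beta)$, with left-strictness entering exactly once at the base of the induction. The only cosmetic difference is the shape of the inductive identity: you prove $\Psi^{n}(\bot)\co h=\Phi^{n}(\bot)$ using the coalgebra-morphism equation at each step, whereas the paper proves the slightly more general $\Phi^{n}(\bot\co g)=\Psi^{n}(\bot)\co\Phi^{n}(g)$ and invokes $\Phi(g)=g$ only at the very end --- the two computations are interchangeable.
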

 \proof
  Given a coalgebra $d: Y\to GY$, the function 
         \begin{displaymath}
	 \Phi\;:\; \D(Y,B)\longrightarrow \D(Y,B)\enspace, \qquad
          f \longmapsto \beta\co Gf\co d
        \end{displaymath}
  is continuous due to the local continuity of $G$. Hence
  it has the least fixed point $\bigsqcup_{n<\omega} \Phi^{n}(\bot)$;
  this proves existence of a morphism from $d$ to $\beta^{-1}$.
     \begin{displaymath}
 \vcenter{
 \xymatrix@C+5em@R=1em{
  {GY}
                \ar@{->}[r]
 &
  {GB}
 \\
  {Y}
                \ar[u]^{d}
                \ar@{->}[r]
 &
  {B}
                \ar[u]_{\beta^{-1}}^{\cong}
}
}
 \end{displaymath}
 
 Now we shall show its uniqueness. Assume that $g: Y\to B$
 is a morphism of coalgebras as above, that is, $\Phi(g) = g$. Similarly to $\Phi$, we define a
 function $\Psi: \D(B,B)\to\D(B,B)$ as the one which carries $h: B\to B$
 to $\beta\co Gh\co \beta^{-1}$. We have
  \begin{align*}
   \textstyle\bigsqcup_{n} \Phi^{n}(\bot)
  &=
   \textstyle\bigsqcup_{n} \Phi^{n}(B\stackrel{g}{\rightarrow} B \stackrel{\bot}{\rightarrow} Y)
  &&
   \text{composition is left-strict, so $\bot\co g = \bot$}
  \\
  &=
   \textstyle\bigsqcup_{n} \bigl(\Psi^{n}(\bot)\co \Phi^{n}(g)\bigr)
  &&
   \text{$\Phi^{n}(\bot\co g)=\Psi^{n}(\bot)\co \Phi^{n}(g)$, by induction}
  \\
  &=
   \bigl(\textstyle\bigsqcup_{n} \Psi^{n}(\bot)\bigr) \co \bigl(\textstyle\bigsqcup_{n} \Phi^{n}(g)\bigr)
  &&
   \text{composition is continuous}
  \\
  &=
   \textstyle\bigsqcup_{n} \Phi^{n}(g)
  &&
   \text{$\textstyle\bigsqcup_{n}
 \Psi^{n}(\bot)=\id$, ($*$)}
  \\
  &=
   g
  &&
   \text{$\Phi(g)=g$ by assumption.}
  \end{align*}
 Here ($*$) holds because $\textstyle\bigsqcup_{n}
 \Psi^{n}(\bot)$, being a fixed point for $\Psi$, is the unique morphism
 of algebras from $\beta$ to $\beta$.
 This shows that the morphism $g$ must be the least fixed point of $\Phi$.
 \qed

  For our main Theorem~\ref{theorem:main} we can do with only local
  monotonicity of the lifted functor $\oF$, by taking a closer look at
  the initial/final sequences.  However at this stage it is not clear
  how much we gain from this generality: up to now we have not found an
  example where the functor $\oF$ is only locally monotone (and not
  locally continuous).

\section{Finite trace semantics via
 coinduction}\label{section:traceSemanticsViaCoinduction}
 In this section we shall further illustrate the observation that 
 the principle of coinduction, when employed in $\Kleisli{T}$,
 captures trace semantics of state-based systems.
 As we have shown in the previous section, an initial algebra in $\Sets$
 constitutes the semantic domain, i.e.\ is a final coalgebra in
 $\Kleisli{T}$. Viewing an initial algebra as the set of
 well-founded terms (such as finite words or finite-depth parse
 trees), this fact means that the ``trace semantics'' induced
 by coinduction is inevitably \emph{finite}, in the sense 
 that it captures only finite  behavior.
 Here we will elaborate
 on this finiteness issue as well.

\subsection{Trace semantics by coinduction}
\label{subsection:traceSemByCoinduction}
 As we have seen in Section~\ref{subsection:liftingFunctorsByDistrLaws}
 various types of state-based systems allow their presentation 
 as coalgebras $X\to \oF X$ in a Kleisli category $\Kleisli{T}$. For example,
 \begin{enumerate}[$\bullet$]
  \item LTSs with explicit termination, with $T=\pow$ and
	$F=1+\Sigma\times\place$;
  \item probabilistic LTSs (also called \emph{generative probabilistic transition
        systems} in~\cite{GSS95:ic,Sok05}) with explicit termination,
        with $T=\dist$ and
	$F=1+\Sigma\times\place$;
  \item context-free grammars with $T=\pow$ and $F=(\Sigma+\place)^{*}$.
 \end{enumerate}
 The main observation  underlying this work is the following.
 If we instantiate the
 parameters 
\begin{center}
  $T$
 for branching type \qquad
 and\qquad $F$ for transition type 
\end{center}
 in the coinduction
 diagram
   \begin{equation}\label{diagram:coinductionInKleisliCategory}
 \vcenter{
 \xymatrix@C+5em@R=1em{
  {\oF X}
                \ar@{-->}[r]^{\oF (\trace{c})} 
 &
  {\oF A}
 \\
  {X}
                \ar[u]^{c}
                \ar@{-->}[r]_{\trace{c}}
 &
  {A}
                \ar[u]_{J\alpha^{-1}}^{\cong}
}
} \qquad
  \text{in $\KT$}
 \end{equation}
 with one of
 the above choices, then the commutativity of the diagram is equivalent to
 the corresponding (conventional) definition of trace semantics in
 Section~\ref{subsection:introVariousTraceSemantics}. 
 Therefore we claim that the diagram
 (\ref{diagram:coinductionInKleisliCategory})
 is the mathematical principle underlying  various ``trace semantics,''
 no matter if it is ``trace set'' (non-deterministic) or
 ``trace distribution'' (probabilistic). 
 \begin{cor}[Trace semantics for coalgebras]
 \label{corollary:traceSemanticsForCoalgebras}
  Assume that $T$ and $F$ are such as in Theorem~\ref{theorem:main},
  and $\alpha: FA\iso A$ is an initial $F$-algebra in $\Sets$.
  Given a coalgebra $c: X\to TFX$ in $\Sets$, we can assign a function 
\begin{displaymath}
   \trace{c}\; :\; X\longrightarrow TA \qquad \text{in $\Sets$}
\end{displaymath}  
which is, as an arrow $X\to A$ in $\Kleisli{T}$,
  the unique one making the diagram
  (\ref{diagram:coinductionInKleisliCategory})
  commute. We shall call this function $\trace{c}$ the \textbf{(finite) trace
  semantics} for the coalgebra $c$. \qed
 \end{cor}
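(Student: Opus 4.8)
The plan is to read this statement off directly from Theorem~\ref{theorem:main}, since the corollary is essentially finality restated in elementary ($\Sets$-level) terms. First I would invoke Theorem~\ref{theorem:main}: under the stated hypotheses on $T$ and $F$, the arrow $J\alpha^{-1}: A \to \oF A$ is a final $\oF$-coalgebra in $\KT$. By the very definition of a final object in $\Coalg{\oF}$, this means that for every $\oF$-coalgebra $c: X \to \oF X$ in $\KT$ there is a unique coalgebra morphism into $J\alpha^{-1}$, that is, a unique arrow $\trace{c}: X \to A$ in $\KT$ making the square~(\ref{diagram:coinductionInKleisliCategory}) commute. This is precisely the commuting diagram appearing in the corollary, since the right-hand vertical is exactly $J\alpha^{-1}$.

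The only remaining work is to unwind the Kleisli correspondence at both ends. The given datum $c: X \to TFX$ in $\Sets$ is, by the defining bijection of $\KT$ and because $\oF X = FX$ on objects, exactly an arrow $c: X \to \oF X$ in $\KT$; so the hypothesis of the corollary supplies precisely an object of $\Coalg{\oF}$, with no side conditions to verify. Dually, the unique mediating arrow $\trace{c}: X \to A$ in $\KT$ is, under the same bijection, a function $\trace{c}: X \to TA$ in $\Sets$. This is the function asserted by the corollary, and its characterizing property (uniqueness subject to~(\ref{diagram:coinductionInKleisliCategory})) is inherited verbatim from finality.

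I do not expect a genuine obstacle here; the mathematical content lies entirely in Theorem~\ref{theorem:main}, and the corollary merely transports finality through the Kleisli hom-bijection. The single point worth stating carefully is that ``the unique arrow making~(\ref{diagram:coinductionInKleisliCategory}) commute in $\KT$'' and ``the unique $\Sets$-function $X \to TA$ with the corresponding property'' are literally the same assertion, because the isomorphism $\KT(X,A) \cong \Sets(X,TA)$ is what \emph{defines} the Kleisli category. Hence no further coherence check is required beyond recording that $\oF$ acts as $\oF X = FX$ on the objects involved, so that the two formulations of the coinductive equation coincide.
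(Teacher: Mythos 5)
Your proposal is correct and matches the paper's intent exactly: the corollary is stated with an immediate \qed precisely because it is nothing more than finality of $J\alpha^{-1}$ (Theorem~\ref{theorem:main}) transported along the defining bijection $\Kleisli{T}(X,A)\cong\Sets(X,TA)$, which is the argument you give. No gap; your careful remark that the two formulations of the coinductive equation literally coincide is the only point worth recording, and you record it.
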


 \begin{exa}
 As further illustration
 we give details for the  choice of parameters 
 $T=\pow$
 and $F=1+\Sigma\times\place$. This is the suitable choice to deal with
 the first system in~(\ref{diagram:firstExampleOfSystems}).

 Now the coinduction
 diagram looks as follows. Recall that an initial $F$-algebra 
 is carried by the set $\Sigma^{*}$ of finite words.
   \begin{equation}\label{diagram:coinductionForLTS}
 \vcenter{
 \xymatrix@C+5em@R=1em{
  {1+\Sigma\times X}
                \ar@{-->}[r]^{1+\Sigma\times\trace{c}} 
 &
  {1+\Sigma\times \Sigma^{*}}
 \\
  {X}
                \ar[u]^{c}
                \ar@{-->}[r]_{\trace{c}}
 &
  {\Sigma^{*}}
                \ar[u]_{J([\mathsf{nil},\mathsf{cons}])^{-1}}^{\cong}
}
} \qquad
  \text{in $\Kleisli{\pow}$}
 \end{equation}
 It assigns, to a system $c$, a function $\trace{c}: X\to
 \pow(\Sigma^{*})$ which carries a state $x\in X$ to the set of
 finite words on $\Sigma$ which can possibly arise as an execution
 ``trace'' of $c$ starting from $x$.
 The commutativity states equality of two arrows $X\rightrightarrows
 1+\Sigma\times\Sigma^{*}$ in $\Kleisli{\pow}$,
 that is, functions $X\rightrightarrows \pow
 (1+\Sigma\times\Sigma^{*})$. 
 Let us denote these functions by
 \begin{displaymath}
  u = (1+\Sigma\times\trace{c})\co c
  \quad\text{(up, then right),}
  \qquad
  v = {J([\mathsf{nil},\mathsf{cons}])^{-1}} \co \trace{c}
  \quad\text{(right, then up).}
 \end{displaymath}
For each $x\in X$,
 the following conditions---derived straightforwardly by
 definition of composition of $\Kleisli{\pow}$, lifting of the functor
 $1+\Sigma\times\place$, etc.---specify $u$ and $v$'s value
 at $x$, as a subset of $1+\Sigma\times \Sigma^{*}$.
 \begin{displaymath}
  \begin{array}{rcl}
   \checkmark\in u(x)
  &
   \Longleftrightarrow 
  &
   \checkmark\in c(x) 
  \\
   (a,\sigma)\in u(x)
  &
   \Longleftrightarrow 
  &
   {   \exists x'\in X.\;\bigl(
    \,(a,x')\in c(x) \,\land\, \sigma\in \trace{c}(x')\,\bigr)
}
  \\
   \checkmark\in v(x)
  &
   \Longleftrightarrow 
  &
   \tuple{}\in\trace{c}(x)
  \\
   (a,\sigma)\in v(x)
  &
   \Longleftrightarrow 
  &
   a\cdot \sigma\in\trace{c}(x)
  \end{array}
 \end{displaymath}
 Commutativity of  (\ref{diagram:coinductionForLTS}) amounts to $u=v$; this 
 gives the condition (\ref{equation:conventionalDefOfTraceSetForLTS}).

 From a different point of view we can also express that as follows: finality of the
 coalgebra $\Sigma^{*}\iso 1+\Sigma\times\Sigma^{*}$ in
 (\ref{diagram:coinductionForLTS}) ensures that the conventional
 recursive definition (\ref{equation:conventionalDefOfTraceSetForLTS}) 
 uniquely determines a function $\trace{c}:X\to\pow(\Sigma^{*})$. Hence
 $\trace{c}$ is \emph{well-defined}. 

   An easy consequence of the recursive definition
  (\ref{equation:conventionalDefOfTraceSetForLTS})
 is
 \begin{displaymath}
  a_{1}\dotsc a_{n}\in\trace{c}(x)
  \quad\Longleftrightarrow\quad
  \exists x_{1},\dotsc, x_{n}\in X.\quad
  x\stackrel{a_{1}}{\to}\cdots\stackrel{a_{n}}{\to}x_{n}
  \to \checkmark\enspace.
 \end{displaymath}
 Therefore every trace $a_{1}\dotsc a_{n}\in\trace{c}(x)$ has
  termination $\checkmark$ implicit at its tail.
 In particular, the set $\trace{c}(x)$ is not necessarily prefix-closed:
  $a_{1}\dotsc a_{n}a_{n+1}\dotsc a_{n+m}\in\trace{c}(x)$ does not imply
  $a_{1}\dotsc a_{n}\in\trace{c}(x)$.
\end{exa}

\begin{exa}
\label{example:traceSemForLiftMonadAndLTSWithTerm}
Let us take $T=\lift$ (the lift monad) and $F=1+\Sigma\times\place$. In this
 case a coalgebra $X\stackrel{c}{\to} \lift(1+\Sigma\times X)$ in $\Sets$ is a system which can
 \begin{enumerate}[$\bullet$]
  \item get into a deadlock ($c(x)=\bot$ where $\lift=\{\bot\}+\place$),
  \item successfully terminate ($c(x)=\checkmark$ where $F=\{\checkmark\}+\Sigma\times\place$), or
  \item output a letter from $\Sigma$ and move to the next state ($c(x)=(a,x')$).
 \end{enumerate}
 By examining trace semantics for such systems,
 we shall formally put the difference between the computational meanings
 of the two elements, $\bot$ and $\checkmark$.
 
 The coinduction diagram (\ref{diagram:coinductionInKleisliCategory})
 instantiates to the same diagram as (\ref{diagram:coinductionForLTS}),
 but now in the category $\Kleisli{\lift}$. Easy calculation shows that
 its commutativity amounts to the following condition. The function
 \begin{displaymath}
\vcenter{  
\xymatrix@1@C+2em{
  {X} 
     \ar[r]^-{\trace{c}}
 &
  {\lift (\Sigma^{*})
    =
  \{\bot\} +\Sigma^{*}}
}}  \qquad\text{in $\Sets$}
 \end{displaymath}
 satisfies, for each $x\in X$,
\begin{equation}\label{equation:conventionalDefForTraceSemForLTSWithDeadlock}
   \begin{array}{lcl}
   \trace{c}(x) = \tuple{}
  &
   \Longleftrightarrow 
  &
   c(x) = \checkmark \enspace,
  \\
   \trace{c}(x) = a\cdot \sigma
  &
   \Longleftrightarrow 
  &
   {   \exists x'\in X.\;\bigl(
    \,c(x) = (a,x') \,\land\, \trace{c}(x') = \sigma \,\bigr)\enspace,
 }
  \\
   \trace{c}(x) = \bot
  &
   \Longleftrightarrow 
  &
   c(x) = \bot \quad\text{or}\quad \exists x'\in X.\;\bigl(\,c(x)=(a,x')\,\land\, \trace{c}(x')=\bot\,\bigr)\enspace.
  \end{array}
\end{equation}
 Here $\sigma\in\Sigma^{*}$ is a word in $\Sigma$. 

 For the systems under consideration, we can think of three
 different kinds of possible executions.
\begin{enumerate}[$\bullet$]
 \item An execution eventually hitting $\checkmark$, that is, 
 $x\stackrel{a_{1}}{\to}\cdots\stackrel{a_{n}}{\to}x_{n}\to\checkmark$.
 By the condition
       (\ref{equation:conventionalDefForTraceSemForLTSWithDeadlock})
 it yields a word 
\begin{math}
  \trace{c}(x)= a_{1}\dotsc a_{n}
\end{math} 
as its trace.
 \item An execution eventually hitting $\bot$, that is, 
 $x\stackrel{a_{1}}{\to}\cdots\stackrel{a_{n}}{\to}x_{n}\to\bot$.
 By the third line of
       (\ref{equation:conventionalDefForTraceSemForLTSWithDeadlock})
       we see that $\trace{c}(x_{n})=\bot$; moreover
       $\trace{c}(x_{n-1})=\cdots=\trace{c}(x)=\bot$. It properly reflects our
       intuition that a state $x$ that eventually goes into deadlock does
       not yield a finite (or terminating) trace.
 \item An execution not hitting $\checkmark$ nor $\bot$, that is,
 $x\stackrel{a_{1}}{\to}
  x_{1}\stackrel{a_{2}}{\to}
  \cdots$. In this case, the only possible solution of the ``recursive
       equation''
       (\ref{equation:conventionalDefForTraceSemForLTSWithDeadlock})
       is $\trace{c}(x)=\trace{c}(x_{1})=\cdots=\bot$. The intuition
       here is: a state leading to \emph{livelock} does not yield a finite trace.
\end{enumerate}
\end{exa}

\subsection{Infinite traces}
\label{subsection:infiniteTraces}
The trace semantics obtained via coinduction
(Corollary~\ref{corollary:traceSemanticsForCoalgebras})
assigns, to each state $x\in X$, ``a set of'' (if $T=\pow$) 
or ``a distribution over'' (if $T=\dist$) elements of the initial algebra $A$.
Elements of $A$ are thought of as possible linear behavior of
the system determined by the transition type (i.e.\ the functor $F$).

Now the intuition is that an initial $F$-algebra $A$ consists of the 
well-founded (or finite-depth) terms and a final $F$-coalgebra $Z$
consists of the
possibly non-well-founded (or infinite-depth) terms.
For example,
\begin{enumerate}[$\bullet$]
 \item for $F=1+\Sigma\times \place$, $A=\Sigma^{*}$ consists of all
       the finite words, and
       $Z=\Sigma^{\infty}=\Sigma^{*}+\Sigma^{\omega}$ is augmented
       with \emph{streams}, i.e.\ infinite words;
 \item for $F=(\Sigma+\place)^{*}$, $A$ is the set of finite-depth
       \emph{skeletal parse trees} (see~\cite{HasuoJ05b}), and $Z$
       additionally contains  infinite-depth ones;
 \item for $F=\Sigma\times \place$ which models LTSs \emph{without}
       explicit termination, $A=0$ and $Z=\Sigma^{\omega}$.
\end{enumerate}
Therefore our trace semantics $X\to TA$ only takes account of finite, well-founded
linear-time behavior but not infinite ones. This is why the trace set
(\ref{equation:introExampleTraceSet}) does not contain $ab^{\omega}$;
and also why we have been talking about LTSs \emph{with} explicit
termination---otherwise the finite trace semantics is always empty.

Designing a coalgebraic framework to capture possibly infinite trace semantics 
is the main aim of~\cite{Jacobs04c}. The work is done exclusively in a
non-deterministic setting and the main result reads as follows.
\begin{thm}[Possibly infinite trace semantics for
 coalgebras,~\cite{Jacobs04c}]
 \label{theorem:coalgebraicCharOfPossiblyInfiniteTraces}
 Let $F$ be a shapely functor on $\Sets$, and $\zeta: Z\iso FZ$ be a
 final coalgebra in $\Sets$. The coalgebra 
 \begin{displaymath}
  J\zeta \;:\; Z\longrightarrow \oF Z \qquad \text{in $\Kleisli{\pow}$}
 \end{displaymath}
 is weakly final: that is, given a coalgebra $c: X\to\oF X$,
 there is a morphism from $c$ to $J\zeta$ but the morphism is not necessarily unique. 
   \begin{equation}\label{diagram:weaklyFinalCoalgebraInKleisli}
 \vcenter{
 \xymatrix@C+5em@R=1em{
  {\oF X}
                \ar@{~>}[r]^{\oF (\inftytrace{c})} 
 &
  {\oF Z}
 \\
  {X}
                \ar[u]^{c}
                \ar@{~>}[r]_{\inftytrace{c}}
 &
  {Z}
                \ar[u]_{J\zeta}^{\cong}
}
} \qquad
  \text{in $\Kleisli{\pow}$}
 \end{equation}
 
 Still there  
 is a canonical choice $\inftytrace{c}$
 among such morphisms, namely the one which is
 maximal with respect to the inclusion order.
 We shall call the function $\inftytrace{c}: X\to \pow Z$ the
 \textbf{possibly-infinite trace semantics} for $c$.
 \qed
\end{thm}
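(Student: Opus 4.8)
The plan is to recognise morphisms from $c$ to $J\zeta$ as the fixed points of a monotone operator on a complete lattice, and then to read off both existence and the canonical maximal choice directly from the Knaster--Tarski theorem. Since $J\zeta$ is an isomorphism with inverse $J(\zeta^{-1})$, commutativity of~(\ref{diagram:weaklyFinalCoalgebraInKleisli}) for an arrow $f\colon X\to Z$ in $\Kleisli{\pow}$ is equivalent to the fixed-point equation $f=\Phi(f)$, where
\[
\Phi\;:\;\Kleisli{\pow}(X,Z)\longrightarrow\Kleisli{\pow}(X,Z)\,,\qquad
\Phi(f)=J(\zeta^{-1})\co\oF(f)\co c\,.
\]
Thus a morphism of coalgebras from $c$ to $J\zeta$ is \emph{exactly} a fixed point of $\Phi$, and the problem reduces to analysing $\Phi$.

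First I would observe that the homset $\Kleisli{\pow}(X,Z)$, being the set of functions $X\to\pow Z$ ordered pointwise by inclusion, is a complete lattice. Next I would check that $\Phi$ is monotone: if $f\sqsubseteq g$ then $\oF f=\Relf(f)\sqsubseteq\Relf(g)=\oF g$ by monotonicity of relation lifting (recall the description~(\ref{equation:liftingOfFIsRelationLifting})), and composition in $\Kleisli{\pow}\cong\Rel$ is monotone in each argument, so $\Phi(f)\sqsubseteq\Phi(g)$. By Knaster--Tarski, $\Phi$ then has fixed points; in particular it possesses a greatest fixed point, which I would take as $\inftytrace{c}$. This single step simultaneously establishes weak finality (a morphism exists) and the claimed canonical choice (the greatest fixed point is the $\sqsubseteq$-maximal morphism into $J\zeta$).

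For non-uniqueness it suffices to exhibit one coalgebra with two distinct solutions. Taking $F=\Sigma\times\place$, so that $Z=\Sigma^{\omega}$, consider the one-state system $c(x)=\{(a,x)\}$ with a single self-loop. Unwinding the definitions gives $\Phi(f)(x)=\{a\cdot w\mid w\in f(x)\}$, so that both $f(x)=\emptyset$ and $f(x)=\{a^{\omega}\}$ are fixed points; the former is the least and the latter the greatest, witnessing that the morphism into $J\zeta$ is not unique, while singling out $\inftytrace{c}(x)=\{a^{\omega}\}$ as the intended infinite trace.

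The only point requiring genuine care is the monotonicity of $\Phi$, and once the relation-lifting presentation $\oF=\Relf(\place)$ and its monotonicity are in hand this is routine; note that no order-enrichment beyond the complete lattice of the homset is used, so this argument is lighter than that of Theorem~\ref{theorem:main}. The conceptually delicate issue is not the proof but the \emph{choice} of fixed point: the least fixed point of $\Phi$ accounts only for finite, well-founded behaviour (and collapses to $\emptyset$ in examples like the self-loop above, where $A=0$), whereas the genuinely infinite traces are recovered precisely by the greatest fixed point. This is exactly why $\inftytrace{c}$ is defined as the $\sqsubseteq$-maximal morphism rather than as a least solution.
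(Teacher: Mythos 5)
Your proof is correct, but it follows a genuinely different route from the one the paper relies on. The paper does not actually prove Theorem~\ref{theorem:coalgebraicCharOfPossiblyInfiniteTraces}: it imports it from~\cite{Jacobs04c}, and Section~\ref{subsection:infiniteTraces} explicitly notes that the proof given there is ``fairly concrete''---it manipulates explicit, run-like descriptions of elements of the final coalgebra of a shapely functor, which is precisely why the statement is restricted to shapely $F$ and why the paper laments that ``a categorical principle behind it is less clear'' than for finite traces. Your argument supplies such a principle: morphisms from $c$ to $J\zeta$ are exactly the fixed points of $\Phi(f)=J(\zeta^{-1})\co\oF f\co c$; the homset $\Kleisli{\pow}(X,Z)\cong\pow(X\times Z)$ is a complete lattice; $\Phi$ is monotone (monotonicity of $\Relf$ holds for \emph{every} functor, and relational composition is monotone in each argument); and Knaster--Tarski then yields a greatest fixed point, which is the maximum---hence the unique maximal---morphism. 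Your self-loop example correctly witnesses non-uniqueness, and your closing observation that the least fixed point is the embedded finite trace map matches the paper's discussion of~\cite{HasuoJ05b}. As for what each approach buys: your route is much lighter (no initial/final sequences, no limit-colimit coincidence, no choice principles, only local monotonicity of $\oF$ as in Lemma~\ref{lemma:distributiveLawForPowersetMonad} and~(\ref{equation:liftingOfFIsRelationLifting})), and it visibly uses nothing about $F$ beyond weak-pullback preservation plus the assumed final coalgebra in $\Sets$; so it in fact settles, for $T=\pow$, the question left open at the end of Section~\ref{subsection:infiniteTraces} about arbitrary weak-pullback-preserving functors. The concrete route of~\cite{Jacobs04c} buys instead the operational, run-based description of $\inftytrace{c}$. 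Note finally that your argument does not transfer to $T=\dist$ (the other half of that open question): homsets of $\Kleisli{\dist}$ are pointed $\omega$-cpos but not complete lattices, so a greatest fixed point is not available for free there.
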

 Note here that, when we take $F=1+\Sigma\times \place$ and $T=\pow$
 (the choice for LTSs with termination), 
 commutativity of (\ref{diagram:weaklyFinalCoalgebraInKleisli}) 
 boils down to exactly the same conditions as
 (\ref{equation:conventionalDefOfTraceSetForLTS}):
  \begin{equation}\label{equation:notValidDefForInfiniteTraceForLTS}
\begin{array}{rclcrcl}
    \tuple{} \in \inftytrace{c}(x) 
  & \,\Longleftrightarrow\,
  & x\to\checkmark ,
  &
   \quad
  &
   a\cdot \sigma \in \inftytrace{c}(x)
  & \,\Longleftrightarrow\,
  & \exists y.\; (\, x\stackrel{a}{\to} y \;\land\; \sigma\in \inftytrace{c}(y)\,) .
\end{array}  
\end{equation}
 Weak finality of $\Sigma^{\infty} \iso
 1+\Sigma\times\Sigma^{\infty}$ (corresponding to $Z\iso \oF Z$ in
 (\ref{diagram:weaklyFinalCoalgebraInKleisli}))
 means the following. The recursive definition
 (\ref{equation:notValidDefForInfiniteTraceForLTS})---although it looks valid at the first
 sight---does
 \emph{not} uniquely determine the infinite trace map $\inftytrace{c}: X\to
 \pow(\Sigma^{\infty})$.
 Instead, the map $\inftytrace{c}$ is the maximal one among those which satisfy
 (\ref{equation:notValidDefForInfiniteTraceForLTS}). 

 As an example take the first system in
 (\ref{diagram:firstExampleOfSystems}).  We expect its possibly-infinite
 trace map $X\to \pow (\Sigma^{\infty})$ to be such that $ x \mapsto
 ab^{*} + ab^{\omega} $ and $ y \mapsto b^{*} + b^{\omega} $.  Indeed
 this satisfies (\ref{equation:notValidDefForInfiniteTraceForLTS}) and
 is moreover the maximal. However, the function $x\mapsto ab^{*}$ and
 $y\mapsto b^{*}$---this is actually the finite trace $X\to\pow
 (\Sigma^{*})$ embedded along $\Sigma^{*}\hookrightarrow
 \Sigma^{\infty}$---also satisfies
 (\ref{equation:notValidDefForInfiniteTraceForLTS}).  In fact,
 \cite[Section 5]{HasuoJ05b}  
 shows a general fact that such an
 embedding of the finite trace map is the minimal one among those morphisms which
 make the diagram (\ref{diagram:weaklyFinalCoalgebraInKleisli}) commute.

  The coalgebraic characterization
 (Theorem~\ref{theorem:coalgebraicCharOfPossiblyInfiniteTraces}) of
 possibly-infinite trace semantics is not yet fully developed. In
 particular the current proof of
 Theorem~\ref{theorem:coalgebraicCharOfPossiblyInfiniteTraces} (in~\cite{Jacobs04c}) is fairly
 concrete and a categorical principle behind it is less clear than the
 one behind finite traces. Consequently  the result's applicability is limited:
 we do not know whether the result
 holds in a probabilistic setting; or whether it holds for any
 weak-pullback-preserving functor $F$.


\section{Trace semantics as testing
 equivalence}\label{section:traceSemanticsAsTestingEquivalence}
In this section we will observe that, in a non-deterministic setting,
the coalgebraic finite trace
semantics (i.e.\ coinduction in $\Kleisli{\pow}$) gives rise to a
canonical \emph{testing situation} in which a test is an element 
of the initial $F$-algebra $A$ in $\Sets$. Here $F$ specifies the
transition type, just as before.
The notion of testing situations (Definition~\ref{definition:testingSituations})
and its variants have attracted many authors' attention in the 
context of coalgebraic modal logic; our aim here is to demonstrate
genericity and pervasiveness of the notion of testing situations by presenting 
an example which is not much like modal logic (that is, propositional
logic plus modality).

In Section~\ref{subsection:testingSituations} we introduce the notion
of testing situations and investigate some of their general properties. 
Our main concern there is the comparison between two process
equivalences, namely \emph{testing equivalence} and \emph{equivalence
modulo final coalgebra semantics}. We present the equivalences
categorically as suitable kernel pairs; this makes the arguments simple and clean.
In
Section~\ref{subsection:testingForTraceSemantics} we present the
canonical testing situation for trace semantics. Moreover we show that it is
\emph{expressive}: the testing captures final coalgebra semantics, which
is now trace semantics.

\subsection{Testing situations}\label{subsection:testingSituations}
Recent
studies~\cite{KupkeKP04,BonsangueK05,BonsangueK06,Kurz06sigact,PavlovicMW06,Klin07}
on coalgebra and modal logic
have identified (variants of) the
following categorical situation as the essential underlying structure.
Following~\cite{PavlovicMW06}, we prefer using a more general term
``testing'': it subsumes ``modal logic'' in the following sense.
We learn properties
of a system through pass or failure of \emph{tests}; modal logic
constitutes a special case where tests are modal formulas.

\begin{defi}\label{definition:testingSituations}
 A \emph{testing situation} is the following situation 
of a contravariant adjunction $S^{\op}\dashv P$ and two endofunctors
$F,M$
 \begin{equation}\label{diagram:testingSituation}
 \xymatrix@1@C+3em{
  {\C^{\op}}
              \ar@(lu,ld)[]_{F^{\op}}
              \ar@/^.7em/[r]^{P}
              \ar@{}[r]|{\top}
 &
  {\A}
              \ar@(rd,ru)[]_{M}
              \ar@/^.7em/[l]^{S^{\op}}
 }
 \end{equation}
 plus a ``denotation'' natural transformation $\delta: MP\Rightarrow PF^{\op}: \C^{\op}\to \A$,
 which consists of arrows
 $MPX\stackrel{\delta_{X}}{\longrightarrow} PFX$ in $\A$.
\end{defi}
\noindent Note that the denotation $\delta$ is a parameter: the same
``syntax for tests''
$M:\A\to\A$ can have different interpretations with different $\delta$.

The requirements in Definition~\ref{definition:testingSituations} are the same as
in~\cite{PavlovicMW06,Klin07}. They are what we need to compare two
process semantics, namely  \emph{testing
equivalence}---which arises naturally from the concept of testing---and
final coalgebra semantics.\footnote{In fact we can be even more liberal: existence of a denotation $\delta$ can be
replaced by existence of a lifting $\hat{P}:\Coalg{F}^{\op}\to \Alg{M}$
of $P$. The results in this section nevertheless hold in that case.
The latter condition (there is a lifting $\hat{P}$) is strictly weaker than
the former
(there is a natural transformation $\delta$): obviously
$\delta$ induces $\hat{P}$ but not the other way round. Let
$\C=\omega^{\op}, \A=\omega, P=\id, F=(1+\place)^{\op}$ and $M=2+\place$.
Then both $\Coalg{F}$ and $\Alg{M}$ are the empty category hence
$P$ has the trivial lifting. However there is no natural transformation
$MPX\to PF^{\op}X$.}
We shall explain each ingredient's role, using the well-established terminology
of modal logic.
\begin{enumerate}[$\bullet$]
 \item The endofunctor $F:\C\to\C$
       makes $\Coalg{F}$ the category of ``systems,'' or ``Kripke
       models'' in modal logic.
 \item The category $\A$---typical examples being $\mathbf{Bool}$ of
       Boolean algebras or $\mathbf{Heyt}$ of Heyting algebras---is
       that of ``propositional logic.''
       The functor $M$ specifies ``modality'': modal
       operators and axioms. 
       Then $\Alg{M}$ is the category of ``modal algebras''; the initial
       $M$-algebra $ML\iso L$ is a ``modal logic'' consisting of modal formulas, modulo
       logical equivalence.
 \item The denotation $\delta$ specifies how the modality $M$ is interpreted
       via  transitions of type $F$. This allows to give ``Kripke
       semantics''
       for the modal logic: given a coalgebra (or a ``Kripke model'') $c: X\to FX$, 
       interpretation $\sem{\place}_{c}$
       of modal formulas therein is given by the following induction.
       \begin{equation}\label{diagram:interpretationOfModalFormulaViaInduction}
	\vcenter{\xymatrix@R=1em@C+3em{
          {ML}
                    \ar[dd]^{\cong}_{\scriptstyle\text{initial}}
                    \ar@{-->}[r]
         &
          {MPX}
                    \ar[d]^{\delta_{X}}
         \\
         &
          {PFX}    
                    \ar[d]^{Pc}
         \\
          {L}
                    \ar@{-->}[r]_{\sem{\place}_{c}}
         &
          {PX}                    
        }}
       \end{equation}       
 \item Why a right adjoint $S$ of $P^{\op}$? It allows us, via
       transposition, 
       to assign a modal ``theory'' to each state of a Kripke model.
       \begin{equation}\label{diagram:theoryAsTranspositionOfInterpretation}
       \vcenter{\infer=[(S^{\op}\dashv P)]{
                \xymatrix@1@C+3em{{X} \ar[r]^{\theory_{c}} & {SL}} \quad \text{in $\C$}
	        }
               {
                \xymatrix@1@C+3em{{L} \ar[r]^{\sem{\place}_{c}} & {PX}}\quad \text{in $\A$}
}}
       \end{equation}
       The theory $\theory_{c}(x)$ associated with a state $x$ contains
       precisely the modal formulas that hold at $x$.
\end{enumerate}
Following the above intuition, we define the categorical notion
of testing equivalence---two states are testing-equivalent
if they have the same modal theory.
\begin{defi}\label{definition:TestEq}
 Assume that we have a testing situation
 (\ref{diagram:testingSituation}), and
 that
 $\C$ has finite limits. 
On 
 a coalgebra $X\stackrel{c}{\to} FX$, the \emph{testing equivalence} $\TestEq_{c}$  is
 the kernel pair of the theory map $\theory_{c}$ defined by
 (\ref{diagram:interpretationOfModalFormulaViaInduction})
 and (\ref{diagram:theoryAsTranspositionOfInterpretation}).
 Equivalently, 
 \begin{equation}\label{diagram:defOfTestEq}
  \vcenter{\xymatrix@1@C+1em{
   {\TestEq_{c}}
                      \ar@{ >->}[r]^{\scriptstyle\tuple{p_{1},p_{2}}}
  &
   {X\times X}
                      \ar@<.2em>[r]^-{\scriptstyle\theory_{c}\co \pi_{1}}
                      \ar@<-.2em>[r]_-{\scriptstyle\theory_{c}\co \pi_{2}}
  &
   {SL}
}}
 \end{equation}
 is an equalizer.
\end{defi}

Similarly, we introduce the categorical notion of ``equivalence modulo
final coalgebra semantics''; we shall call it \emph{FCS-equivalence} for short.
\begin{defi}\label{definition:FCSEq}
 Assume that there is a final $F$-coalgebra $\zeta: Z\iso FZ$, and that
$\C$ has finite limits. 
On a coalgebra $X\stackrel{c}{\to}FX$, 
the \emph{FCS-equivalence} $\FCSEq_{c}$ is the kernel pair of
the unique map $\beh_{c}:X\to Z$ induced by finality. Equivalently,
 \begin{equation}\label{diagram:defOfFCSEq}
  \vcenter{\xymatrix@1@C+1em{
   {\FCSEq_{c}}
                      \ar@{ >->}[r]^{\scriptstyle\tuple{q_{1},q_{2}}}
  &
   {X\times X}
                      \ar@<.2em>[r]^-{\scriptstyle\beh_{c}\co \pi_{1}}
                      \ar@<-.2em>[r]_-{\scriptstyle\beh_{c}\co \pi_{2}}
  &
   {Z}
}}
 \end{equation}
 is an equalizer.
\end{defi}

It is easily seen that the two ``relations'' $\TestEq_{c}$ and $\FCSEq_{c}$
on $X$ are \emph{equivalence
relations} in the sense of~\cite[Section~1.3]{Jacobs99a}. That is, they satisfy the
reflexivity, symmetry, and transitivity conditions when the conditions
are suitably formulated in categorical terms.

\auxproof{
 \begin{rem}
 We present a process equivalence as a \emph{kernel pair}
 $R\rightrightarrows X$ in $\C$.  In~\cite{PavlovicMW06}, instead, a factorization
 structure $(\mathcal{E},\mathcal{M})$ on $\C$ is assumed and an equivalence $R$ on $X$ is presented
 as an $\mathcal{E}$-arrow $X\epi X/R$. The intuition is that this is the quotient
 map for $R$.  It is often the case that 
 these two presentations of  equivalence relations are
 equivalent: obviously in $\C=\Sets$; more generally in a \emph{regular}
  category $\C$.

 A category is said to be regular if it has finite limits
 and a factorization structure which is compatible with limits in a
 suitable sense. If it is the case, a relation $R\rightrightarrows X$ 
 induces a ``quotient map'' $X\epi X/R$ as its coequalizer (which always
 exists).
 Conversely, given an $\mathcal{E}$-arrow $X\stackrel{e}{\epi} X/R$ the
 corresponding relation $R\rightrightarrows X$ can
 be recovered as its kernel pair. See e.g.~\cite[Section~4.4]{Jacobs99a}
 for details.

 \end{rem}
The following standard result shows that both $\TestEq_{c}$ and $\FCSEq_{c}$
are ``equivalence relations'' in a suitable sense. Its proof is straightforward.
\begin{prop}
 In a category with finite limits, a kernel pair $r_{1}, r_{2}: R\rightrightarrows X$
 induces an  \textbf{equivalence
 relation} $\tuple{r_{1}, r_{2}}: R\mono X\times X$ (see
 e.g.~\cite[Section~1.3]{Jacobs99a}). That is, it satisfies
 \begin{enumerate}[$\bullet$]
  \item \textbf{Reflexivity}. The diagonal relation $X\mono X\times X$ 
        factors  as follows.
	\begin{displaymath}
	   \vcenter{\xymatrix@R=1em@C+3em{
	    {R}
			       \ar@{ >->}[r]^{\scriptstyle\tuple{r_{1},r_{2}}}
	   &
	    {X\times X}
	   \\
	    {X}
                               \ar@{-->}[u]
                               \ar@{ >->}[ru]_{\tuple{\id,\id}}
	 }}
	\end{displaymath}
  \item \textbf{Symmetry}. The reversed relation $\tuple{r_{2}, r_{1}}$
	factors as follows.
	\begin{displaymath}
	   \vcenter{\xymatrix@R=1em@C+3em{
	    {R}
			       \ar@{ >->}[r]^{\scriptstyle\tuple{r_{1},r_{2}}}
	   &
	    {X\times X}
	   \\
	    {R}
                               \ar@{-->}[u]
                               \ar@{ >->}[ru]_{\tuple{r_{2}, r_{1}}}
	 }}
	\end{displaymath}
  \item \textbf{Transitivity}. Let $Q$ be a pullback as in the following
	diagram.
	The intuition is that $Q=\{(x,y,z)\mid (x,y) \in R \,\land\, (y,z)\in R\}$.
        Then the subobject $\tuple{r_{1}\co r_{12},\; r_{2}\co r_{23}}: Q\mono X\times X$ factors 
	as in the diagram on the right.
	\begin{displaymath}
	   \vcenter{\xymatrix@R=1em@C=2em{
	    {Q}
	                       \pb{315}
	                       \ar[r]^{r_{23}}
	                       \ar[d]_{r_{12}}
	   &
	    {R}
                               \ar[d]^{r_{1}}
	   \\
	    {R}
                               \ar[r]_{r_{2}}
           &
            {X}
	 }}
           \qquad
	   \vcenter{\xymatrix@R=1em@C+3em{
	    {R}
			       \ar@{ >->}[r]^{\scriptstyle\tuple{r_{1},r_{2}}}
	   &
	    {X\times X}
	   \\
	    {Q}
                               \ar@{-->}[u]
                               \ar@{ >->}[ru]_*+{\scriptstyle\tuple{r_{1}\co r_{12},\; r_{2}\co r_{23}}}
	 }}
	\end{displaymath}
 \end{enumerate}
In particular,  both $\TestEq_{c}$ and $\FCSEq_{c}$ are equivalence
 relations in the above sense. \qed
\end{prop}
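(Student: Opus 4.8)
The plan is to run the whole argument off the single universal property that defines a kernel pair: $r_{1},r_{2}\colon R\rightrightarrows X$ is, by definition, the pullback of some morphism $f\colon X\to Y$ along itself, so that $f\co r_{1}=f\co r_{2}$ and, conversely, a map into $R$ is precisely a pair of maps into $X$ that are equalized by $f$. Each of the three clauses then amounts to exhibiting one mediating arrow into $R$. I would first record that $\tuple{r_{1},r_{2}}\colon R\to X\times X$ is monic---if $\tuple{r_{1},r_{2}}\co g=\tuple{r_{1},r_{2}}\co h$ then $r_{1}\co g=r_{1}\co h$ and $r_{2}\co g=r_{2}\co h$, whence $g=h$ by uniqueness of the pullback mediator---so that $R$ is a genuine subobject of $X\times X$ and each ``factors as'' claim is witnessed by the mediating arrow alone.

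For \textbf{reflexivity}, the trivial equality $f\co\id=f\co\id$ yields a unique $\rho\colon X\to R$ with $r_{1}\co\rho=r_{2}\co\rho=\id$, and then $\tuple{r_{1},r_{2}}\co\rho=\tuple{\id,\id}$ factors the diagonal. For \textbf{symmetry}, the kernel-pair equation $f\co r_{2}=f\co r_{1}$ lets the swapped pair $(r_{2},r_{1})$ induce a unique $\sigma\colon R\to R$ with $r_{1}\co\sigma=r_{2}$ and $r_{2}\co\sigma=r_{1}$, so that $\tuple{r_{1},r_{2}}\co\sigma=\tuple{r_{2},r_{1}}$.

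The \textbf{transitivity} clause is the only one requiring a genuine computation, and I expect it to be the main (though still short) obstacle. With $Q$ the pullback of $r_{2}$ against $r_{1}$, the defining square gives $r_{2}\co r_{12}=r_{1}\co r_{23}$. To factor $\tuple{r_{1}\co r_{12},\,r_{2}\co r_{23}}$ through $R$ I must check that this pair is equalized by $f$, which is the chain
\begin{displaymath}
 f\co r_{1}\co r_{12}\;=\;f\co r_{2}\co r_{12}\;=\;f\co r_{1}\co r_{23}\;=\;f\co r_{2}\co r_{23}\enspace,
\end{displaymath}
whose outer two steps use $f\co r_{1}=f\co r_{2}$ and whose middle step uses the defining square of $Q$. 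The universal property then supplies a unique $\tau\colon Q\to R$ with $r_{1}\co\tau=r_{1}\co r_{12}$ and $r_{2}\co\tau=r_{2}\co r_{23}$, giving $\tuple{r_{1},r_{2}}\co\tau=\tuple{r_{1}\co r_{12},\,r_{2}\co r_{23}}$ as required. Note that the hypothesis of finite limits is consumed exactly in forming the pullback $Q$ (and, implicitly, the products $X\times X$).

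Finally, since by Definitions~\ref{definition:TestEq} and~\ref{definition:FCSEq} the relations $\TestEq_{c}$ and $\FCSEq_{c}$ are the kernel pairs of $\theory_{c}$ and $\beh_{c}$ respectively, the general result applies to each verbatim, which establishes the closing sentence of the statement.
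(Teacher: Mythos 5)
Your proof is correct and takes essentially the same approach as the paper's own argument: everything is extracted from the universal property of the kernel pair as a pullback of $f$ along itself, and your transitivity chain $f\co r_{1}\co r_{12}=f\co r_{2}\co r_{12}=f\co r_{1}\co r_{23}=f\co r_{2}\co r_{23}$ (outer steps by $f\co r_{1}=f\co r_{2}$, middle step by the pullback square defining $Q$) is exactly the computation the paper records. The paper dismisses reflexivity and symmetry as straightforward and only writes out that transitivity step, so your fuller write-up, including the observation that $\tuple{r_{1},r_{2}}$ is monic, is if anything slightly more complete than the original.
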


Let $r_{1}, r_{2}$ be the kernel pair for $f$. For transitivity, we show
that the pair of $r_{1}\co r_{12}$ and $r_{2}\co r_{23}$ is equalized by
$f$.
\begin{align*}
 f\co r_{1}\co r_{12}
&=
 f\co r_{2}\co r_{12}
 &&\text{$f$ equalizes $r_{1}$ and $r_{2}$}
\\
&=
 f\co r_{1}\co r_{23}
 &&\text{pullback diagram}
\\
&=
 f\co r_{2}\co r_{23}\enspace.
\end{align*}
}

Now our concern is the comparison between two process semantics
$\TestEq_{c}$ and $\FCSEq_{c}$, as subobjects of $X\times X$.
The following lemma is crucial for our investigation; in fact it
is important for coalgebraic modal logic in general and appears e.g.\ as~\cite[Theorem~3.3]{Klin07}.
\begin{lem}\label{lemma:theoryFactorsThroughCoalgebraMorphism}
 A morphism of $F$-coalgebras preserves theory maps. That is,
 \begin{displaymath}
   \vcenter{
 \xymatrix@C+3em@R=1em{
  {F X}
                \ar[r]^{Ff}
 &
  {F Y}
 \\
  {X}
                \ar[u]^{c}
                \ar[r]_{f}
 &
  {Y}
                \ar[u]_{d}
}
}
\quad\text{implies}\quad
   \vcenter{
 \xymatrix@C+3em@R=1em{
  {X}
                \ar[rd]^{\theory_{c}}
                \ar[d]_{f}
 &
 \\
  {Y}
                \ar[r]_{\theory_{d}}
 &
  {SL}
  \shifted{2em}{-.2em}{.}
}
}
 \end{displaymath}
\end{lem}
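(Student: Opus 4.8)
The plan is to transport the desired identity $\theory_d\co f=\theory_c$ across the dual adjunction $S^{\op}\dashv P$ and thereby reduce it to a statement about interpretation maps, which can then be settled by initiality of the modal logic $L$. Recall that $\theory_c$ is by construction the transpose of $\sem{\place}_c\colon L\to PX$, and likewise $\theory_d$ the transpose of $\sem{\place}_d\colon L\to PY$; see (\ref{diagram:theoryAsTranspositionOfInterpretation}).

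First I would invoke naturality of the transposition bijection $\A(L,PX)\cong\C(X,SL)$ in its $\C$-argument. Since $P$ is contravariant, precomposing $\theory_d$ with $f\colon X\to Y$ corresponds under the bijection to postcomposing $\sem{\place}_d$ with $Pf\colon PY\to PX$. Hence
\[
\theory_d\co f=\theory_c \quad\text{in $\C$}
\qquad\Longleftrightarrow\qquad
Pf\co\sem{\place}_d=\sem{\place}_c \quad:\;L\longrightarrow PX\ \text{in $\A$}.
\]
So it suffices to establish the right-hand equation in $\A$.

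Second, I would prove that equation by initiality. By definition (diagram (\ref{diagram:interpretationOfModalFormulaViaInduction})) $\sem{\place}_c$ is the \emph{unique} $M$-algebra morphism from the initial algebra $ML\iso L$ to the $M$-algebra $Pc\co\delta_X\colon MPX\to PX$. Thus it is enough to check that $Pf\co\sem{\place}_d$ is \emph{also} an $M$-algebra morphism into $(PX,\,Pc\co\delta_X)$; uniqueness then forces the two maps to agree. Verifying the algebra-morphism square is a short chase combining three ingredients: (a) $\sem{\place}_d$ is an $M$-algebra morphism into $(PY,\,Pd\co\delta_Y)$; (b) naturality of $\delta\colon MP\Rightarrow PF^{\op}$ at $f$, giving $\delta_X\co MPf=P(Ff)\co\delta_Y$; and (c) the coalgebra-morphism hypothesis $Ff\co c=d\co f$, which upon applying the contravariant $P$ becomes $Pc\co P(Ff)=Pf\co Pd$. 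Chaining these and using $M(Pf\co\sem{\place}_d)=MPf\co M\sem{\place}_d$,
\[
Pc\co\delta_X\co MPf\co M\sem{\place}_d
= Pf\co Pd\co\delta_Y\co M\sem{\place}_d
= (Pf\co\sem{\place}_d)\co(ML\iso L),
\]
which is precisely the algebra-morphism square for $Pf\co\sem{\place}_d$. Initiality of $L$ then yields $Pf\co\sem{\place}_d=\sem{\place}_c$, and transposing back gives $\theory_d\co f=\theory_c$.

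The main obstacle is not the computation but the variance bookkeeping: $P$ is contravariant, $\delta$ is a transformation between functors $\C^{\op}\to\A$, and the adjunction $S^{\op}\dashv P$ is a dual one. I expect the real care to go into stating, once and for all, the directions of $Pf$, $P(Ff)$, and the naturality square for $\delta$ so that the chase reads off cleanly; once those are pinned down, every step above is a rewrite by (a)--(c) and there is no genuine difficulty.
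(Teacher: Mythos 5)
Your proposal is correct and is essentially the paper's own proof: the paper likewise establishes $Pf\co\sem{\place}_{d}=\sem{\place}_{c}$ by pasting the defining initiality square for $\sem{\place}_{d}$ with the naturality square for $\delta$ and the image under $P$ of the coalgebra-morphism square, then concludes by naturality of the transposition. The only difference is cosmetic ordering --- you transpose first and then invoke initiality, while the paper does the $\A$-side computation first --- so the mathematical content is identical.
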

\proof
 The following induction diagram
 proves
 $Pf\co\sem{\place}_{d}=\sem{\place}_{c}$.
 Naturality of $\delta$ plays an important role there.
\begin{displaymath}
    \vcenter{
 \xymatrix@C+3em@R=1em{
  {ML}
                \ar@{-->}[r]^{Ff}
                \ar[dd]^{\cong}_{\scriptstyle\text{initial}}
 &
  {MPY}
                \ar[r]^{MPf}
                \ar[d]^{\delta_{Y}}
 &
  {MPX}
                \ar[d]^{\delta_{X}}
 \\
 &
  {PFY}
                \ar[r]_{PFf}
                \ar[d]^{Pd}
 &
  {PFX}
                \ar[d]^{Pc}
 \\
  {L}
                \ar@{-->}[r]_{\sem{\place}_{d}}
 &
  {PY}
                \ar[r]_{Pf}
 &
  {PX}
  \shifted{2em}{-.2em}{.}
}
}
\end{displaymath}
Then the claim follows from naturality of the
 transposition~(\ref{diagram:theoryAsTranspositionOfInterpretation}).
 \qed

We show that in a testing situation like
(\ref{diagram:testingSituation}),
tests respect final coalgebra semantics. 
That is, testing does not distinguish two FCS-equivalent states.
\begin{prop}\label{proposition:TestEqIsCoarserThanFCSEq}
 Consider such a testing situation and equivalence relations as in Definitions~\ref{definition:TestEq}
 and \ref{definition:FCSEq}. For any coalgebra $X\stackrel{c}{\to}FX$ we
 have an inclusion 
\begin{displaymath}
  \FCSEq_{c}\le\TestEq_{c}
\end{displaymath} 
of subobjects of
 $X\times X$.
\end{prop}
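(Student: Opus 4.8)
The plan is to derive everything from Lemma~\ref{lemma:theoryFactorsThroughCoalgebraMorphism}, which states that a morphism of $F$-coalgebras preserves theory maps. The key observation is that the behavior map $\beh_{c}: X\to Z$ is itself a morphism of $F$-coalgebras, namely the unique one from $c$ to the final coalgebra $\zeta: Z\iso FZ$. Applying Lemma~\ref{lemma:theoryFactorsThroughCoalgebraMorphism} to this morphism immediately yields the factorization
\begin{displaymath}
 \theory_{c} = \theory_{\zeta}\co \beh_{c}\enspace,
\end{displaymath}
that is, the theory map of $c$ routes through the behavior map followed by the theory map of the final coalgebra.

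With this factorization available, the inclusion follows from a short diagram chase using the universal properties of the two kernel pairs. First I would recall that $\FCSEq_{c}$ comes equipped with projections $q_{1}, q_{2}: \FCSEq_{c}\to X$ satisfying $\beh_{c}\co q_{1} = \beh_{c}\co q_{2}$, since $\FCSEq_{c}$ is by definition the kernel pair of $\beh_{c}$ (Definition~\ref{definition:FCSEq}). Composing on the left with $\theory_{\zeta}$ and invoking the factorization above, I would then compute
\begin{displaymath}
 \theory_{c}\co q_{1} = \theory_{\zeta}\co\beh_{c}\co q_{1} = \theory_{\zeta}\co\beh_{c}\co q_{2} = \theory_{c}\co q_{2}\enspace,
\end{displaymath}
so that the theory map $\theory_{c}$ equalizes the two projections of $\FCSEq_{c}$.

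Finally, since $\TestEq_{c}$ is the equalizer of $\theory_{c}\co\pi_{1}$ and $\theory_{c}\co\pi_{2}$ (Definition~\ref{definition:TestEq}), the equation just derived says precisely that the pair $(\theory_{c}\co\pi_{1},\, \theory_{c}\co\pi_{2})$ equalizes the subobject $\tuple{q_{1}, q_{2}}: \FCSEq_{c}\mono X\times X$. Hence by the universal property of the equalizer it factors uniquely through $\tuple{p_{1},p_{2}}: \TestEq_{c}\mono X\times X$, which is exactly the inclusion $\FCSEq_{c}\le\TestEq_{c}$ of subobjects of $X\times X$. I expect the only genuine point to be the identification of $\beh_{c}$ as a coalgebra morphism into the final coalgebra, so that Lemma~\ref{lemma:theoryFactorsThroughCoalgebraMorphism} applies; once the factorization $\theory_{c} = \theory_{\zeta}\co\beh_{c}$ is in hand, the rest is a purely formal consequence of presenting both equivalences as kernel pairs.
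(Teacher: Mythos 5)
Your proposal is correct and follows essentially the same route as the paper's proof: both apply Lemma~\ref{lemma:theoryFactorsThroughCoalgebraMorphism} to the coalgebra morphism $\beh_{c}$ from $c$ to the final coalgebra $\zeta$ to obtain the factorization $\theory_{c} = \theory_{\zeta}\co\beh_{c}$, and then use it to show that $\tuple{q_{1},q_{2}}$ equalizes the parallel pair $\theory_{c}\co\pi_{1}$, $\theory_{c}\co\pi_{2}$, so that universality of the equalizer $\TestEq_{c}$ yields the inclusion $\FCSEq_{c}\le\TestEq_{c}$. The only difference is cosmetic (your chain of equalities suppresses the trivial steps $\theory_{c}\co\pi_{i}\co\tuple{q_{1},q_{2}} = \theory_{c}\co q_{i}$ that the paper writes out), and a minor wording slip where you say the pair of arrows ``equalizes the subobject'' rather than the other way around.
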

\proof
 It suffices to show that the arrow $\tuple{q_{1},q_{2}}$ in
 (\ref{diagram:defOfFCSEq}) equates the parallel arrows in
 (\ref{diagram:defOfTestEq}); then the claim follows from
 universality of an equalizer.
 \begin{align*}
  \theory_{c}\co \pi_{1}\co \tuple{q_{1},q_{2}}
 &=
  \theory_{c}\co q_{1}
 \\
 &=
  \theory_{\zeta}\co \beh_{c} \co q_{1} 
 &&(*)
 \\
 &=
  \theory_{\zeta}\co \beh_{c} \co q_{2}
 &&\text{due to (\ref{diagram:defOfFCSEq})} 
 \\
 &=
  \theory_{c}\co q_{2}
 &&(*)
 \\
 &=
  \theory_{c}\co \pi_{2}\co \tuple{q_{1},q_{2}}\enspace. 
 \end{align*}
Here $(*)$ is an instance of
 Lemma~\ref{lemma:theoryFactorsThroughCoalgebraMorphism}:
$\beh_{c}$ is a morphism of coalgebras from $c$ to the final $\zeta$.
 \qed

The converse  $\TestEq_{c}\le\FCSEq_{c}$ does not hold in general.
For a fixed type of systems (i.e.\ for fixed $F:\C\to\C$), we can
think of logics with varying degree of expressive power; this results in
process equivalences with varying granularity. This view is  
systematically presented by van Glabbeek in~\cite{vanGlabbeek01} as the \emph{linear
time-branching time spectrum}---a categorical version of which we consider as an  important
direction of
future work.

It is when we have $\FCSEq_{c}\iso\TestEq_{c}$ that a modal logic
(considered as a testing situation) is said to be \emph{expressive}.
Recall that $\FCSEq_{c}$ usually coincides with \emph{bisimilarity}
if $\C$ is $\Sets$: in this case an expressive logic \emph{captures}
bisimilarity.

The following proposition states a (rather trivial) equivalent condition
for a testing situation to be expressive.
For more ingenious sufficient conditions---which 
essentially rely on the transpose of $\delta$ being monic---see e.g.~\cite{Klin07}.

\begin{prop}\label{proposition:expressiveTesting}
 Consider a testing situation as in Definitions~\ref{definition:TestEq}
 and \ref{definition:FCSEq}. 
The testing is expressive, that is,
 for any coalgebra $X\stackrel{c}{\rightarrow}FX$ we have 
\begin{displaymath}
  \TestEq_{c}\iso\FCSEq_{c}
\end{displaymath}
 as subobjects of $X\times X$,
 if and only if the theory map $\theory_{\zeta}: Z\to SL$ 
 for the final coalgebra is a mono.
\end{prop}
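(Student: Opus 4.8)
The engine of both directions is the factorization $\theory_c = \theory_\zeta \co \beh_c$, which is exactly the instance of Lemma~\ref{lemma:theoryFactorsThroughCoalgebraMorphism} obtained by feeding it the coalgebra morphism $\beh_c: X\to Z$ from $c$ to the final $\zeta$; this is the step marked $(*)$ in the proof of Proposition~\ref{proposition:TestEqIsCoarserThanFCSEq}. Once this is in hand, the whole proposition reduces to a standard fact about kernel pairs: by Definition~\ref{definition:TestEq}, $\TestEq_c$ is the kernel pair of the \emph{composite} $\theory_\zeta\co\beh_c$, whereas by Definition~\ref{definition:FCSEq}, $\FCSEq_c$ is the kernel pair of the first factor $\beh_c$ alone. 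My plan is therefore to compare the kernel pair of a composite $g\co f$ with that of $f$, and to read off the monicity condition on $g=\theory_\zeta$.

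For the ``if'' direction I would use that postcomposing with a mono leaves a kernel pair unchanged. Concretely, if $\theory_\zeta$ is monic then for any generalized elements $a,b: T\rightrightarrows X$ one has $\theory_\zeta\co\beh_c\co a = \theory_\zeta\co\beh_c\co b$ iff $\beh_c\co a = \beh_c\co b$; hence the two equalizers~(\ref{diagram:defOfTestEq}) and~(\ref{diagram:defOfFCSEq}) have the same universal property, giving $\TestEq_c\iso\FCSEq_c$ as subobjects of $X\times X$. The finite limits in $\C$ assumed in Definitions~\ref{definition:TestEq} and~\ref{definition:FCSEq} guarantee these kernel pairs exist. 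Since this argument is uniform in $c$, the testing is expressive.

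For the ``only if'' direction the idea is to feed the final coalgebra to itself as a test case. Instantiating expressiveness at $c=\zeta$, finality forces $\beh_\zeta=\id_Z$, since both $\id_Z$ and $\beh_\zeta$ are coalgebra morphisms $\zeta\to\zeta$. Thus $\FCSEq_\zeta$ is the kernel pair of $\id_Z$, namely the diagonal $\tuple{\id,\id}: Z\mono Z\times Z$. The assumed isomorphism $\TestEq_\zeta\iso\FCSEq_\zeta$ then makes the kernel pair of $\theory_\zeta$ equal to the diagonal as well; and a map whose kernel pair is the diagonal is monic. Hence $\theory_\zeta$ is a mono.

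The proof is short, and the only genuinely nonroutine move is this instantiation at the final coalgebra in the ``only if'' direction: it is what converts a property quantified over \emph{all} coalgebras into a single monicity condition on $\theory_\zeta$. Everything else is bookkeeping with kernel pairs, resting on the factorization supplied by Lemma~\ref{lemma:theoryFactorsThroughCoalgebraMorphism} together with the elementary equivalence ``$g$ monic $\Leftrightarrow$ kernel pair of $g$ is the diagonal.''
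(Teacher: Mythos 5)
Your proof is correct and follows essentially the same route as the paper: both directions rest on the factorization $\theory_{c}=\theory_{\zeta}\co\beh_{c}$ (the instance of Lemma~\ref{lemma:theoryFactorsThroughCoalgebraMorphism} at $\beh_{c}$), with the ``if'' direction amounting to cancellation of the mono $\theory_{\zeta}$ on kernel pairs and the ``only if'' direction obtained by instantiating at $c=\zeta$, where $\beh_{\zeta}=\id_{Z}$ makes $\FCSEq_{\zeta}$ the diagonal. The only cosmetic difference is that the paper phrases the ``if'' direction as the single inclusion $\TestEq_{c}\le\FCSEq_{c}$ (citing Proposition~\ref{proposition:TestEqIsCoarserThanFCSEq} for the converse) and spells out the ``kernel pair equals diagonal implies mono'' step with explicit parallel arrows $k,l:Y\rightrightarrows Z$, which you invoke as a standard fact.
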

\proof
 We first prove the ``if'' direction.
 In view of Proposition~\ref{proposition:TestEqIsCoarserThanFCSEq},
 it suffices to show that $\tuple{p_{1}, p_{2}}$ in (\ref{diagram:defOfTestEq}) equalizes
 $\beh_{c}\co \pi_{1}$ and $\beh_{c}\co \pi_{2}$ (which proves
 $\TestEq_{c}\le\FCSEq_{c}$).
 \begin{align*}
  \theory_{\zeta}\co \beh_{c} \co p_{1}
 &=
  \theory_{c}\co p_{1}
 &&\text{by Lemma~\ref{lemma:theoryFactorsThroughCoalgebraMorphism}}
 \\
 &=
  \theory_{c}\co p_{2}
 &&\text{due to (\ref{diagram:defOfTestEq})}
 \\
 &=
  \theory_{\zeta}\co \beh_{c} \co p_{2}
 &&\text{by Lemma~\ref{lemma:theoryFactorsThroughCoalgebraMorphism}}
 \end{align*}
 We have $\beh_{c}\co p_{1}=\beh_{c}\co p_{2}$ since $\theory_{\zeta}$
 is a mono. 

 To prove the ``only if'' direction, first we observe that the
 FCS-equivalence on the final coalgebra $\zeta: Z\iso FZ$ is
 the diagonal relation: that is,
 \begin{displaymath}
    \vcenter{\xymatrix@R=1em@C+1em{
   {\FCSEq_{\zeta}}
                      \ar@{ >->}[r]
                      \ar[d]_{\cong}
  &
   {Z\times Z}
                      \ar@<.2em>[r]^-{\scriptstyle\beh_{\zeta}\co \pi_{1}}
                      \ar@<-.2em>[r]_-{\scriptstyle\beh_{\zeta}\co \pi_{2}}
  &
   {Z}
  \\
   {Z}
                      \ar@{ >->}[ru]_{\scriptstyle\tuple{\id,\id}}
  &&
   {}\shifted{2em}{-.2em}{.}
}}
 \end{displaymath}
 This is because $\beh_{\zeta}=\id: Z\to Z$.
 Now assume that $\theory_{\zeta}\co k = \theory_{\zeta}\co l$ for $k,l: Y\rightrightarrows Z$.
   Universality of an equalizer  $\TestEq_{\zeta}$  induces a
 mediating arrow $m$ in the following diagram.
 \begin{displaymath}
    \vcenter{\xymatrix@R=1em@C+2em{
   {Y}
                      \ar[rr]^-{\tuple{k,l}}
                      \ar@{-->}[d]_{m}
  &&
   {Z\times Z}
                      \ar@<.2em>[r]^-{\scriptstyle\beh_{\zeta}\co \pi_{1}}
                      \ar@<-.2em>[r]_-{\scriptstyle\beh_{\zeta}\co \pi_{2}}
  &
   {Z}
  \\
   {\TestEq_{\zeta}}
                      \ar@{ >->}[rru]
                      \ar[r]_{\cong}
  &
   {\FCSEq_{\zeta}}
                      \ar[r]_{\cong}
  &
   {Z}
                      \ar@{ >->}[u]_{\tuple{\id,\id}}
}}
 \end{displaymath}
 The whole diagram commutes since $\TestEq_{\zeta}\cong \FCSEq_{\zeta}$
 (by assumption) and $\FCSEq_{\zeta}\cong Z$ (by the above observation),
 both as subobjects of $Z\times Z$. This proves $k=l$.
 \qed

\begin{rem}
 The literature~\cite{BonsangueK05,BonsangueK06} considers
 more restricted settings than the testing situations in
 Definition~\ref{definition:testingSituations}. There
 an adjunction $S^{\op}\dashv P$ is replaced by a dual equivalence of
 categories, and a denotation $\delta$ is required to be a natural
 isomorphism. These additional restrictions allow one to say more
 about the situations: logics are always expressive; the main concern
 of~\cite{BonsangueK06} is how to present an abstract modal logic
 $M: \A\to\A$ by concrete syntax. However, for our purpose in
 Section~\ref{subsection:testingForTraceSemantics} the greater generality
 of our notion of testing situations is needed.
\end{rem}

\subsection{Canonical testing for trace semantics in
 \texorpdfstring{$\Kleisli{\pow}$}{Kl(P)}} 
\label{subsection:testingForTraceSemantics} 
In this section we shall
present a canonical testing situation 
for coalgebras in $\Kleisli{\pow}$. We shall also show that the testing
is ``expressive,'' in
the sense that the testing captures final coalgebra semantics.
The intuition is as follows.

Trace semantics for non-deterministic systems assigns to each system $c$
its ``(finite) trace set'' map
$\trace{c}: X\to \pow A$, where $A$ carries an initial algebra in $\Sets$.
This suggests a natural testing framework where: an element $t$ of $A$
is a test; a state $x\in X$ of a system passes a test $t$ if and only if
the trace set of $x$ includes $t$ (i.e.\ $x\models t\Longleftrightarrow t\in\trace{c}(x)$). 
An important point here is that
$A$, carrying an initial algebra in $\Sets$, usually gives a \emph{well-founded
syntax} for tests.\footnote{%
   Recall the construction of an initial
   algebra in $\Sets$ via the initial sequence
   (Proposition~\ref{LemInitialAlgebraViaInitialSequence}).
   The set $A$ is the colimit (\emph{union} in $\Sets$) of the initial sequence
   $0\to F0\to F^{2}0\to\cdots$. Each $F^{n}0$ can be thought of as
   the set of terms with depth $\le n$.}

We focus on a non-deterministic setting (i.e.\ $T=\pow$)
in this section and leave a probabilistic one as future work.
Although the above intuition is true in probabilistic settings as 
well---where the 2-valued (pass/failure) observation scheme is replaced
by the refined $[0,1]$-valued one---we do not know yet how to extend
the current material to probabilistic settings.
The difficulty is that the category $\Kleisli{\dist}$ is not self-dual,
as opposed to $\Kleisli{\pow}$; see
(\ref{diagram:testingSituationForTraceSemantics}) below.

The canonical testing situation which captures finite trace semantics 
is the following one.
\begin{equation}\label{diagram:testingSituationForTraceSemantics}
 \xymatrix@1@C+3em{
  {\Kleisli{\pow}^{\op}}
              \ar@(ld,rd)[]_(.3){\oF^{\op}}
              \ar@/^.7em/[r]^{\Op}
              \ar@{}[r]|{\cong}
 &
  {\Kleisli{\pow}}
              \ar@/^.7em/[l]^{\Op^{\op}}
              \ar@/^.7em/[r]^{K}
              \ar@{}[r]|{\top}
 &
  {\Sets}
              \ar@(ru,lu)[]_(.3){F}
              \ar@/^.7em/[l]^{J}
 }
\end{equation}
Here $J\dashv K$ is the canonical Kleisli adjunction. 
Recall the self duality $\Op:\Kleisli{\pow}^{\op}\iso\Kleisli{\pow}$
from Section~\ref{subsection:simplerProofInRel}.
The denotation is given by (the components of) the distributive law
$\lambda:F\pow\Rightarrow\pow F$. The following lemma establishes naturality
of the denotation.
\begin{lem}\label{lemma:naturalityOfDenotationForTraceSemantics}
Let $F:\Sets\to\Sets$ be a functor which preserves weak pullbacks, and
 $\oF$ be its lifting induced by the relation lifting
 (Lemma~\ref{lemma:distributiveLawForPowersetMonad}). Then the
 components
 $F\pow X\stackrel{\lambda_{X}}{\to} \pow FX$
of the corresponding distributive law $\lambda$
also form a natural transformation 
\begin{displaymath}
 F\co K\co \Op \Longrightarrow K\co\Op\co \oF^{\op}\; :\; 
 \Kleisli{\pow}^{\op}\longrightarrow \Sets\enspace.
\end{displaymath}
\end{lem}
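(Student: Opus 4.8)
The plan is to verify directly the naturality square of the family $\lambda$ against the two functors $F\co K\co\Op$ and $K\co\Op\co\oF^{\op}$ from $\Kleisli{\pow}^{\op}$ to $\Sets$. On objects these send $X$ to $F\pow X$ and to $\pow FX$, so the components $\lambda_X\colon F\pow X\to\pow FX$ already have the correct (co)domains; what remains is, for every arrow $\phi\colon X\to Y$ in $\Kleisli{\pow}^{\op}$ (equivalently a relation $R\subseteq Y\times X$), the identity of $\Sets$-functions
\[ \lambda_Y\co F(K\Op\phi) \;=\; (K\Op\oF^{\op}\phi)\co\lambda_X \;:\; F\pow X\longrightarrow \pow FY. \]
The first step is to reduce to generators. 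Any relation $R\subseteq X\times Y$, read as an arrow $X\to Y$ of $\Kleisli{\pow}\cong\Rel$, factors as $J\pi_2\co(J\pi_1)^{\op}$, where $\pi_1,\pi_2$ project the set $R$ onto $X$ and $Y$; hence the arrows of $\Kleisli{\pow}$, and so of $\Kleisli{\pow}^{\op}$, are generated under composition by graphs $Jf$ and their opposites $(Jf)^{\op}$ of $\Sets$-functions $f$. Since naturality squares paste under composition and hold trivially at identities, it suffices to check the square for op-arrows whose underlying $\Kleisli{\pow}$-arrow is of one of these two forms.

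A short computation of $K\co\Op$ on the generators shows the two cases are of different character. On a graph $Jg$ (with $g\colon Y\to X$) one finds $K\Op\phi = g^{-1}\colon\pow X\to\pow Y$ (inverse image) and $K\Op\oF^{\op}\phi = (Fg)^{-1}$; on an opposite $(Jf)^{\op}$ (with $f\colon X\to Y$) one finds $K\Op\phi = \pow f$ (direct image) and $K\Op\oF^{\op}\phi = \pow(Ff)$. The direct-image case is then immediate: the required equation $\lambda_Y\co F(\pow f) = \pow(Ff)\co\lambda_X$ is precisely the ordinary $\Sets$-naturality of the distributive law $\lambda\colon F\pow\Rightarrow\pow F$ at $f$, which we already have from Lemma~\ref{lemma:distributiveLawForPowersetMonad}. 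So no new work is needed there.

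The inverse-image case is the crux, and I would obtain it by transporting a trivial relational identity along the functor $\oF$. Elementarily, in $\Kleisli{\pow}$ one has $\in_X\co Jg = (Jg^{-1})^{\op}\co\in_Y$ as relations $Y\to\pow X$, since both relate $y$ to $V$ exactly when $g(y)\in V$. Applying the functor $\oF$ and using $\oF J = JF$ (the defining property of the lifting, diagram~(\ref{diagram:definitionOfLiftingOfF})), $\oF(h^{\op})=(\oF h)^{\op}$ (compatibility of relation lifting with opposites, as in~(\ref{diagram:liftedFAndSelfDualityOfRelAreCompatible})), and $\oF R = \Relf(R)$ from~(\ref{equation:liftingOfFIsRelationLifting}), yields
\[ \Relf(\in_X)\co J(Fg) \;=\; \bigl(J(F(g^{-1}))\bigr)^{\op}\co\Relf(\in_Y) \;:\; FY\longrightarrow F\pow X. \]
Evaluating this identity of relations at a pair $(w,\Phi)$ gives $(Fg(w),\Phi)\in\Relf(\in_X)$ on the left and $(w,F(g^{-1})(\Phi))\in\Relf(\in_Y)$ on the right; by the defining formula $\lambda_X(\Phi)=\{v\mid(v,\Phi)\in\Relf(\in_X)\}$ these read $Fg(w)\in\lambda_X(\Phi)$, i.e.\ $w\in(Fg)^{-1}(\lambda_X(\Phi))$, and $w\in\lambda_Y(F(g^{-1})(\Phi))$. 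Their equivalence for all $w$ is exactly $\lambda_Y\co F(g^{-1}) = (Fg)^{-1}\co\lambda_X$, completing this case.

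The main obstacle is bookkeeping rather than depth: one must keep the variances straight through the contravariant $\Op$ and the op-functors, and recognize that $K\co\Op$ turns graphs into inverse-image maps and opposite-graphs into direct-image maps, so that precisely one generating case (the inverse-image one) reaches beyond the naturality already packaged into the distributive law. Once the elementary membership identity is in hand, functoriality of $\oF$ together with its compatibility with $J$ and with opposites does all the remaining work.
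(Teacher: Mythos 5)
Your proof is correct, but it takes a genuinely different route from the paper's. The paper never decomposes Kleisli arrows: it introduces an auxiliary natural transformation $\lambda'\colon FK\Rightarrow K\oF\colon\Kleisli{\pow}\to\Sets$ with the same components, verifies its naturality at an \emph{arbitrary} Kleisli arrow $f\colon X\to\pow Y$ by a single equational computation (unfolding $Kf=\mu^{\pow}\co\pow f$ and $\oF f=\lambda_{Y}\co Ff$, then using $\Sets$-naturality of $\lambda$ and the distributive-law axiom $\mu^{\pow}_{F}\co\pow\lambda\co\lambda_{\pow}=\lambda\co F\mu^{\pow}$), and finally whiskers with $\Op$, using the compatibility $\oF\co\Op=\Op\co\oF^{\op}$ of~(\ref{diagram:liftedFAndSelfDualityOfRelAreCompatible}) to land in the stated contravariant form. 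You instead exploit structure specific to $\Rel$: every relation tabulates as $J\pi_{2}\co(J\pi_{1})^{\op}$, so naturality only needs checking on graphs and converse graphs. Your direct-image case isolates exactly the part that is ``for free'' (plain $\Sets$-naturality of $\lambda$), while your inverse-image case replaces the paper's explicit appeal to the $\mu^{\pow}$-compatibility axiom by the relation-lifting description $\lambda_{X}(\Phi)=\{v\mid(v,\Phi)\in\Relf(\in_{X})\}$ together with functoriality of $\oF$ and its compatibility with $J$ and with opposites (the latter being the same $\Rel$-specific fact the paper uses for its whiskering step). The trade-off: the paper's argument is uniform in the arrow and uses only the distributive-law axioms, so it is the version that would transfer to any other monad whose Kleisli category happens to carry such a self-duality; yours is tied to $\pow$ through tabulation, but it makes visible \emph{where} the work sits --- the functional half of a relation costs nothing, and all the content is concentrated in how $\lambda$ interacts with inverse images, which your membership-relation identity $\in_{X}\co Jg=(Jg^{-1})^{\op}\co\in_{Y}$ handles cleanly.
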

\proof
The desired natural transformation is obtained from another natural 
transformation
\begin{displaymath}
 \lambda'\; :\; FK\Longrightarrow K\oF \;:\; \Kleisli{\pow}\longrightarrow \Sets
\end{displaymath}  
which we describe in a moment, by post-composing the functor $\Op$. 
That is, the desired one is the composite
\begin{displaymath}
 FK\Op 
 \;\stackrel{\lambda'\co \Op}{\Longrightarrow}\;
 K\oF\Op
 \;\stackrel{(*)}{=} \;
 K\Op\oF^{\op}\enspace,
\end{displaymath}
or equivalently, in a 2-categorical presentation,
\begin{displaymath}
 \xymatrix@R=1.2em@C+2em{
   {\Kleisli{\pow}^{\op}}
                  \ar[r]^{\Op}
                  \ar[d]_{\oF^{\op}}
               	 \drtwocell<\omit>{={\hspace{-.8em}(*)}}
  &
   {\Kleisli{\pow}}   
                  \ar[r]^{K}
                  \ar[d]_{\oF}
               	 \drtwocell<\omit>{<-.1>{\,\,\lambda'}}                
  &
   {\Sets}   
                  \ar[d]^{F}
  \\
   {\Kleisli{\pow}^{\op}}
                  \ar[r]_{\Op}
  &
   {\Kleisli{\pow}}   
                  \ar[r]_{K}
  &
   {\Sets}   
   \shifted{2em}{-.2em}{.}
}
\end{displaymath}
 Here the equality ($*$) is the one
 in~(\ref{diagram:liftedFAndSelfDualityOfRelAreCompatible}).
 
 Now we describe the natural transformation $\lambda'$. Its components are given by those of
 $\lambda$; naturality of $\lambda'$ is an easy consequence of
 $\lambda$'s being a distributive law. Indeed, given an arrow $f: X\to
 Y$ in $\Kleisli{\pow}$, the following shows that the naturality square commutes.
  \begin{align*}
   K\oF f\co \lambda_{X}
  &
   = \mu^{\pow}_{FY}\co \pow\oF f\co \lambda_{X}  
  &&\text{definition of $K$}
  \\
  &
   = \mu^{\pow}_{FY}\co \pow\lambda_{Y}\co \pow Ff\co \lambda_{X}  
  &&\text{definition of $\oF$}
  \\
  &
   = \mu^{\pow}_{FY}\co \pow\lambda_{Y}\co \lambda_{\pow Y}\co F\pow f  
  &&\text{naturality of $\lambda$}
  \\
  &
   = \lambda_{Y}\co F\mu^{\pow}_{Y}\co F\pow f  
  &&\text{$\lambda$ is compatible with the multiplication $\mu^{\pow}$ of $\pow$}
  \\
  &
   = \lambda_{Y}\co FKf
  &&\text{definition of $K$}  \tag*{$\qEd$}
  \end{align*}

\noindent The previous lemma establishes that the situation
(\ref{diagram:testingSituationForTraceSemantics})
is indeed a testing situation as defined in
Definition~\ref{definition:testingSituations}.

In the previous Section~\ref{subsection:testingSituations}, the use of
testing situations is demonstrated through comparing testing equivalence
and final coalgebra semantics, both described as suitable kernel pairs.
Unfortunately this argument is not valid in the current
situation~(\ref{diagram:testingSituationForTraceSemantics}), since the category
$\Kleisli{\pow}$ does not have kernel pairs. 

Still, we shall claim that the situation
(\ref{diagram:testingSituationForTraceSemantics}) is ``expressive,'' in
the sense that final coalgebra
semantics is captured by testing. This claim is supported by the
following fact: in the current situation  the two arrows $\trace{c}$ and $\theory_{c}$ 
simply coincide. Therefore their kernel relations---in any reasonable formalization---should coincide as
well.

\begin{prop}
 Let $X\stackrel{c}{\to}\oF X$ be a coalgebra in $\Kleisli{\pow}$.
 In the testing situation
 (\ref{diagram:testingSituationForTraceSemantics}),
 the following arrows in $\Kleisli{\pow}$ coincide.
 \begin{enumerate}[$\bullet$]
  \item $\trace{c}: X\to A$,  giving the final
	coalgebra (trace) semantics for $c$.
  \item $\theory_{c}: X\to A$, giving the testing semantics, i.e.\ the set of passed tests.
 \end{enumerate}
 Therefore the testing is ``expressive'': 
 tests from an initial $F$-algebra captures trace semantics
 (which is via a final $\oF$-coalgebra).
\end{prop}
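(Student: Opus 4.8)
The plan is to show that both arrows are forced to coincide by a single universal property, transported across the self-duality $\Op:\Kleisli{\pow}^{\op}\iso\Kleisli{\pow}$. First I would make the transposition concrete. Under the contravariant adjunction $S^{\op}\dashv P$ of the testing situation~(\ref{diagram:testingSituationForTraceSemantics}), with $P=K\Op$, the bijection~(\ref{diagram:theoryAsTranspositionOfInterpretation}) between $\theory_{c}:X\to A$ in $\Kleisli{\pow}$ and $\sem{\place}_{c}:A\to\pow X$ in $\Sets$ is exactly the self-duality bijection $f\mapsto f^{\lor}$ of Section~\ref{subsection:simplerProofInRel}; concretely $\sem{\place}_{c}=\theory_{c}^{\lor}$, i.e.\ $x\in\sem{\place}_{c}(a)\Longleftrightarrow a\in\theory_{c}(x)$. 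On the other side, $\trace{c}:X\to A$ is by Corollary~\ref{corollary:traceSemanticsForCoalgebras} the unique $\oF$-coalgebra morphism from $c$ to the final coalgebra $J\alpha^{-1}$, i.e.\ the unique solution of the coinduction diagram~(\ref{diagram:coinductionInKleisliCategory}).

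The key step is to recognise the interpretation diagram~(\ref{diagram:interpretationOfModalFormulaViaInduction}) as an initiality statement and then match it against the transposed coinduction diagram. By~(\ref{diagram:interpretationOfModalFormulaViaInduction}) with $M=F$, $L=A$ and $\delta_{X}=\lambda_{X}$, the map $\sem{\place}_{c}$ is the unique $F$-algebra homomorphism from the initial algebra $\alpha:FA\iso A$ to the algebra $(\pow X,\;Pc\co\lambda_{X})$ on $\pow X=PX$. Hence it suffices to prove that $\trace{c}^{\lor}$ is also an $F$-algebra homomorphism into $(\pow X,\;Pc\co\lambda_{X})$; uniqueness by initiality then gives $\trace{c}^{\lor}=\sem{\place}_{c}=\theory_{c}^{\lor}$, and therefore $\trace{c}=\theory_{c}$. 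To establish the homomorphism property I would apply the self-duality $\Op$ (equivalently, $(\place)^{\lor}$) to the coinduction equation $\oF(\trace{c})\co c=J\alpha^{-1}\co\trace{c}$ that $\trace{c}$ satisfies in $\Kleisli{\pow}$. Translating this along $\Op$ turns the final-coalgebra side $J\alpha^{-1}$ into the initial-algebra side $J\alpha$, using the compatibility $\oF\co\Op=\Op\co\oF^{\op}$ from~(\ref{diagram:liftedFAndSelfDualityOfRelAreCompatible}); applying $K$ and the concrete form of the denotation $\delta=\lambda$ (Lemma~\ref{lemma:naturalityOfDenotationForTraceSemantics}) then rewrites $\oF(\trace{c})$ and $c$ into $F(\trace{c}^{\lor})$ and $Pc$, producing exactly the equation $\trace{c}^{\lor}\circ\alpha=(Pc\co\lambda_{X})\circ F(\trace{c}^{\lor})$ in $\Sets$.

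The main obstacle I expect is purely bookkeeping: keeping the variance straight while passing between $\Kleisli{\pow}$, its opposite, and $\Sets$, since $P=K\Op$ is contravariant and Kleisli composition $\co$ must be unfolded into ordinary function composition (through $K$) before the $\Sets$-level algebra equation can be read off. This is precisely where the two facts already proved for this situation---the compatibility square~(\ref{diagram:liftedFAndSelfDualityOfRelAreCompatible}) and the naturality of $\delta=\lambda$ in Lemma~\ref{lemma:naturalityOfDenotationForTraceSemantics}---do all the real work, so no genuinely new computation is needed. An alternative, more structural route avoiding the explicit chase is to observe that the construction of Theorem~\ref{theorem:simplifiedMainForPowersetMonad}, which builds the final $\oF$-coalgebra out of the initial $F$-algebra, is assembled from exactly the same $\Op$ and $J\dashv K$ used to form the testing situation, so that the final-coalgebra morphism $\trace{c}$ and the initial-algebra morphism $\sem{\place}_{c}$ are literally the same universal arrow read in the two directions; I would keep the explicit version above as the safe fallback.
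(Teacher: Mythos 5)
Your proposal is correct and takes essentially the same route as the paper's own proof: both transpose $\trace{c}$ along the self-duality (so $\trace{c}^{\lor}=\Op(\trace{c})$), apply $\Op$ to the coinduction equation, use $\Op(J\alpha^{-1})=J\alpha$, the compatibility square~(\ref{diagram:liftedFAndSelfDualityOfRelAreCompatible}) and the unfolding of Kleisli composition to show that $\trace{c}^{\lor}$ satisfies the defining diagram~(\ref{diagram:interpretationOfModalFormulaViaInduction}) of $\sem{\place}_{c}$, and then conclude $\trace{c}^{\lor}=\sem{\place}_{c}$ by initiality (uniqueness), hence $\trace{c}=\theory_{c}$. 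Your explicit framing of $\sem{\place}_{c}$ as the unique $F$-algebra morphism into $(\pow X,\,Pc\co\lambda_{X})$ is exactly what the paper uses implicitly, so there is no substantive difference.
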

\noindent 
 Here $A$ is the carrier of an initial $F$-algebra, hence that of a final $\oF$-coalgebra.
Note that, in the general setting in
Section~\ref{subsection:testingSituations},
the codomains of $\trace{c}$ and $\theory_{c}$ need not
coincide.
\proof
 We shall show that the transpose 
\begin{displaymath}
  \trace{c}^{\lor}\;:\; A\longrightarrow \pow X \qquad\text{in $\Sets$}
\end{displaymath} 
of $\trace{c}$ under the adjunction
 in (\ref{diagram:testingSituationForTraceSemantics}) makes the
 diagram
 (\ref{diagram:interpretationOfModalFormulaViaInduction})---which
 defines $\sem{\place}_{c}$---commute. This
 proves $\trace{c}^{\lor}=\sem{\place}_{c}$, hence
 $\trace{c}={\sem{\place}_{c}}^{\lor} = \theory_{c}$.
 
 First note that the transpose $\trace{c}^{\lor}: A\to \pow X$ is given by the arrow $\Op(\trace{c}): A\to X$ in
 $\Kleisli{\pow}$ thought of as an arrow in $\Sets$.  In the
 sequel we shall write $\Op(\trace{c})$ for $\trace{c}^{\lor}$.

 Commutativity of the diagram
 (\ref{diagram:coinductionInKleisliCategory})---defining
 $\trace{c}$---yields the following equality.
 \begin{displaymath}
  \Op(\trace{c})\co \Op (J\alpha^{-1}) = \Op(c)\co \Op(\oF^{\op} \trace{c})
  \qquad\text{in $\Kleisli{\pow}$.}
 \end{displaymath}
 By the definition of composition in $\Kleisli{\pow}$, it reads as
 follows in $\Sets$.
 \begin{equation}\label{equation:hogehogehoog}
  \mu_{X} \co
  \pow(\Op(\trace{c})) \co
  \Op(J\alpha^{-1})
 = 
  \mu_{X} \co
  \pow(\Op(c)) \co
  \Op(\oF^{\op} \trace{c})
 \end{equation}
 We use this equality in showing that $\Op(\trace{c})$ makes the diagram
 (\ref{diagram:interpretationOfModalFormulaViaInduction}) commute.
 \begin{align*}
  \Op(\trace{c})\co \alpha
 &=
  \mu_{X}\co\eta_{X}\co\Op(\trace{c})\co \alpha  
 &&\text{unit law}
 \\
 &=
  \mu_{X}\co\pow(\Op(\trace{c}))\co \eta_{A}\co \alpha  
 &&\text{naturality of $\eta$}
 \\
 &=
  \mu_{X}\co\pow(\Op(\trace{c}))\co   \Op(J\alpha^{-1})
 &&\text{$\Op(J\alpha^{-1}) = J\alpha=\eta_{A}\co \alpha$}
 \\
 &=
  \mu_{X} \co
  \pow(\Op(c)) \co
  \Op(\oF^{\op} \trace{c})
 &&\text{by (\ref{equation:hogehogehoog})}
 \\
 &=
  \mu_{X} \co
  \pow(\Op(c)) \co
  \oF\Op( \trace{c})
 &&\text{$\Op\oF^{\op} = \oF\Op$, (\ref{diagram:liftedFAndSelfDualityOfRelAreCompatible})}
 \\
 &=
  \mu_{X} \co
  \pow(\Op(c)) \co
  \lambda_{X} \co
  F\Op( \trace{c})
 &&\text{definition of $\oF$}
 \\
 &=
  K\Op(c) \co
  \lambda_{X} \co
  F\Op( \trace{c})\enspace.
 \end{align*}
 Recall that $M$ in
 (\ref{diagram:interpretationOfModalFormulaViaInduction}) is now $F$;
 $P$ in
 (\ref{diagram:interpretationOfModalFormulaViaInduction})
 is now $K\Op$. This concludes the proof.
\qed

 The proposition establishes a connection between two semantics for
 $\oF$-coalgebras in $\Kleisli{\pow}$, namely: $\trace{c}$ via a
 final $\oF$-coalgebra, and $\theory_{c}$ via an initial
 $F$-algebra. One may well say that it is a ``degenerate'' case because,
 as we have shown in Section~\ref{section:finalCoalgebraInKleisliCat},
 coinduction in $\Kleisli{\pow}$ and induction in $\Sets$ are essentially
 the same principle. Our emphasis is more on the fact that the
 coincidence of induction and coinduction yields a rather uncommon example of testing
 situations. Testing situations are of interest in modal
 logic---where the underlying contravariant adjunction
 $S^{\op}\dashv P: \A\to\C^{\op}$ in~(\ref{diagram:testingSituation}) is
 often the Stone duality or one of its variants. Our
 example $\Kleisli{\pow}^{\op}\leftrightarrows\Sets$ here does not look like one of those familiar examples.

\section{Conclusions and future work}
We have developed a mathematical principle underlying 
``trace semantics'' for various kinds of branching systems, namely
coinduction in a Kleisli category. This general view is supported by
a technical result that a final coalgebra in a Kleisli category is induced by
an initial algebra in $\Sets$.



The possible instantiations of our generic framework include
non-deterministic systems and probabilistic systems, but do \emph{not} yet include
systems with both non-deterministic and probabilistic branching.
The importance of having both of these branchings in system verification
has been claimed by many authors e.g.~\cite{Vardi85,Seg95:thesis}, with
an intuition that probabilistic branching models the choices ``made by
the system, i.e.\ on \emph{our} side,'' while (coarser) non-deterministic
choices are ``made by the (unknown) environment of the system, i.e.\ on
the \emph{adversary's} side.''
A typical example of such systems is given by \emph{probabilistic
automata} introduced
by Segala~\cite{Seg95:thesis}.

In fact this combination of non-deterministic and probabilistic
branching
is a notoriously difficult one from a theoretical point of view~\cite{Cheung06,VaraccaW05,TixKP05}:
many mathematical tools that are useful in a purely non-deterministic
or probabilistic setting cease to work in the presence of both.
For our framework of generic trace semantics, the problem is that
we could not find a suitable monad $T$ with an order structure.

We have used the order-enriched structure of a Kleisli category
(expressing ``more possibilities'') to obtain the 
initial algebra-final coalgebra coincidence result. However, an
order structure is not the only one that can yield such coincidence:
other examples include metric, quasi-metric and quantale-enriched
structures (in increasing generality). See 
e.g.~\cite{TuriR98,Fiore96a} for
the potential use of such enriched structures in a coalgebraic setting.
The relation of the current work to such structures is yet to be investigated.

In the discipline of process algebra, a system is represented by an
\emph{algebraic} term (such as $a.P\parallel a.Q$) and a structural
operational semantics (SOS) rule determines its dynamics, that is, its
\emph{coalgebraic} structure. This is where ``algebra
meets coalgebra'' and the interaction is studied
e.g.\ in~\cite{TuriP97,Bartels04,Klin05}. In our recent work~\cite{HasuoJS07a}
we claim the importance of the \emph{microcosm principle} in this
context and provide a ``general compositionality theorem'': under suitable
assumptions,  the final coalgebra semantics is compatible with 
the algebraic structure. The results of the current paper say that
the final coalgebra semantics can be interpreted as finite trace
semantics, hence the result in~\cite{HasuoJS07a} also yields a general
compositionality result for trace semantics.

In this paper we have included some material---on possibly-infinite traces 
and testing situations---which, unfortunately,
we have worked out only in a non-deterministic setting.
A fully general account on these topics is left as future work.

Finally, there are so many
different process semantics for branching systems, between two
edges of bisimilarity and trace equivalence in the linear time-branching
time spectrum~\cite{vanGlabbeek01}. How to capture them in a coalgebraic setting
is, we believe, an important and challenging question.


\section*{Acknowledgment}
 Thanks are due to  Ji\v{r}\'{i} Ad\'{a}mek, Chris Heunen, Stefan Milius,
Tarmo Uustalu and the anonymous referees 
 for helpful discussions and comments.

\bibliographystyle{FirstInitialPlain}
\bibliography{../../Macros/tex/myrefs}




\appendix
\section{Preliminaries}
\subsection{Initial/final sequences}\label{appendix:preliminariesInitialFinalSequence}
Here we recall the standard construction \cite{AdamekK79} of the initial algebra
(or the final coalgebra) via the initial (or final) sequence.
Notice that
the base category need not be $\Sets$.

Let $\C$ be a category with initial object $0$, and $F:
\C\to\C$ be an endofunctor.
The \emph{initial sequence}\footnote{%
 In this paper we consider only initial/final sequences of length $\omega$.}
of $F$ is a diagram
\begin{displaymath}
 \xymatrix@1@C+1.5em{
  0
        \ar^{\fromInit}[r]
 &
  F0
        \ar^-{F\fromInit}[r]
 &
  {\quad \cdots\quad}
        \ar^-{F^{n-1}\fromInit}[r]
 &
  {F^{n}0}
        \ar^{F^{n}\fromInit}[r]
 &
  {\quad\cdots}
 }
\end{displaymath}
where $\fromInit: 0\to X$ is the unique arrow.

Now assume that:
\begin{enumerate}[$\bullet$]
 \item the initial sequence has an $\omega$-colimit\footnote{%
 An $\omega$-colimit is a colimit of a diagram whose shape is the ordinal $\omega$.}
$(F^{n}0\stackrel{\alpha_{n}}{\longrightarrow} A)_{n<\omega}$;
 \item the functor $F$ preserves that $\omega$-colimit.
\end{enumerate} 
Then we have two cocones $(\alpha_{n})_{n<\omega}$
and $(F\alpha_{n-1})_{n<\omega}$ over the initial sequence.
Moreover, the latter is again a colimit: hence we have mediating
isomorphisms between these cones.
\begin{displaymath}
  \vcenter{\xymatrix@C+2.5em@R=1em{
  &&&&
   A
            \ar@<1.5ex>@/^1pc/[dd]^-{\alpha^{-1}}_-{\cong}
  \\
   {\cdots}
            \ar[r]
  &
   F^{n}0
            \ar@(u,l)[rrru]|{\alpha_{n}}
            \ar@(d,l)[rrrd]|{F\alpha_{n-1}}
            \ar[r]^{F^{n}{\fromInit}}
  &
   F^{n+1}0
            \ar@/^/[rru]|{\alpha_{n+1}}
            \ar@/_/[rrd]|{F\alpha_{n}}
            \ar[r]
  &
   {\cdots}
  &
  \\
  &&&&
   FA
            \ar@<1.5ex>@/_1pc/[uu]^-{\alpha}
}}
\end{displaymath}

\begin{prop}\label{LemInitialAlgebraViaInitialSequence}
 The $F$-algebra $\alpha: FA\iso A$ is initial.
\end{prop}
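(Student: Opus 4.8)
The plan is to run the classical argument of Ad\'amek~\cite{AdamekK79}. The diagram already exhibits $\alpha: FA\to A$ as an isomorphism: since $F$ preserves the $\omega$-colimit, the cocone $(F\alpha_{n-1})_{n<\omega}$ is a colimiting cocone for the initial sequence as well (shifting the sequence by one step leaves its colimit unchanged), so $\alpha$ and $\alpha^{-1}$ are the mutually inverse mediating maps between the two colimits $A$ and $FA$. It then remains to prove initiality of $\alpha$ in $\Alg{F}$.

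First I would construct, for an arbitrary $F$-algebra $\beta: FB\to B$, a cocone over the initial sequence with apex $B$. Define maps $h_n: F^n 0\to B$ recursively by $h_0=\fromInit: 0\to B$ and $h_{n+1}=\beta\co Fh_n$. An easy induction shows that $(h_n)_{n<\omega}$ is a cocone, i.e.\ $h_{n+1}\co F^n\fromInit = h_n$ for every $n$; the base case is the initiality of $0$ and the step case applies $F$ to the previous equation. By the universal property of the colimit $A$ there is then a unique mediating arrow $h: A\to B$ satisfying $h\co\alpha_n=h_n$ for all $n$.

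The main step is to verify that this $h$ is an $F$-algebra homomorphism, i.e.\ $h\co\alpha=\beta\co Fh$. Here I would use that $FA$, equipped with $(F\alpha_n)_{n<\omega}$, is itself a colimit of the initial sequence (the preservation hypothesis). Both $h\co\alpha$ and $\beta\co Fh$ are arrows $FA\to B$, so it suffices to check they agree after precomposition with each coprojection $F\alpha_n$. Using $\alpha\co F\alpha_n=\alpha_{n+1}$ together with $h\co\alpha_{n+1}=h_{n+1}=\beta\co Fh_n$, the left-hand side reduces to $\beta\co Fh_n$; the right-hand side gives $\beta\co Fh\co F\alpha_n=\beta\co F(h\co\alpha_n)=\beta\co Fh_n$ as well. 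Hence the two agree on every coprojection, and the universal property of $FA$ forces $h\co\alpha=\beta\co Fh$.

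Finally, uniqueness: if $g: A\to B$ is any $F$-algebra homomorphism, then the family $(g\co\alpha_n)_{n<\omega}$ satisfies the same recurrence as $(h_n)$ --- using $g\co\alpha=\beta\co Fg$ and $\alpha\co F\alpha_n=\alpha_{n+1}$ one gets $g\co\alpha_{n+1}=\beta\co F(g\co\alpha_n)$, while $g\co\alpha_0=\fromInit$ by initiality of $0$ --- so by induction $g\co\alpha_n=h_n$ and $g$ mediates the same cocone as $h$, whence $g=h$. I expect the homomorphism check in the third paragraph to be the only delicate point, since it is precisely where the hypothesis that $F$ preserves the $\omega$-colimit is essential; everything else is bookkeeping with the universal property of the colimit.
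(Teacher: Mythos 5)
Your proof is correct and is essentially the paper's own argument (the classical Ad\'amek construction): the paper proves the dual statement, Proposition~\ref{LemFinalCoalgebraViaFinalSequence}, by forming the cone $\beta_{0}={\toTerm}$, $\beta_{n+1}=F\beta_{n}\co c$ over the final sequence and asserting that coalgebra morphisms into $\zeta$ are exactly the mediating arrows, which dualizes precisely to your cocone $h_{0}={\fromInit}$, $h_{n+1}=\beta\co Fh_{n}$ and your claim that algebra morphisms out of $\alpha$ coincide with mediating arrows out of the colimit. Your third and fourth paragraphs merely spell out the two directions of that ``if and only if'' (using that $(F\alpha_{n})_{n<\omega}$ is again colimiting, which is where preservation of the $\omega$-colimit enters), so there is no gap and no genuine difference in approach.
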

\proof
 For future reference we prove the dual result:
 see Proposition \ref{LemFinalCoalgebraViaFinalSequence}. 
\qed

\auxproof{ Any $F$-algebra $a: FX\to X$ induces a cocone $(\beta_{n}: F^{n}0\to X)_{n<\omega}$ 
 over the initial sequence in the following way.
 \begin{displaymath}
  \beta_{0} = {\fromInit} : 0\to X\enspace, \qquad
  \beta_{n+1} = a\co F\beta_{n}\enspace.
 \end{displaymath}
 Now we can prove the following: for an arrow $f: A\to X$, $f$ is a
 morphism of algebras from $\alpha$ to $a$ if and only if $f$ is a
 mediating arrow from the colimit $(\alpha_{n})_{n<\omega}$ to the
 cocone $(\beta_{n})_{n<\omega}$. 
 Hence such a morphism of algebras uniquely exists. 
}

The dual of this construction yields a final $F$-coalgebra.
Assume that the base category $\C$ has a terminal object $1$.
The \emph{final sequence} of $F$ is
\begin{displaymath}
 \xymatrix@1@C+1.5em{
  1 
        \ar_{{\toTerm}}[r];[]
 &
  F1
        \ar_-{F{\toTerm}}[r];[]
 &
  {\quad\cdots\quad}
        \ar_-{F^{n-1}{\toTerm}}[r];[]
 &
  {F^{n}1}
        \ar_{F^{n}{\toTerm}}[r];[]
 &
  {\quad\cdots,}
 }
\end{displaymath}
where ${\toTerm}: X\to 1$ is the unique arrow.
Assume that it has an $\omega^{\op}$-limit $(Z\stackrel{\zeta_{n}}{\longrightarrow} F^{n}1)_{n<\omega}$, and
also that $F$ preserves that $\omega^{\op}$-limit.
We have the following situation.
\begin{displaymath}
  \vcenter{\xymatrix@C+2.5em@R=1em{
  &&&&
   Z
            \ar@<1.5ex>@/_1pc/[dd];[]^-{\zeta^{-1}}_-{\cong}
  \\
   {\cdots}
            \ar[r];[]
  &
   F^{n}1
            \ar@(l,u)[rrru];[]|{\zeta_{n}}
            \ar@(l,d)[rrrd];[]|{F\zeta_{n-1}}
            \ar[r];[]_{F^{n}{\toTerm}}
  &
   F^{n+1}1
            \ar@/_/[rru];[]|{\zeta_{n+1}}
            \ar@/^/[rrd];[]|{F\zeta_{n}}
            \ar[r];[]
  &
   {\cdots}
  &
  \\
  &&&&
   FZ
            \ar@<1.5ex>@/^1pc/[uu];[]^-{\zeta}
}}
\end{displaymath}

\begin{prop}\label{LemFinalCoalgebraViaFinalSequence}
 The coalgebra $\zeta: Z\iso FZ$ is final.
\end{prop}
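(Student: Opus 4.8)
The plan is to dualize the (here omitted) argument for Proposition~\ref{LemInitialAlgebraViaInitialSequence}, replacing colimit universality by limit universality. First I would take an arbitrary $F$-coalgebra $d\colon Y\to FY$ and manufacture from it a cone over the final sequence: define $\gamma_{0}=\toTerm\colon Y\to 1$ and $\gamma_{n+1}=F\gamma_{n}\co d$. A routine induction shows that $(\gamma_{n})_{n<\omega}$ really is a cone, i.e.\ $F^{n}{\toTerm}\co\gamma_{n+1}=\gamma_{n}$ --- the base case by uniqueness of arrows into $1$, the step case by applying $F$ to the inductive hypothesis and precomposing with $d$. Since $(Z\stackrel{\zeta_{n}}{\to}F^{n}1)_{n<\omega}$ is a limit, there is then a \emph{unique} mediating arrow $f\colon Y\to Z$ satisfying $\zeta_{n}\co f=\gamma_{n}$ for all $n$.

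The heart of the proof is the claim that, for $f\colon Y\to Z$, the mediating-arrow condition $\zeta_{n}\co f=\gamma_{n}$ (for all $n$) is \emph{equivalent} to $f$ being a morphism of coalgebras from $d$ to $\zeta$, i.e.\ $\zeta\co f=Ff\co d$. Granting this equivalence, existence and uniqueness of the coalgebra morphism follow at once from the existence and uniqueness of the mediating arrow above, which is what finality demands. To prove one half, I would assume $f$ is a coalgebra morphism and derive $\zeta_{n}\co f=\gamma_{n}$ by induction on $n$, using the defining relation $F\zeta_{n}\co\zeta=\zeta_{n+1}$ of the mediating isomorphism $\zeta$ together with functoriality of $F$ and the hypothesis $\zeta\co f=Ff\co d$.

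For the other half I would assume $\zeta_{n}\co f=\gamma_{n}$ for all $n$ and compare the two arrows $\zeta\co f$ and $Ff\co d$ from $Y$ to $FZ$. This is exactly where the assumption that $F$ \emph{preserves} the $\omega^{\op}$-limit becomes indispensable: it makes $FZ$ the limit of the shifted final sequence, with projections $F\zeta_{n}\colon FZ\to F^{n+1}1$. Hence it suffices to check that $\zeta\co f$ and $Ff\co d$ agree after postcomposition with every $F\zeta_{n}$, and this reduces, via the cone relations, to the equality $\zeta_{n+1}\co f=\gamma_{n+1}=F\gamma_{n}\co d$ already in hand. I expect the main obstacle to be not any single calculation but bookkeeping of the \emph{two} universal properties in play --- the limit $Z$ pins the mediating arrow down, while the limit $FZ$ is what upgrades it to a genuine coalgebra morphism --- and noticing that both steps silently rely on $F$'s preservation of the limit, without which the whole correspondence would collapse (just as $FX=\pow(\Sigma\times X)$ fails to admit a final coalgebra in $\Sets$ for cardinality reasons).
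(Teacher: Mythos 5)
Your proposal is correct and follows essentially the same route as the paper's proof: build the cone $\gamma_{0}=\toTerm$, $\gamma_{n+1}=F\gamma_{n}\co d$ over the final sequence, and show that mediating arrows into the limit $(\zeta_{n})_{n<\omega}$ coincide with coalgebra morphisms into $\zeta$, so that finality follows from the universal property of the limit. The paper states this equivalence without proof, while you correctly supply the missing details---in particular the crucial use of $F$'s preservation of the $\omega^{\op}$-limit, which makes $(F\zeta_{n}\colon FZ\to F^{n+1}1)_{n<\omega}$ a limit cone and lets you test $\zeta\co f=Ff\co d$ against the projections $F\zeta_{n}$.
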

\proof
 Any $F$-coalgebra $c: X\to FX$ induces a cone $(X\stackrel{\beta_{n}}{\longrightarrow} F^{n}1)_{n<\omega}$ 
 over the final sequence in the following way.
 \begin{displaymath}
  \beta_{0} =\; {\toTerm} : X\longrightarrow 1\enspace, \qquad
  \beta_{n+1} = F\beta_{n}\co c\enspace.
 \end{displaymath}
 Now we can prove the following: for an arrow $f: X\to Z$, $f$ is a
 morphism of coalgebras from $c$ to $\zeta$ if and only if $f$ is a
 mediating arrow from the 
 cone $(\beta_{n})_{n<\omega}$ to the limit $(\zeta_{n})_{n<\omega}$.
 Hence such a morphism of coalgebras uniquely exists. 
\qed

It is easy to see that every shapely functor in $\Sets$ preserves $\omega$-colimits
and $\omega^{\op}$-limits. Hence we have the following.
\begin{lem}\label{ShInAlLem}
 A shapely functor $F$ has both an initial algebra and
 a final coalgebra in $\Sets$. \qed
\end{lem}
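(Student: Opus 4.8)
The plan is to proceed by structural induction on the shapely functor $F$, establishing the stronger statement that every shapely $F$ preserves both $\omega$-colimits and $\omega^{\op}$-limits in $\Sets$. Once this is shown, the two claims follow at once: $\Sets$ is complete and cocomplete, so the initial and final sequences of $F$ have an $\omega$-colimit and an $\omega^{\op}$-limit respectively, and the preservation property lets us invoke Propositions~\ref{LemInitialAlgebraViaInitialSequence} and~\ref{LemFinalCoalgebraViaFinalSequence} to obtain the initial algebra and the final coalgebra.

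The base cases are immediate. The identity functor preserves every (co)limit. A constant functor $F=\Sigma$ carries the initial (resp.\ final) sequence to the constant diagram on $\Sigma$; since the shapes $\omega$ and $\omega^{\op}$ are \emph{connected}, the (co)limit of such a constant diagram is $\Sigma$ itself, which is also the value of $F$ on the (co)limit.

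For the inductive step I would handle products and coproducts separately, noting that in each case one of the two preservation properties is purely formal while the other relies on a special feature of $\Sets$. For a product $F=F_{1}\times F_{2}$, preservation of $\omega^{\op}$-limits is automatic, a product being a limit and limits commuting with limits; preservation of $\omega$-colimits reduces, via the inductive hypotheses, to the standard fact that in $\Sets$ finite products commute with filtered colimits, and an $\omega$-chain is filtered. Dually, for a coproduct $F=\coprod_{i\in I}F_{i}$, preservation of $\omega$-colimits is automatic since coproducts commute with colimits.

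The one genuinely delicate case, and the main obstacle, is a coproduct preserving $\omega^{\op}$-limits, since coproducts do \emph{not} commute with limits in general. Here I would argue directly in $\Sets$, exploiting connectedness of the chain. An element of the limit of $n\mapsto\coprod_{i\in I}F_{i}(X_{n})$ is a compatible family $(y_{n})_{n<\omega}$; each $y_{n}$ sits in a unique summand $F_{i_{n}}(X_{n})$, and because each connecting map is itself a coproduct of maps it respects summands, so compatibility forces all the indices $i_{n}$ to agree on a single $i\in I$. The family is therefore an element of the limit of $n\mapsto F_{i}(X_{n})$, which yields a natural bijection between the limit of the coproducts and the coproduct of the limits; the inductive hypotheses then close the case. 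This step is exactly the infinitary extensivity of $\Sets$ (indices cannot jump between summands along a connected diagram), and it is the only place where the argument uses more than abstract nonsense.
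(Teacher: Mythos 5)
Your proof is correct and takes essentially the same route as the paper: the paper simply asserts (``it is easy to see'') that every shapely functor preserves $\omega$-colimits and $\omega^{\op}$-limits in $\Sets$, and then concludes via Propositions~\ref{LemInitialAlgebraViaInitialSequence} and~\ref{LemFinalCoalgebraViaFinalSequence}. Your structural induction just fills in the details the paper leaves implicit, including the only non-formal step---coproducts commuting with the connected limit of an $\omega^{\op}$-chain in $\Sets$---and it does so correctly.
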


\subsection{limit-colimit coincidence}\label{appendix:preliminariesLimitColimitCoincidence}
We recall some relevant notions and results from \cite{SmythP82}.
The idea is that in a suitable order-enriched setting, (co)limits are
equivalently described as an order-theoretic notion of
\emph{$\mathbf{O}$-(co)limits}.
Due to the inherent coincidence between $\mathbf{O}$-limits and
$\mathbf{O}$-colimits, 
we also obtain the so-called \emph{limit-colimit coincidence}. 
\begin{displaymath}
 \vcenter{\xymatrix@R=.8em@C+5em{
  {\text{limit}}
                \ar@{=}[d]
 &
  {\text{colimit}} 
                \ar@{=}[d]
 \\
  {\text{$\mathbf{O}$-limit}}
                \ar@{-}[r]_{\text{obvious coincidence}}
 &
  {\text{$\mathbf{O}$-colimit}}
}}
\end{displaymath}

The notions of $\mathbf{O}$-(co)limits are stated in terms of 
\emph{embedding-projection pairs} which we can define in an
order-enriched category. In the sequel we assume the $\Cppo$-enriched
structure.
\begin{defi}[Embedding-projection pairs]\label{definition:embeddingProjectionPairs}
 Let $\C$ be a $\Cppo$-enriched category.
 A pair of arrows 
 \begin{displaymath}
  \vcenter{\xymatrix@1@C+3em{
   {X}
               \ar@/^/[r]^{e}
               \ar@/^/[r];[]^{p}         
  &
   {Y}
}}
 \end{displaymath}
in $\mathbb{C}$ is said to be an \emph{embedding-projection pair} if 
 we have $p\co e = \id$ and $e\co p \sqsubseteq \id$.
 Diagrammatically presented,
\[
 \xymatrix@C+2em@R=1.5em{
  {X}
                \ar@{ >->}[r]^{e}
                \ar[rd]_{\id}
                \drtwocell<\omit>{=<-2>{}}              
 &
  {Y}
                \ar@{->>}[d]^{p}
                \ar[rd]^{\id}_(.4)*[@ru]{\sqsubseteq}
 &
 \\
 &
  {X}
                \ar@{ >->}[r]_{e}
 &
  {Y.}
}
\]
\end{defi}
 \noindent By $p\co e =\id$ we automatically have that $e$ is a mono and $p$ is an
 epi. Both split.

\begin{prop}\label{proposition:embeddingDeterminesCorrProjection}
Let $(e,p), (e',p'): X\rightleftarrows Y$
be two embedding-projection pairs with the same (co)domains. 
Then $e\sqsubseteq e'$ holds if and only if $p' \sqsubseteq p$.
As a consequence, one component of an embedding-projection pair
 determines the other. \qed
\end{prop}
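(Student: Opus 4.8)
The plan is to derive the whole statement from just the two defining relations of an embedding-projection pair, namely $p\co e=\id$ and $e\co p\sqsubseteq\id$ (and the primed analogues), together with \emph{monotonicity} of composition. The latter is available because $\C$ is $\Cppo$-enriched: continuity of composition in particular gives monotonicity in each argument, i.e.\ $f\sqsubseteq f'$ implies $g\co f\sqsubseteq g\co f'$ and $f\co h\sqsubseteq f'\co h$. No joins and no bottom elements are needed, only this order compatibility.

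For the forward implication I would assume $e\sqsubseteq e'$ and compute $p'$ by inserting the identity $\id=p\co e$ on its left, then push the assumption through using monotonicity and finally collapse $e'\co p'\sqsubseteq\id$:
\[
 p' \;=\; (p\co e)\co p' \;=\; p\co(e\co p')
   \;\sqsubseteq\; p\co(e'\co p') \;\sqsubseteq\; p\co\id \;=\; p .
\]
The converse is the mirror image: assuming $p'\sqsubseteq p$, I would insert $\id=p'\co e'$ on the \emph{right} of $e$, use $p'\sqsubseteq p$ by monotonicity, and then apply $e\co p\sqsubseteq\id$:
\[
 e \;=\; e\co(p'\co e') \;=\; (e\co p')\co e'
   \;\sqsubseteq\; (e\co p)\co e' \;\sqsubseteq\; \id\co e' \;=\; e' .
\]
Together these two chains give the stated biconditional $e\sqsubseteq e'\Leftrightarrow p'\sqsubseteq p$.

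The ``as a consequence'' clause then follows purely formally by antisymmetry of $\sqsubseteq$ on the homposets. If $(e,p)$ and $(e',p')$ are embedding-projection pairs sharing the \emph{same} embedding $e=e'$, then both $e\sqsubseteq e'$ and $e'\sqsubseteq e$ hold, whence $p'\sqsubseteq p$ and $p\sqsubseteq p'$, so $p=p'$; the dual argument shows a shared projection forces the embeddings to agree. Thus each component determines the other uniquely.

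I do not expect a genuine obstacle here; the entire proof is two short order calculations. The only point requiring care is bookkeeping: one must choose which identity ($\id=p\co e$ versus $\id=p'\co e'$) to insert, and on which side, so that after applying monotonicity the remaining inequality $e'\co p'\sqsubseteq\id$ (resp.\ $e\co p\sqsubseteq\id$) points in the direction that closes the chain. Getting the covariant/contravariant placement right is the whole game, and it is exactly the left/right symmetry witnessed by passing to $\C^{\op}$, where an embedding-projection pair $(e,p)$ becomes the pair $(p^{\op},e^{\op})$, interchanging the two defining relations.
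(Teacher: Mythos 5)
Your proof is correct, and since the paper states this proposition without proof (the \qed follows the statement directly, deferring to Smyth--Plotkin), your two order calculations supply exactly the standard argument being invoked. Both directions and the uniqueness consequence are carried out correctly, using only $p\co e=\id$, $e\co p\sqsubseteq\id$, monotonicity of composition, and antisymmetry, which is all that is needed.
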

\noindent This proposition justifies the notation $e^{P}$ for the projection
corresponding to a given embedding $e$, and $p^{E}$ for the embedding 
corresponding to a given projection $p$.
It is easy to check that
\begin{displaymath}
 (e \co f)^{P} = f^{P}\co e^{P} \qquad\text{and}\qquad
(p\co q)^{E} = q^{E} \co p^{E}\enspace.
\end{displaymath}

\begin{defi}[$\mathbf{O}$-(co)limits]
 Let 
 $X_{0}
  \stackrel{f_{0}}{\to}
  X_{1}
  \stackrel{f_{1}}{\to}
  \cdots$ 
 be an
 $\omega$-chain in a $\Cppo$-enriched $\mathbb{C}$. 
 A cocone $(X_{n}\stackrel{\sigma_{n}}{\to} C)_{n<\omega}$ over this chain is said to be an 
 \emph{$\mathbf{O}$-colimit} if:
 \begin{enumerate}[$\bullet$]
  \item each $\sigma_{n}$ is an embedding;
  \item the sequence of arrows
	$(\xymatrix@1{
           {C}
                \ar@{->>}[r]^{\sigma_{n}^{P}}
          &
           {X_{n}}
                \ar@{ >->}[r]^{\sigma_{n}}
          &
           {C}}
         )_{n<\omega}$
    is increasing. Moreover its join taken in the cpo $\mathbb{C}(C,C)$
    is $\id_{C}$.
 \end{enumerate} 
 \[
  \xymatrix@C+3em@R+0em{
  &
   {C}
           \ar@/^.4pc/@{ >->}[dl];[]^{\sigma_{0}}
           \ar@/^.4pc/@{->>}[dl]^{\sigma_{0}^{P}}
           \ar@/^.4pc/@{ >->}[d];[]^{\sigma_{1}}
           \ar@/^.4pc/@{->>}[d]^{\sigma_{1}^{P}}
           \ar@{}[dr]|{\cdots}
  &
  \\
   {X_{0}}  
           \ar[r]_{f_{0}}
  &
   {X_{1}}
           \ar[r]_{f_{1}}
  &
   {\cdots}
 }
 \]
 
 Dually, a cone $(C\stackrel{\gamma_{n}}{\to} Y_{n})_{n<\omega}$ over an
 $\omega^{\op}$-chain 
 $Y_{0}
   \stackrel{g_{0}}{\leftarrow}
  Y_{1}
   \stackrel{g_{1}}{\leftarrow}
  \cdots$
 is an \emph{$\mathbf{O}$-limit} if:
 each $\gamma_{n}$ is a projection, and
 the sequence $(\gamma_{n}^{E}\co\gamma_{n}: C\to C)_{n<\omega}$
       is increasing and its join is $\id_{C}$.
\end{defi}

The following proposition establishes the equivalence between
(co)limits and $\mathbf{O}$-(co)li\-mits.
For its full proof the reader is referred to \cite{SmythP82}.
\begin{prop}[Propositions A, B, C, D in \cite{SmythP82}]
\label{colimit_implies_O-colimit}
 Let  $X_{0}
  \stackrel{e_{0}}{\to}
  X_{1}
  \stackrel{e_{1}}{\to}
  \cdots$ 
 be an $\omega$-chain where each $e_{n}$ is an embedding.
 \begin{enumerate}
  \item Let $( X_{n}\stackrel{\sigma_{n}}{\to} C)_{n<\omega}$ be the colimit over the chain.
	Then each $\sigma_{n}$ is also an embedding. 
	Moreover, $(\sigma_{n})_{n<\omega}$ is an $\mathbf{O}$-colimit.
  \item Conversely, an $\mathbf{O}$-colimit $(X_{n}\stackrel{\sigma_{n}}{\to} C)_{n<\omega}$ over the
	chain is a colimit.
 \end{enumerate}
 \noindent  Dually, let  $X_{0}
  \stackrel{p_{0}}{\leftarrow}
  X_{1}
  \stackrel{p_{1}}{\leftarrow}
  \cdots$ 
 be an $\omega^{\op}$-chain where each $p_{n}$ is a projection.
 \begin{enumerate}
 \setcounter{enumi}{2}
  \item  Let $(D\stackrel{\tau_{n}}{\to} X_{n})_{n<\omega}$ be a limit over the
	 chain.
	 Then each $\tau_{n}$ is also a projection.
	 Moreover $(\tau_{n})_{n<\omega}$ is an $\mathbf{O}$-limit.
  \item\label{OLimitIsLimit} 
	 Conversely, an $\mathbf{O}$-limit $(D\stackrel{\tau_{n}}{\to} X_{n})_{n<\omega}$ over the chain
	 is a limit.
 \end{enumerate}
\end{prop}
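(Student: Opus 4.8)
The plan is to prove the two colimit statements, items~(1) and~(2), directly, and to obtain the limit statements~(3) and~(4) by dualizing in $\C^{\op}$, which is again $\Cppo$-enriched with the same homset orders (the roles of embeddings and projections swapping). Throughout I would lean on the embedding-projection calculus recalled just above---in particular the rule $(e\co f)^{P}=f^{P}\co e^{P}$ and the fact (Proposition~\ref{proposition:embeddingDeterminesCorrProjection}) that an embedding determines its projection---together with continuity and monotonicity of composition in the $\Cppo$-enrichment.

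For~(1), write $e_{mn}\colon X_{n}\to X_{m}$ (for $m\ge n$) for the composite embedding and $p_{nm}=e_{mn}^{P}$ for the corresponding projection. First I would construct the projection $\sigma_{n}^{P}\colon C\to X_{n}$ using the universal property of the colimit $C$: the family $\phi_{m}^{(n)}\colon X_{m}\to X_{n}$ given by $e_{nm}$ for $m\le n$ and by $p_{nm}$ for $m\ge n$ is readily checked to be a cocone over the chain, and its mediating arrow is the desired $\sigma_{n}^{P}$, characterized by $\sigma_{n}^{P}\co\sigma_{m}=\phi_{m}^{(n)}$. In particular $\sigma_{n}^{P}\co\sigma_{n}=\id$, so $\sigma_{n}$ is a split mono with candidate projection $\sigma_{n}^{P}$. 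Next I would show that the endomorphisms $\sigma_{n}\co\sigma_{n}^{P}\colon C\to C$ form an increasing chain---this follows from $\sigma_{n}=\sigma_{n+1}\co e_{n}$, the compatibility $\sigma_{n}^{P}=e_{n}^{P}\co\sigma_{n+1}^{P}$ of the constructed projections, and $e_{n}\co e_{n}^{P}\sqsubseteq\id$---and that $\bigsqcup_{n}\sigma_{n}\co\sigma_{n}^{P}=\id_{C}$. For the latter I would post-compose with each $\sigma_{m}$: continuity of composition gives $\bigl(\bigsqcup_{n}\sigma_{n}\co\sigma_{n}^{P}\bigr)\co\sigma_{m}=\bigsqcup_{n}\sigma_{n}\co\phi_{m}^{(n)}$, whose terms are constantly $\sigma_{m}$ once $n\ge m$ (since then $\phi_{m}^{(n)}=e_{nm}$ and $\sigma_{n}\co e_{nm}=\sigma_{m}$), so the join equals $\sigma_{m}$; uniqueness of the mediating arrow from the colimit cocone to itself then forces $\bigsqcup_{n}\sigma_{n}\co\sigma_{n}^{P}=\id_{C}$. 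This simultaneously yields $\sigma_{n}\co\sigma_{n}^{P}\sqsubseteq\id_{C}$, so each $\sigma_{n}$ is an embedding, and establishes the $\mathbf{O}$-colimit conditions.

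For~(2), given an $\mathbf{O}$-colimit $(\sigma_{n})$ and an arbitrary cocone $(\tau_{n}\colon X_{n}\to D)$, I would define the mediating arrow as $h=\bigsqcup_{n}\tau_{n}\co\sigma_{n}^{P}$, the chain being increasing by the same argument as above. Using $\sigma_{n}^{P}\co\sigma_{m}=e_{nm}$ for $n\ge m$---which follows from $\sigma_{m}=\sigma_{n}\co e_{nm}$ and $\sigma_{n}^{P}\co\sigma_{n}=\id$---together with the cocone equation $\tau_{n}\co e_{nm}=\tau_{m}$, I obtain $h\co\sigma_{m}=\bigsqcup_{n}\tau_{n}\co\sigma_{n}^{P}\co\sigma_{m}=\tau_{m}$. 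For uniqueness, any mediating $h'$ satisfies $h'=h'\co\id_{C}=h'\co\bigsqcup_{n}\sigma_{n}\co\sigma_{n}^{P}=\bigsqcup_{n}\tau_{n}\co\sigma_{n}^{P}=h$, by continuity of composition and the $\mathbf{O}$-colimit condition $\bigsqcup_{n}\sigma_{n}\co\sigma_{n}^{P}=\id_{C}$.

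The main obstacle is the identity $\bigsqcup_{n}\sigma_{n}\co\sigma_{n}^{P}=\id_{C}$, which is the crux of the entire coincidence: it is precisely the bridge between the order-theoretic data and the universal property, and it is where continuity (not mere monotonicity) of composition is essential. Once it is in hand, the limit statements~(3) and~(4) require no new ideas. Indeed, the defining equation of an $\mathbf{O}$-colimit cocone $(\sigma_{n})$ and that of the associated $\mathbf{O}$-limit cone $(\sigma_{n}^{P})$ over the $\omega^{\op}$-chain of projections are literally the same equation $\bigsqcup_{n}\sigma_{n}\co\sigma_{n}^{P}=\id_{C}$, so~(3) and~(4) follow by reading the arguments above in $\C^{\op}$.
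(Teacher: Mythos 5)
Your proof is correct, but it covers substantially more ground than the paper's own proof, and from the opposite side. The paper proves only item~(\ref{OLimitIsLimit}) --- uniqueness and existence of the mediating map into an $\mathbf{O}$-limit, recorded ``for later reference'' --- and cites \cite{SmythP82} for the other three items; you give a self-contained proof of all four, doing the colimit statements~(1) and~(2) directly and transporting them to~(3) and~(4) through the duality $\C\mapsto\C^{\op}$, which is legitimate because $\C^{\op}$ carries the same $\Cppo$-enrichment and interchanges embeddings with projections. Where the two overlap, the computations are mirror images of each other: your mediating map $h=\bigsqcup_{n<\omega}\tau_{n}\co\sigma_{n}^{P}$ into the vertex of an arbitrary cocone $(\tau_{n})$ in item~(2) is the exact dual of the paper's $f=\bigsqcup_{n<\omega}(\tau_{n}^{E}\co\beta_{n})$ out of the vertex of an arbitrary cone $(\beta_{n})$ in item~(\ref{OLimitIsLimit}), with uniqueness forced by $\bigsqcup_{n<\omega}\sigma_{n}\co\sigma_{n}^{P}=\id_{C}$ plus continuity of composition, and the increasing-chain argument resting on $e_{n}\co e_{n}^{P}\sqsubseteq\id$; you are in fact slightly more careful than the paper, since you verify that the join really is mediating ($h\co\sigma_{m}=\tau_{m}$), a step the paper's proof leaves implicit. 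The genuinely additional content in your proposal is item~(1): building $\sigma_{n}^{P}$ as the mediating arrow for the ``bent'' cocone $\phi^{(n)}_{m}$ (composite embeddings for $m\le n$, composite projections for $m\ge n$) and proving $\bigsqcup_{n<\omega}\sigma_{n}\co\sigma_{n}^{P}=\id_{C}$ by post-composing with every $\sigma_{m}$ and appealing to uniqueness of mediating endomaps of a colimit --- this is precisely the content of Propositions A and B of \cite{SmythP82} that the paper declines to reprove. One small presentational caution there: at the point where you invoke the compatibility $\sigma_{n}^{P}=e_{n}^{P}\co\sigma_{n+1}^{P}$, the arrows $\sigma_{n}^{P}$ are not yet known to be projections, so the rule $(e\co f)^{P}=f^{P}\co e^{P}$ is not yet available; the compatibility must instead be read off from uniqueness of mediating arrows (both sides mediate from the colimit to the cocone $\phi^{(n)}$), which does work. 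In short, the paper's choice buys brevity by outsourcing to \cite{SmythP82}; yours buys a proof of Proposition~\ref{colimit_implies_O-colimit} that is independent of that reference.
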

\proof
For later reference we present the proof 
of (\ref{OLimitIsLimit}). Let
 $(B\stackrel{\beta_{n}}{\to} X_{n})_{n<\omega}$ be an arbitrary cone over
 the chain $X_{0}
  \stackrel{p_{0}}{\leftarrow}
  X_{1}
  \stackrel{p_{1}}{\leftarrow}
  \cdots$. 
 First we prove the uniqueness of a mediating map $f: B\to D$.
 \begin{align*}
  f = \id_{D}\co f
    &= \bigl(\textstyle\bigsqcup_{n<\omega}(\tau_{n}^{E}\co \tau_{n})\bigr) \co f
    &&\text{$(\tau_{n})_{n<\omega}$ is an $\mathbf{O}$-limit}
  \\
    &= \textstyle\bigsqcup_{n<\omega}(\tau_{n}^{E}\co \tau_{n}\co f) 
    &&\text{composition is continuous}
  \\
    &= \textstyle\bigsqcup_{n<\omega}(\tau_{n}^{E}\co \beta_{n})
    &&\text{$f$ is mediating}\enspace.
 \end{align*}
 We conclude the proof by showing that the sequence $(\tau_{n}^{E}\co\beta_{n})_{n<\omega}$
 is increasing, hence such $f$ indeed exists.
 \begin{align*}
  \tau_{n}^{E}\co\beta_{n}
  \;=\;
  \tau_{n}^{E}\co p_{n}\co\beta_{n+1}
  \;= \;
  \tau_{n+1}^{E}\co p_{n}^{E} \co p_{n}\co \beta_{n+1}
  \;\sqsubseteq\;
  \tau_{n+1}^{E}\co \beta_{n+1}
  \enspace
 \end{align*} 
 The last inequality holds because $p_{n}^{E} \co p_{n}\sqsubseteq \id$
 from the definition of embedding-projection pairs. \qed

\begin{thm}[Limit-colimit coincidence]
\label{limit_colimit_coincidence}
 Let $X_{0}
  \stackrel{e_{0}}{\to}
  X_{1}
  \stackrel{e_{1}}{\to}
  \cdots$ 
 be an $\omega$-chain where each $e_{n}$ is an embedding, and
 $(X_{n}\stackrel{\sigma_{n}}{\to} C)_{n<\omega}$ be the colimit over the chain.
 Then each $\sigma_{n}$ is an embedding, and the cone
 $( C\stackrel{\sigma_{n}^{P}}{\to} X_{n})_{n<\omega}$ is a limit over the
 $\omega^{\op}$-chain 
 $X_{0}
  \stackrel{e_{0}^{P}}{\leftarrow}
  X_{1}
  \stackrel{e_{1}^{P}}{\leftarrow}
 \cdots$.
 \begin{displaymath}
  \vcenter{\xymatrix{
   &&
    {C}
   \\
    {X_{0}}
            \ar@/^/@{ >->}[rru]^{\sigma_{0}}
            \ar@{ >->}[r]_{e_{0}}
   &
    {X_{1}}
            \ar@{ >->}[ru]_{\sigma_{1}}
            \ar@{ >->}[r]_{e_{1}}
   &
    {\cdots}
}}\text{ : colimit}
  \quad\Longrightarrow\quad
  \vcenter{\xymatrix{
   &&
    {C}
   \\
    {X_{0}}
            \ar@/_/@{->>}[rru];[]_{\sigma_{0}^{P}}
            \ar@{->>}[r];[]^{e_{0}^{P}}
   &
    {X_{1}}
            \ar@{->>}[ru];[]^{\sigma_{1}^{P}}
            \ar@{->>}[r];[]^{e_{1}^{P}}
   &
    {\cdots}
}}
  \text{ : limit}
 \end{displaymath}  

 Dually, the limit of an $\omega^{\op}$-chain of projections consists of
 projections. 
 By taking the corresponding embeddings we obtain a colimit of an
 $\omega$-chain
 of embeddings.
\end{thm}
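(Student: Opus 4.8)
The plan is to assemble the statement directly from the two bridge results already in hand: Proposition~\ref{colimit_implies_O-colimit}, which equates genuine (co)limits with $\mathbf{O}$-(co)limits, and Proposition~\ref{proposition:embeddingDeterminesCorrProjection}, which says that the two halves of an embedding-projection pair determine each other and that $(e\co f)^{P}=f^{P}\co e^{P}$. The real content is the ``obvious coincidence'' drawn as the horizontal link in the diagram preceding the statement: an $\mathbf{O}$-colimit over a chain of embeddings \emph{is} an $\mathbf{O}$-limit over the corresponding chain of projections. Once this is made precise, the theorem is a one-line chain of implications.

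First I would take the colimit $(X_{n}\stackrel{\sigma_{n}}{\to} C)_{n<\omega}$ of the given $\omega$-chain of embeddings and apply Proposition~\ref{colimit_implies_O-colimit}(1): each $\sigma_{n}$ is then an embedding, and $(\sigma_{n})_{n<\omega}$ is an $\mathbf{O}$-colimit, so in particular the sequence $(\sigma_{n}\co \sigma_{n}^{P})_{n<\omega}$ is increasing with join $\id_{C}$ in $\C(C,C)$. Next I would verify that the projection cone $(C\stackrel{\sigma_{n}^{P}}{\to} X_{n})_{n<\omega}$ is an $\mathbf{O}$-limit over the chain $X_{0}\stackrel{e_{0}^{P}}{\leftarrow}X_{1}\stackrel{e_{1}^{P}}{\leftarrow}\cdots$. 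That it is a cone amounts to $e_{n}^{P}\co \sigma_{n+1}^{P}=\sigma_{n}^{P}$, which I obtain by applying $(\place)^{P}$ to the cocone identity $\sigma_{n+1}\co e_{n}=\sigma_{n}$ and using the contravariance $(e\co f)^{P}=f^{P}\co e^{P}$. The $\mathbf{O}$-limit condition asks that $((\sigma_{n}^{P})^{E}\co \sigma_{n}^{P})_{n<\omega}$ be increasing with join $\id_{C}$; since $(\sigma_{n}^{P})^{E}=\sigma_{n}$ by Proposition~\ref{proposition:embeddingDeterminesCorrProjection}, this sequence is literally $(\sigma_{n}\co \sigma_{n}^{P})_{n<\omega}$, which I have already shown to be increasing with join $\id_{C}$. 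Hence $(\sigma_{n}^{P})_{n<\omega}$ is an $\mathbf{O}$-limit, and Proposition~\ref{colimit_implies_O-colimit}(4) upgrades it to a genuine limit, giving the first claim.

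For the dual statement I would run the same argument in reverse: starting from the limit of an $\omega^{\op}$-chain of projections, Proposition~\ref{colimit_implies_O-colimit}(3) shows its legs are projections forming an $\mathbf{O}$-limit, passing to the corresponding embeddings (again via Proposition~\ref{proposition:embeddingDeterminesCorrProjection}, with $p\mapsto p^{E}$ and the dual rule $(p\co q)^{E}=q^{E}\co p^{E}$ for the cocone identities) yields an $\mathbf{O}$-colimit over the chain of embeddings, and Proposition~\ref{colimit_implies_O-colimit}(2) promotes it to a colimit. I do not expect any genuine obstacle here, since all the substance has been front-loaded into Proposition~\ref{colimit_implies_O-colimit}; the only point demanding care is the bookkeeping with $(\place)^{E}$ and $(\place)^{P}$, in particular checking that the cone/cocone equations transpose correctly under the contravariance of composition.
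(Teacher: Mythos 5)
Your proposal is correct and follows essentially the same route as the paper's own proof: pass from the colimit to an $\mathbf{O}$-colimit via Proposition~\ref{colimit_implies_O-colimit}, observe that the $\mathbf{O}$-colimit condition for $(\sigma_{n})_{n<\omega}$ is literally the $\mathbf{O}$-limit condition for $(\sigma_{n}^{P})_{n<\omega}$ (the ``inherent coincidence''), and conclude with Proposition~\ref{colimit_implies_O-colimit}(4). The only difference is one of detail: you explicitly verify the cone identity $e_{n}^{P}\co\sigma_{n+1}^{P}=\sigma_{n}^{P}$ via $(e\co f)^{P}=f^{P}\co e^{P}$, and you write out the dual half, both of which the paper leaves as ``obvious''/by duality.
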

\proof
 We prove the first statement. 
 By Proposition \ref{colimit_implies_O-colimit} each $\sigma_{n}$ is an
 embedding, and moreover $(\sigma_{n})_{n<\omega}$ is an $\mathbf{O}$-colimit.
 Now obviously $(\sigma_{n}^{P})_{n<\omega}$ is a cone over 
 $X_{0}
  \stackrel{e_{0}^{P}}{\leftarrow}
  X_{1}
  \stackrel{e_{1}^{P}}{\leftarrow}
 \cdots$.
 Here we use the inherent coincidence of $\mathbf{O}$-(co)limits: namely,
 the condition that $(\sigma_{n})_{n<\omega}$ is an $\mathbf{O}$-colimit
 is exactly the same as that $(\sigma_{n}^{P})_{n<\omega}$ is an
 $\mathbf{O}$-limit.
 We use Proposition \ref{colimit_implies_O-colimit} to conclude the
 proof. 
\qed



\end{document}

